\providecommand{\U}[1]{\protect\rule{.1in}{.1in}}
\providecommand{\U}[1]{\protect\rule{.1in}{.1in}}
\newtheorem{theorem}{Theorem}
\newtheorem{lemma}[theorem]{Lemma}
\newtheorem{proposition}[theorem]{Proposition}
\newtheorem{remark}[theorem]{Remark}
\newenvironment{proof}[1][Proof]{\noindent\textbf{#1.} }{\ \rule{0.5em}{0.5em}}
\providecommand{\subfloat}{\subfigure}
\DeclareMathOperator{\CRM}{CRM}
\DeclareMathOperator{\Poi}{Poisson}
\DeclareMathOperator{\tPoi}{tPoisson}
\DeclareMathOperator{\Gam}{Gamma}
\DeclareMathOperator{\Ber}{Ber}
\DeclareMathOperator{\Mult}{Multinomial}
\newcommand{\1}[1]{\mathds{1}_{#1}}
\begin{document}
\title{Exchangeable Random Measures for Sparse and Modular Graphs with Overlapping Communities}
\begin{aug}
\author{\fnms{Adrien} \snm{Todeschini}\ead[label=e2]{Adrien.Todeschini@inria.fr}},
\author{\fnms{Xenia} \snm{Miscouridou}\ead[label=e3]{xenia.miscouridou@spc.ox.ac.uk}}
\and
\author{\fnms{Fran\c{c}ois} \snm{Caron}\ead[label=e1]{caron@stats.ox.ac.uk}}

\runauthor{A. Todeschini, X. Miscouridou and F. Caron}
\affiliation{INRIA and University of Oxford}

\address{INRIA \& Institut de Math\'ematiques de Bordeaux, France\\
\printead{e2}}
\address{Department of Statistics, University of Oxford, UK\\
\printead{e3,e1}}
\runtitle{Exchangeable Random Measures for Sparse and Modular Graphs}
\end{aug}
\begin{abstract}
We propose a novel statistical model for sparse networks with overlapping community structure. The model is based on representing the graph as an exchangeable point process, and naturally generalizes existing probabilistic models with overlapping block-structure to the sparse regime. Our construction builds on vectors of completely random measures, and has interpretable parameters, each node being assigned a vector representing its level of affiliation to some latent communities. We develop methods for simulating this class of random graphs, as well as to perform posterior inference. We show that the proposed approach can recover interpretable structure from two real-world networks and can handle graphs with thousands of nodes and tens of thousands of edges.
\end{abstract}

\begin{keyword}
Networks, Random Graphs, Multiview Networks, Multigraphs, Completely Random Measures, L\'evy measure, Multivariate Subordinator, Sparsity, Non-Negative Factorization, Exchangeability, Point Processes
\end{keyword}
\maketitle

\section{Introduction}
\label{sec:intro}

There has been a growing interest in the analysis, understanding and modeling
of network data over the recent years. A network is composed of a set of
nodes, or vertices, with connections between them. Network data arise in a
wide range of fields, and include social networks, collaboration networks,
communication networks, biological networks, food webs and are a useful way of representing interactions between sets of objects. Of particular
importance is the elaboration of random graph models, which can capture the
salient properties of real-world graphs. Following the seminal work of
\cite{Erdos1959}, various network models have been
proposed; see the overviews of
\cite{Newman2003,Newman2009}, \cite{Kolaczyk2009}, \cite{Bollobas2001}, \cite{Goldenberg2010}, \cite{Fienberg2012} or \cite{Jacobs2014}.
In particular, a large body of the literature has concentrated on models that can capture some modular or community structure within the network. The first statistical network model in this line of research is the popular stochastic block-model~\citep{Holland1983,Snijders1997,Nowicki2001}. The stochastic block-model assumes that each node belongs to one of $p$ latent communities, and the probability of connection between two nodes is given by a $p\times p$ connectivity matrix. This model has been extended in various directions, by introducing degree-correction parameters~\citep{Karrer2011}, by allowing the number of communities to grow with the size of the network~\citep{Kemp2006}, or by considering overlapping communities~\citep{Airoldi2008,Miller2009,Latouche2011,Palla2012,Yang2013}. Stochastic block-models and their extensions have shown to offer a very flexible modeling framework, with interpretable parameters, and have been successfully used for the analysis of numerous real-world networks. However, as outlined by~\cite{Orbanz2015}, when one makes the usual assumption that the ordering of the nodes is irrelevant in the definition of the statistical network model, the Bayesian probabilistic versions of those models lead to dense networks\footnote{We refer to graphs  whose number of edges scales quadratically with the number of nodes as dense, and sparse if it scales sub-quadratically.}: that means that the number of edges grows quadratically with the number of nodes. This property is rather undesirable, as many real-world networks are believed to be sparse. \medskip

Recently, \cite{Caron2014} proposed an alternative framework for statistical network modeling. The framework is based on representing the graph as an exchangeable
random measure on the plane. More precisely, the nodes are embedded at
some location $\theta_{i}\in\mathbb{R}_{+}$ and, for simple graphs, a connection exists between
two nodes $i$ and $j$ if there is a point at locations $(\theta_{i},\theta_{j})$ and
$(\theta_{j},\theta_{i})$. An undirected simple graph is therefore represented by a symmetric
point process $Z$ on the plane
\begin{equation}
Z=\sum_{i,j}z_{ij}\delta_{(\theta_{i},\theta_{j})}\label{eq:pointprocessZ}%
\end{equation}
where $z_{ij}=z_{ji}=1$ if $i$ and $j$ are connected, 0 otherwise; see Figure~\ref{fig:pointprocess} for an illustration. \cite{Caron2014} noted that jointly exchangeable random measures, a notion to be
 defined in Eq.~\eqref{eq:kallenbergexchangeability}, admit a representation theorem due to
\cite{Kallenberg1990}, providing a general construction for exchangeable random measures hence random graphs represented by such objects. This connection is further explored by \cite{Veitch2015} and \cite{Borgs2016}, who provide a detailed description and extensive theoretical analysis of the associated class of random graphs, which they name \textit{Kallenberg exchangeable graphs} or \textit{graphon processes}. Within this class of models, \cite{Caron2014} consider in particular the following simple generative model, where two nodes $i\neq j$ connect
with probability
\begin{equation}
\Pr(z_{ij}=1|(w_\ell)_{\ell=1,2,\ldots})=1-e^{-2w_{i}w_{j}} \label{eq:link1}%
\end{equation}
where the $(w_i,\theta_i)_{i=1,2,\ldots}$ are the points of a Poisson
point process on $\mathbb{R}_{+}^{2}$. The parameters $w_i>0$ can be interpreted as sociability parameters.  Depending on the properties of the mean measure of the Poisson process, the authors show that it is possible to generate both dense and sparse graphs, with potentially heavy-tailed degree distributions, within this framework. The construction
\eqref{eq:link1} is however rather limited in terms of capturing structure
in the network. \cite{Herlau2015} proposed an extension of  \eqref{eq:link1}, which can accommodate a community structure. More precisely, introducing latent community membership variables $c_i\in\{1,\ldots,p\}$, two nodes $i\neq j$ connect with probability
\begin{equation}
\Pr(z_{ij}=1|(w_\ell,c_\ell)_{\ell=1,2,\ldots},(\eta_{k\ell})_{1\leq k,\ell\leq p})=1-e^{-2\eta_{c_{i}c_j} w_{i}w_{j}} \label{eq:linkHerlau}%
\end{equation}
where the $(w_i,{c_i},\theta_i)_{i=1,2,\ldots}$ are the points of a (marked) Poisson
point process on $\mathbb{R}_{+}\times \{1,\ldots,p\}\times\mathbb{R}_{+}$ and $\eta_{k\ell}$ are positive random variables parameterizing the strength of interaction between nodes in community $k$ and nodes in community $\ell$. The model is similar in spirit to the degree-corrected stochastic block-model~\citep{Karrer2011}, but within the point process framework~\eqref{eq:pointprocessZ}, and can thus accommodate both sparse and dense networks with community structure. The model of \cite{Herlau2015} however shares the limitations of the (degree-corrected) stochastic block-model, in the sense that it cannot model overlapping community structures, each node being assigned to a single community; see~\cite{Latouche2011} and \cite{Yang2013} for more discussion along these lines. Other extensions with block structure or mixed membership block structure are also suggested by \cite{Borgs2016}.
\bigskip

In this paper, we consider that each node $i$ is assigned a set of latent non-negative
parameters $w_{ik}$, $k=1,\ldots,p$, and that the probability that two nodes
$i\neq j$ connect is given by
\begin{equation}
\Pr(z_{ij}=1|(w_{\ell 1},\ldots,w_{\ell p})_{\ell=1,2,\ldots})=1-e^{-2\sum_{k=1}^{p}w_{ik}w_{jk}}.\label{eq:link2}%
\end{equation}
These non-negative weights can be interpreted as measuring the level of affiliation of node $i$ to the latent communities $k=1,\ldots,p$. For example, in a friendship network, these communities can correspond to colleagues, family, or sport partners, and the weights measure the level of affiliation of an individual to each community. Note that as individuals can have high weights in different communities, the model can capture overlapping communities. The link
probability (\ref{eq:link2}) builds on a non-negative factorization; it has been used by other authors for network modeling \citep{Yang2013,Zhou2015} and is also closely related to the model for multigraphs of~\cite{Ball2011}.
The main contribution of this paper is to use the link probability  \eqref{eq:link2}
within the point process framework of \cite{Caron2014}. To this aim, we consider
that the node locations and weights $(w_{i1},\ldots,w_{ip},\theta_i)_{i=1,2,\ldots}$ are drawn from a
Poisson point process on $\mathbb R_+^{p+1}$ with a given mean measure $\nu$. The construction of such multivariate point process relies on vectors of completely random measures (or equivalently multivariate subordinators). In particular, we build on the flexible though tractable construction recently introduced by \cite{Griffin2014}.

The proposed model generalizes that of \cite{Caron2014} by allowing the model to capture more structure in the network, while retaining its main features, and is shown to have the following properties:
\begin{itemize}
\item Interpretability: each node is assigned a set of positive parameters, which can be interpreted as measuring the levels of affiliation of a node to latent communities; once those parameters are learned, they can be used to undercover the latent structure in the network.
\item Sparsity: we can generate graphs whose number of edges grows subquadratically with the number of nodes.
\item Exchangeability: in the sense of \cite{Kallenberg1990}.
\end{itemize}
Additionally, we develop a Markov chain Monte Carlo (MCMC) algorithm for posterior inference with this model, and show experiments on two real-world networks with a thousand of nodes and tens of thousands of edges.\medskip

The article is organized as follows. The class of random graph models is introduced in Section~\ref{sec:model}. Properties of the class of graphs and simulation are described in Section~\ref{sec:properties}. We derive a scalable MCMC algorithm for posterior inference in Section~\ref{sec:inference}. In Section~\ref{sec:experiments} we provide illustrations of the proposed method on simulated data and on two networks: a network of citations between political blogs and a network of connections between US airports. We show that the approach is able to discover interpretable structure in the data and performs well compared to alternatives.

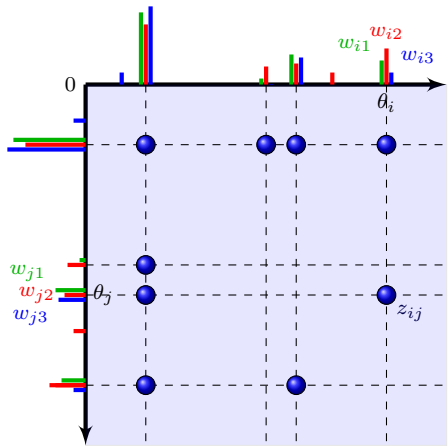
\begin{figure}[t]
\begin{center}
\begin{tikzpicture}[node distance=1.4cm, auto,>=latex',scale=.8]
\draw[->,ultra thick] node[left] {0} (0,0) -- (6,0) node[right] {};
\draw[->,ultra thick] (0,0) -- (0,-6) node[right] {};
\draw[fill=blue,opacity=0.1] (0,0) -- (0,-6) -- (6,-6) -- (6,0)  ;
\draw[dashed] (1,0) -- (1,-6) node[right] {};
\draw[dashed] (3,0) -- (3,-6) node[right] {};
\draw[dashed] (3.5,0) -- (3.5,-6) node[right] {};
\draw[dashed] (5,0) -- (5,-6) node[right] {};
\draw[dashed] (0,-1) -- (6,-1) node[right] {};
\draw[dashed] (0,-3) -- (6,-3) node[right] {};
\draw[dashed] (0,-3.5) -- (6,-3.5) node[right] {};
\draw[dashed] (0,-5) -- (6,-5) node[right] {};
\node[draw,circle,inner sep=2.5pt,fill=blue,shading=ball] at (1,-1) {};
\node[draw,circle,inner sep=2.5pt,fill=blue,shading=ball] at (1,-5) {};
\node[draw,circle,inner sep=2.5pt,fill=blue,shading=ball] at (3,-1) {};
\node[draw,circle,inner sep=2.5pt,fill=blue,shading=ball] at (3.5,-1) {};
\node[draw,circle,inner sep=2.5pt,fill=blue,shading=ball] at (5,-1) {};
\node[draw,circle,inner sep=2.5pt,fill=blue,shading=ball] at (1,-3) {};
\node[draw,circle,inner sep=2.5pt,fill=black,shading=ball] at (1,-3.5) {};
\node[draw,circle,inner sep=2.5pt,fill=black,shading=ball] at (5,-3.5) {};
\node[draw,circle,inner sep=2.5pt,outer color=black,shading=ball] at (3.5,-5) {};
\node[] at (5.4,-3.8) {$\color{blue!30!black}z_{ij}$};
\node[red] at (5,-.3) {$\color{black}\theta_{i}$};
\node[red] at (.3,-3.5) {$\color{black}\theta_{j}$};
\draw[-,ultra thick,green!70!black] (.92,0) -- (.92,1.2) node[right] {};
\draw[-,ultra thick,red] (1,0) -- (1,1) node[right] {};
\draw[-,ultra thick,blue] (1.08,0) -- (1.08,1.3) node[right] {};
\draw[-,ultra thick,green!70!black] (2.92,0) -- (2.92,.1) node[right] {};
\draw[-,ultra thick,red] (3,0) -- (3,.3) node[right] {};
\draw[-,ultra thick,blue] (3.08,0) -- (3.08,.01) node[right] {};
\draw[-,ultra thick,green!70!black] (3.42,0) -- (3.42,.5) node[right] {};
\draw[-,ultra thick,red] (3.5,0) -- (3.5,.35) node[right] {};
\draw[-,ultra thick,blue] (3.58,0) -- (3.58,.45) node[right] {};
\draw[-,ultra thick,green!70!black] (4.92,0) -- (4.92,.4) node[above left] {$\color{green!70!black}w_{i1}$};
\draw[-,ultra thick,red] (5,0) -- (5,.6) node[above,distance=1cm] {$\color{red}w_{i2}$};
\draw[-,ultra thick,blue] (5.08,0) -- (5.08,.2) node[above right] {$\color{blue}w_{i3}$};
\draw[-,ultra thick,blue] (.6,0) -- (.6,.2) node[right] {};
\draw[-,ultra thick,red] (4.1,0) -- (4.1,.2) node[above] {};
\draw[-,ultra thick,green!70!black] (0,-.92) -- (-1.2,-.92) node[right] {};
\draw[-,ultra thick,red] (0, -1) -- (-1,-1) node[right] {};
\draw[-,ultra thick,blue] (0,-1.08) -- (-1.3,-1.08) node[right] {};
\draw[-,ultra thick,green!70!black] (0,-2.92) -- (-.1,-2.92) node[right] {};
\draw[-,ultra thick,red] (0,-3) -- (-.3,-3) node[right] {};
\draw[-,ultra thick,green!70!black] (0,-4.92) -- (-.4,-4.92);
\draw[-,ultra thick,red] (0,-5) -- (-.6,-5);
\draw[-,ultra thick,blue] (0,-5.08) -- (-.2,-5.08);
\draw[-,ultra thick,green!70!black] (0,-3.42) -- (-.5,-3.42) node[above left] {$\color{green!70!black}w_{j1}$};
\draw[-,ultra thick,red] (0,-3.5) -- (-.35,-3.5) node[left,distance=2cm] {$\color{red}w_{j2}$};
\draw[-,ultra thick,blue] (0,-3.58) -- (-.45,-3.58) node[below left] {$\color{blue}w_{j3}$};
\draw[-,ultra thick,red] (0,-5) -- (-.5,-5) node[right] {};
\draw[-,ultra thick,blue] (0,-.6) -- (-.2,-.6) node[right] {};
\draw[-,ultra thick,red] (0,-4.1) -- (-.2,-4.1) node[right] {};
\end{tikzpicture}
\end{center}
\caption{Representation of a undirected graph via a point process $Z$. Each node $i$ is
embedded in $\mathbb{R}_{+}$ at some location $\theta_{i}$ and is associated
with a set of positive attributes $(w_{i1},\ldots,w_{ip})$. An edge between
nodes $\theta_{i}$ and $\theta_{j}$ is represented by a point at locations
$(\theta_{i},\theta_{j})$ and $(\theta_{j},\theta_{i})$ in $\mathbb{R}_{+}%
^{2}$. }%
\label{fig:pointprocess}%
\end{figure}

\section{Sparse graph models with overlapping communities}
\label{sec:model}

In this section, we present the statistical model for simple graphs. The construction builds on vectors of completely random measures \citep[CRM,][]{Kingman1967}. We only provide here the necessary material for the definition of the network model; please refer to Appendix~\ref{sec:background} for additional background on vectors of CRMs. The model described in this section can also be extended to bipartite graphs; see Appendix~\ref{sec:app:bipartite}.

\subsection{General construction using vectors of CRMs}
\label{subsec:general}

We consider that each node $i$ is embedded at some location $\theta_i\in\mathbb R_+$, and has some set of positive weights $(w_{i1},\ldots,w_{ip})\in \mathbb R_+^p$. The points $(w_{i1},\ldots,w_{ip},\theta_i)_{i=1,\ldots,\infty}$  are assumed to be drawn from a Poisson process with mean measure
\begin{equation}
\nu(dw_{1},\ldots,dw_{p},d\theta)=\rho(dw_{1},\ldots,dw_{p})\lambda(d\theta)\label{eq:nu}
\end{equation}
where $\lambda$ is the Lebesgue measure and $\rho$ is a $\sigma$-finite measure on $\mathbb R^p_+$, concentrated on $\mathbb R^p_+\backslash \{\mathbf 0\} $, which satisfies
\begin{equation}
\int_{\mathbb R_+^p} \min\left (1,\sum_{k=1}^p w_k\right )\rho(dw_1,\ldots,dw_p)<\infty.
\label{eq:conditionLevydef}
\end{equation}
Under this condition~\citep{Skorohod1991,Barndorff-Nielsen2001}, we can describe the set of weights and locations using a vector of completely random measures $(W_1,\ldots,W_p)$ on $\mathbb R_+$:
\begin{equation}\label{eq:Wk}
W_k=\sum_{i=1}^\infty w_{ik}\delta_{\theta_i}, \text{ for }k=1,\ldots,p.
\end{equation}
We simply write
\begin{equation}
(W_1,\ldots,W_p)\sim \text{CRM}(\rho,\lambda).\label{eq:vCRM}
\end{equation}

\begin{figure}[ptb]
\begin{center}%
\begin{tabular}
[c]{ccc}%
\begin{tabular}[c]{c}
\includegraphics[width=.3\textwidth]{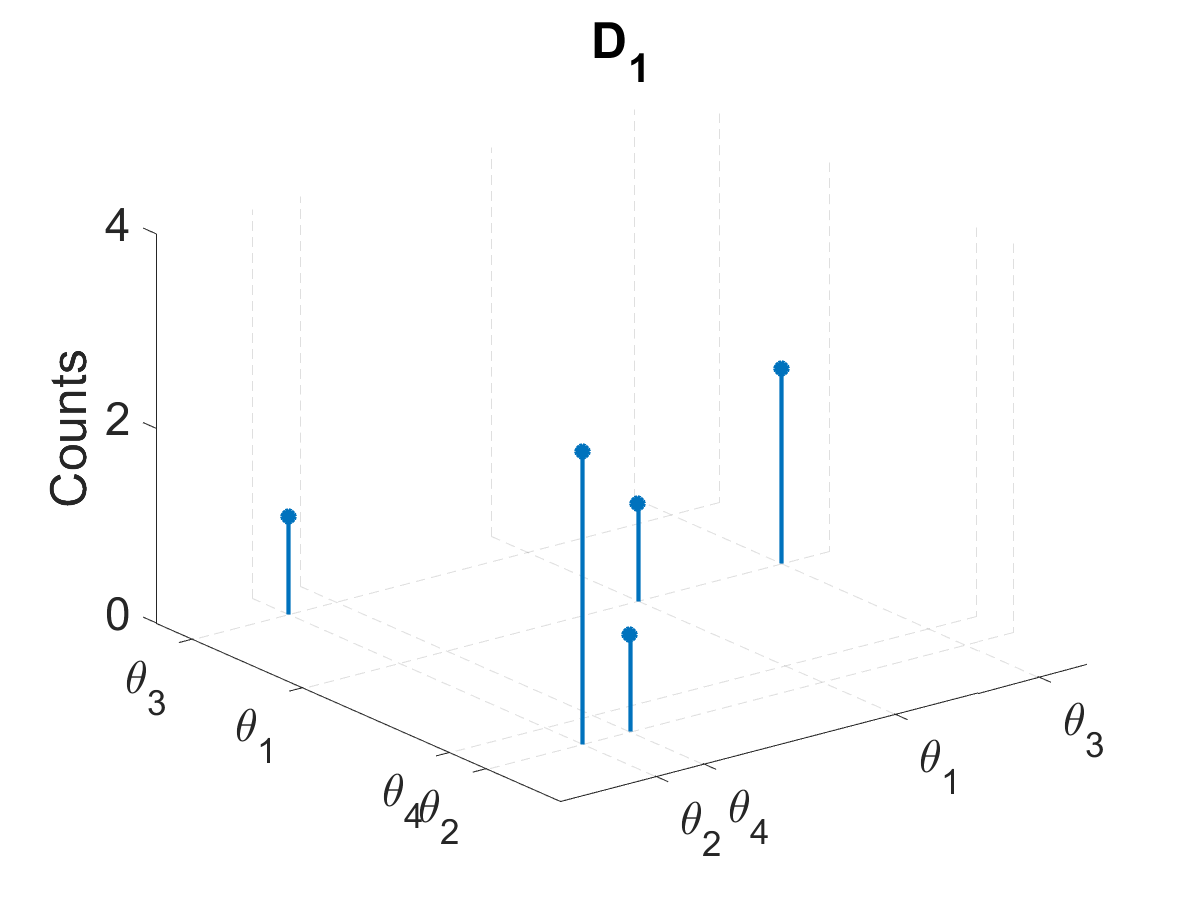}\\
\includegraphics[width=.3\textwidth]{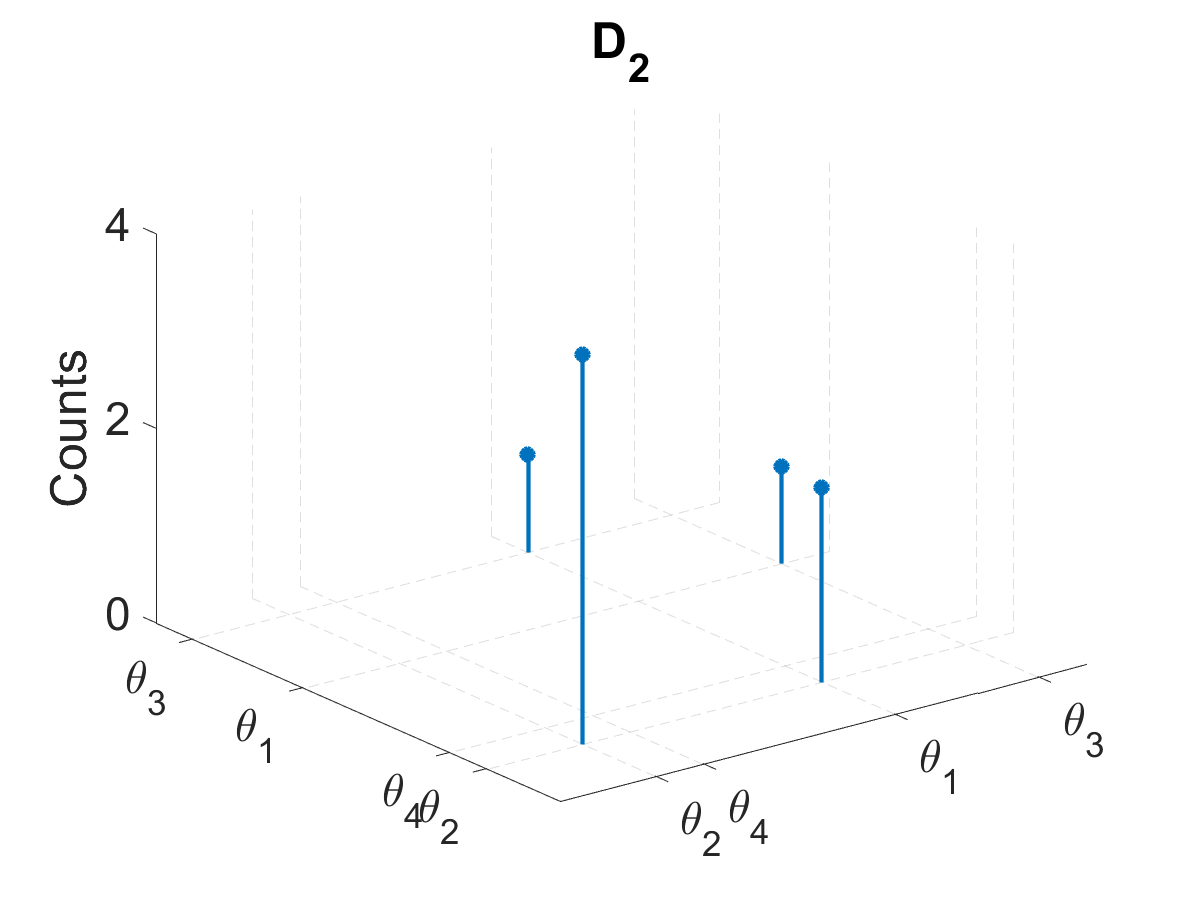}\end{tabular} &
\begin{tabular}[c]{c}
\resizebox{0.3\textwidth}{!}{\definecolor {processblue}{cmyk}{0.96,0,0,0}

\begin {tikzpicture}[-latex ,auto ,node distance =1.5 cm and 1.5cm ,on grid ,
semithick ,
state/.style ={ circle ,top color =white , bottom color = processblue!20 ,
draw,processblue , text=blue , minimum width =1 cm}]
\node[state] (C) {$\theta_1$};
\node[state] (A) [above left=of C] {$\theta_2$};
\node[state] (B) [above right =of C] {$\theta_3$};
\node[state] (D) [above right =of A] {$\theta_4$};
\path (A) edge [loop left] node[left] {$3$} (A);
\path (C) edge [loop below] node[below] {$1$} (C);
\path (D) edge [bend right =25] node[above =0.15 cm] {$1$} (A);
\path (A) edge [bend right = -15] node[below =0.15 cm] {$1$} (B);
\path (B) edge [bend right = -25] node[below =0.15 cm] {$2$} (C);
\end{tikzpicture}}\\
\resizebox{0.3\textwidth}{!}{\definecolor {processblue}{cmyk}{0.96,0,0,0}

\begin {tikzpicture}[-latex ,auto ,node distance =1.5 cm and 1.5cm ,on grid ,
semithick ,
state/.style ={ circle ,top color =white , bottom color = processblue!20 ,
draw,processblue , text=blue , minimum width =1 cm}]
\node[state] (C) {$\theta_1$};
\node[state] (A) [above left=of C] {$\theta_2$};
\node[state] (B) [above right =of C] {$\theta_3$};
\path (A) edge [loop left] node[left] {$4$} (A);
\path (C) edge [bend left =25] node[below =0.15 cm] {$2$} (A);
\path (C) edge [bend left =15] node[above =0.15 cm] {$1$} (B);
\path (B) edge [bend right = -25] node[below =0.15 cm] {$3$} (C);
\end{tikzpicture}}\end{tabular}
 &\begin{tabular}[c]{c}
~~\\\resizebox{0.3\textwidth}{!}{\definecolor {processblue}{cmyk}{0.96,0,0,0}

\begin {tikzpicture}[-latex ,auto ,node distance =1.5 cm and 1.5cm ,on grid ,
semithick ,
state/.style ={ circle ,top color =white , bottom color = processblue!20 ,
draw,processblue , text=blue , minimum width =1 cm},every loop/.style={}]
\node[state] (C) {$\theta_1$};
\node[state] (A) [above left=of C] {$\theta_2$};
\node[state] (B) [above right =of C] {$\theta_3$};
\node[state] (D) [above right =of A] {$\theta_4$};
\node (blank) [below =of C] {};
\draw[-] (A) edge [loop left] (A);
\draw[-] (C) edge [loop below] (C);
\draw[-] (C) -- (A);
\draw[-] (C) -- (B);
\draw[-] (A) -- (D);
\draw[-] (A) -- (B);
\end{tikzpicture}}\end{tabular}\\
(a) & ~~(b) & ~~(c)
\end{tabular}
\end{center}
\caption{An example of (a) the restriction on $[0,1]^2$ of the two atomic measures $D_1$ and $D_2$, (b) the corresponding
multiview directed multigraphs (top: view 1; bottom: view 2) and (c) corresponding undirected graph.}%
\label{fig:multiview}%
\end{figure}

\begin{figure}[ptb]
\begin{center}
\subfigure[$W_k \times W_k$]{\includegraphics[width=.25\textwidth]{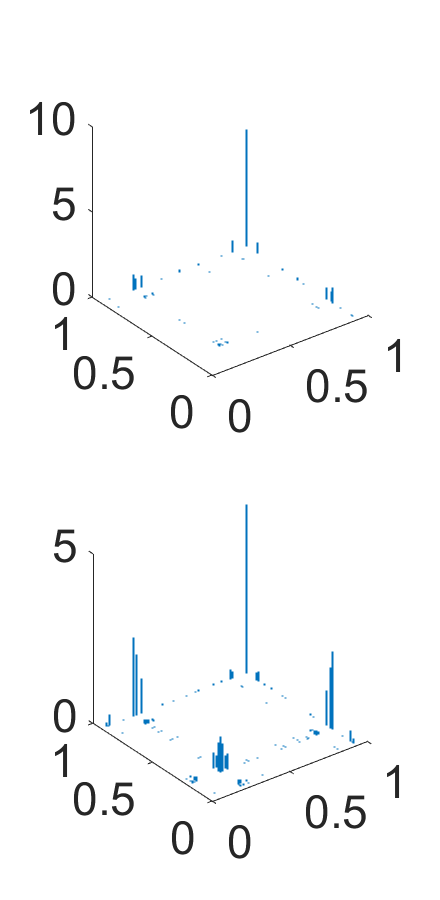}}
\subfigure[Integer point processes $D_k$]{\includegraphics[width=.25\textwidth]{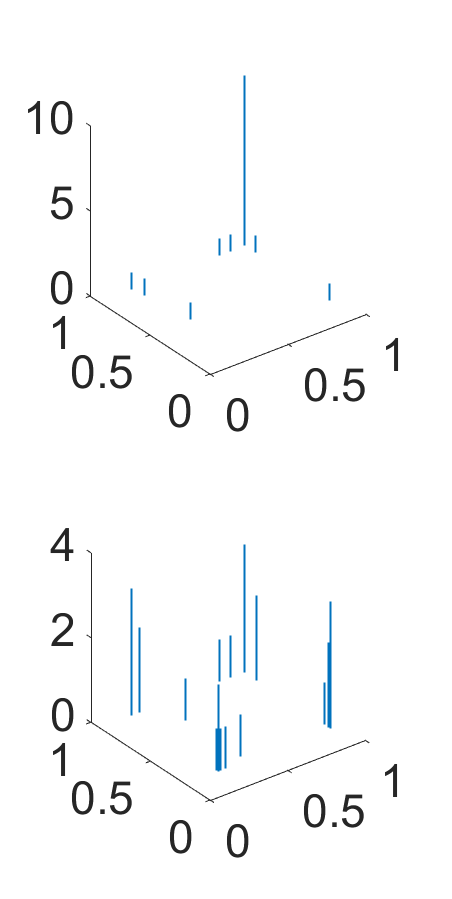}}
\subfigure[Point process $Z$]{\includegraphics[width=.3\textwidth]{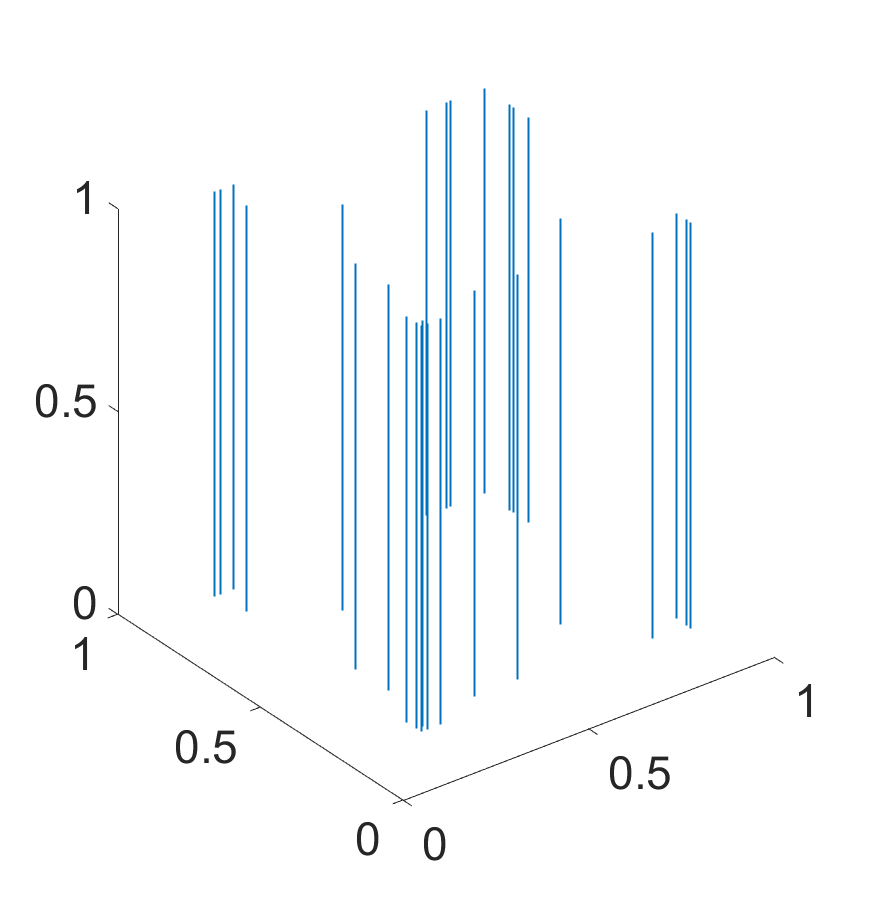}}
\end{center}
\vspace{-0.15in}
\caption{An example, for $p=2$, of (a) the product measures $W_k\times
W_k$, (b) a draw of the directed multigraph measures $D_k\mid W_k\sim {\Poi}(W_k\times W_k)$ {and}
(c) corresponding undirected measure $Z=\sum_{i=1}^{\infty}\sum_{j=1}^{\infty
}\min(1,\sum_{k=1}^p n_{ijk}+n_{jik})\delta_{(\theta_{i},\theta_{j})}$.} 
\label{fig:undirectedHierarchy}%
\end{figure}

Mimicking the hierarchical construction of~\cite{Caron2014}, we introduce integer-valued random measures $D_k$ on $\mathbb R^2_+$, $k=1,\ldots,p$,
\begin{equation}
D_k    =\sum_{i=1}^{\infty}\sum_{j=1}^{\infty}n_{ijk}\delta_{(\theta_{i}%
,\theta_{j})}
\end{equation}
where the $n_{ijk}$ are natural integers. The vector of random measures $(D_1,\ldots,D_p)$ can be interpreted as representing a multiview (a.k.a. multiplex or multi-relational) directed multigraph~\citep{Verbrugge1979,Salter-Townshend2013}, where $n_{ijk}$ represents the number of interactions from node $i$ to node $j$ in the view $k$; see Figure~\ref{fig:multiview} for an illustration. Conditionally on the vector of CRMs, the measures $D_k$ are independently drawn from a Poisson process\footnote{Note that we consider a generalized definition of a Poisson process, where the mean measure is allowed to have atoms; see e.g.~\citet[Section~2.4]{Daley2008}.} with mean measure $W_{k}\times W_k$
\begin{align}
D_k|(W_{1},\ldots,W_{p})  &  \sim\Poi\left(  W_{k}\times
W_{k}\right)
\end{align}
that is, the $n_{ijk}$ are independently Poisson distributed with rate $w_{ik} w_{jk}$.

Finally, the point process $Z$ representing the graph (Eq. \eqref{eq:pointprocessZ}) is deterministically obtained from $(D_1,\ldots,D_p)$ by setting $z_{ij}=1$ if there is at least one directed connection between $i$ and $j$ in any view, and 0 otherwise, therefore $z_{ij}=\min(1,\sum_{k=1}^p n_{ijk}+n_{jik})$. To sum up, the graph model is described as follows:
\begin{align}
\begin{aligned} \begin{array}{ll} W_k =\sum_{i=1}^{\infty}w_{ik}\delta_{\theta_{i}} & (W_1,\ldots,W_p) \sim \mbox{CRM}(\rho,\lambda)\\ D_k =\sum_{i=1}^{\infty}\sum_{j=1}^{\infty}n_{ijk}\delta_{(\theta_{i},\theta_{j})} & D_k\mid W_k\sim\text{Poisson}\left(W_k \times W_k\right) \\ Z =\sum_{i=1}^{\infty}\sum_{j=1}^{\infty}\min(1,\sum_{k=1}^p n_{ijk}+n_{jik})\delta_{(\theta _{i},\theta_{j})}. & \end{array} \end{aligned} \label{eq:Zhierarchy}%
\end{align}
The model construction is illustrated in Figure~\ref{fig:undirectedHierarchy}. Integrating out the measures $D_k$, $k=1,\ldots,p$, the construction can be expressed as, for $i\leq j$
\begin{align}
z_{ij}|(w_{\ell 1},\ldots,w_{\ell p})_{\ell=1,2,\ldots}  &  \sim\left\{
\begin{array}
[c]{ll}%
\text{Ber}(1-\exp(-2\sum_{k=1}^{p}w_{ik}w_{jk})) & i\neq j\\
\text{Ber}(1-\exp(-\sum_{k=1}^{p}w_{ik}^{2})) & i=j
\end{array}
\right.\label{eq:zij}
\end{align}
and $z_{ji}=z_{ij}$; see Figure~\ref{fig:pointprocess}.

\paragraph{Graph Restrictions.} Except in trivial cases, we have $W_k(\mathbb R_+)=\infty$ a.s. and therefore $Z(\mathbb R^2_+)=\infty$ a.s., so the number of points over the plane is infinite a.s. For $\alpha>0$, we consider restrictions of the measures $W_k$, $k=1,\ldots,p$, to the interval $[0,\alpha]$ and of the measures $D_k$ and $Z$ to the box $[0,\alpha]^2$, and write respectively $W_{k\alpha}$, $D_{k\alpha}$ and $Z_\alpha$ these restrictions. Note that condition \eqref{eq:conditionLevydef} ensures that $W_{k\alpha}([0,\alpha])<\infty$ a.s. hence $D_{k\alpha}([0,\alpha]^2)<\infty$ and $Z_{\alpha}([0,\alpha]^2)<\infty$ a.s. As a consequence, for a given $\alpha>0$, the model yields a finite number of edges a.s., even though there may be an infinite number of points $(w_i,\theta_i)\in\mathbb R_+\times[0,\alpha]$; see Section~\ref{sec:properties}.

\begin{remark}
The model defined above can also be used for random multigraphs, where $n_{ij}=\sum_{k=1}^p n_{ijk}$ is the number of directed interactions between $i$ and $j$. Then we have $$n_{ij}|(w_{\ell 1},\ldots,w_{\ell p})_{\ell=1,2,\ldots}\sim \Poi \left (\sum_{k=1}^p w_{ik}w_{jk} \right )$$
which is a Poisson non-negative factorization~\citep{Lee1999,Cemgil2009,Psorakis2011,Ball2011,Gopalan2015}.
\end{remark}

\begin{remark}
The model defined by Eq.~\eqref{eq:zij} allows to model networks which exhibit assortativity~\citep{Newman2003a}, meaning that two nodes with similar characteristics (here similar set of weights) are more likely to connect than nodes with dissimilar characteristics.
 The link function can be generalized to \citep[see e.g.][]{Zhou2015}
 $$z_{ij} \sim \Ber\left(1-\exp\left(-\sum_{k=1}^p\sum_{\ell=1}^p \eta_{k\ell}w_{ik}w_{j\ell}\right )\right )$$
where $\eta_{k\ell}\geq 0$, in order to be able to capture both assortative and dissortative mixing in the network. In particular, setting larger values off-diagonal than on the diagonal of the matrix $(\eta_{k\ell})_{1\leq k,\ell\leq p}$ allows to capture dissortative mixing. The properties and algorithms for simulation and posterior inference can trivially be extended to this more general case. In order to keep the notations as simple as possible, we focus here on the simpler link function \eqref{eq:zij}.
\end{remark}

\subsection{Particular model based on compound CRMs}
\label{subsec:ccrm}

\begin{figure}[ptb]
\begin{center}
\includegraphics[width=10cm]{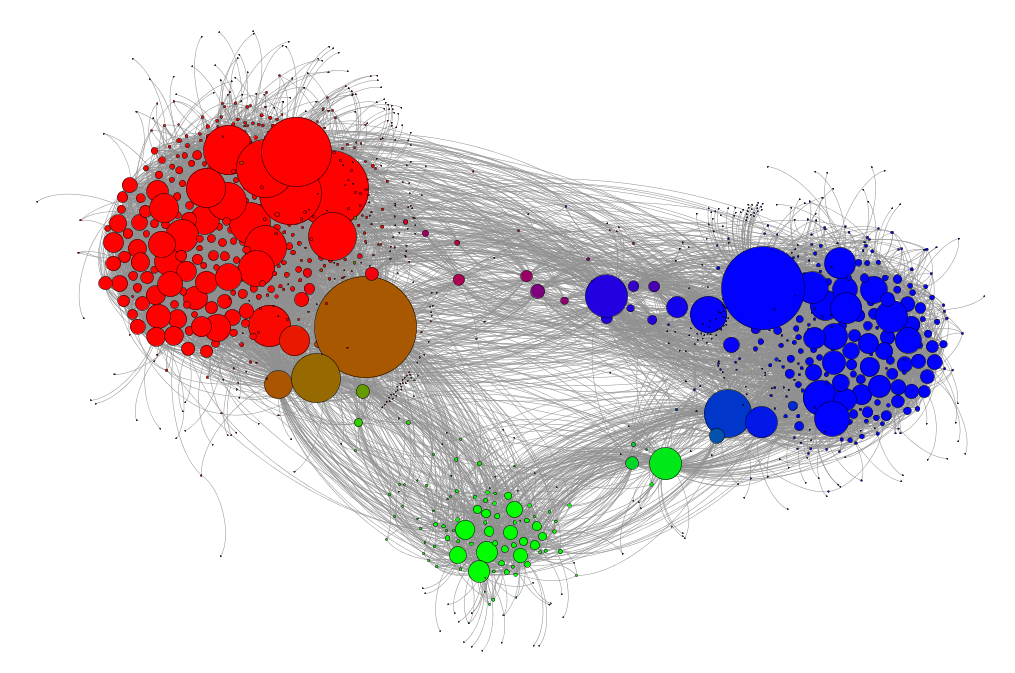}
\end{center}
\caption{Graph sampled from the model with three latent communities, identified by colors red, green, blue. For each
node, the intensity of each color is proportional to the value of the associated weight in that community. Pure red/green/blue color indicates the node is only strongly affiliated to a single community. A mixture of those colors indicates balanced affiliations to different communities. Graph generated with the software Gephi~\citep{Bastian2009}.}%
\label{fig:samplegraph3}%
\end{figure}

The key component in our statistical network model is the multivariate L\'evy measure $\rho$ in \eqref{eq:vCRM}. Various approaches have been developed for constructing multivariate L\'evy measures~\citep{Tankov2003,Cont2003,Kallsen2006,Barndorff-Nielsen2001,Skorohod1991}, or more specifically vectors of completely random measures~\citep{Epifani2010,Leisen2011,Leisen2013,Griffin2013,Lijoi2014}.  We will in this paper consider the following particular form:
\begin{equation}
\rho(dw_{1},\ldots,dw_{p})=e^{-\sum_{k=1}^{p}\gamma_{k}w_{k}}\int_{0}^{\infty
}w_{0}^{-p}F\left(  \frac{dw_{1}}{w_{0}},\ldots,\frac{dw_{p}}{w_{0}}\right)
\rho_{0}(dw_{0})\label{eq:rho}
\end{equation}
where $F(d\beta_{1},\ldots d\beta_{p})$ is some \textit{score} probability  distribution on
$\mathbb{R}_{+}^{d}$, with moment generating function $M(t_1,\ldots,t_p)$, $\rho_0$ is
a \textit{base} L\'{e}vy measure on $\mathbb{R}_{+}$ and $\gamma_{k}\geq0$ are \textit{exponentially tilting parameters} for
$k=1,\ldots,p$. The model defined by \eqref{eq:nu} and \eqref{eq:rho} is a special case of the compound completely random measure (CCRM) model proposed by \cite{Griffin2014}. It admits the following hierarchical construction, which makes interpretability, characterization of the conditionals and analysis of this class of models particularly easy. Let
\begin{equation}
W_{0}=\sum_{i=1}^{\infty}w_{i0}\delta_{\theta_{i}}\sim\text{CRM}%
(\widetilde{\rho}_{0},\lambda) \label{eq:W0}%
\end{equation}
where $\widetilde{\rho}_{0}$ is a measure on $\mathbb R_+$ defined by $\widetilde{\rho}_{0}(dw_0)=M(-w_{0}\gamma_{1},\ldots,-w_{0}\gamma_{p})\rho_{0}(dw_0)$, and for
$k=1,\ldots,p$ and  $i=1,2,\ldots$
\[
w_{ik}=\beta_{ik}w_{i0}%
\]
where the scores $\beta_{ik}$ have the following joint distribution%
\begin{equation}
(\beta_{i1},\ldots,\beta_{ip})|w_{i0}\overset{\text{ind}}{\sim}H(\cdot|w_{i0})\label{eq:condbeta}
\end{equation}
with $H$ is an exponentially tilted version of $F$:
\begin{equation}
H(d\beta_{1},\ldots,d\beta_{p}|w_{0})=\frac{e^{-w_{0}\sum_{k=1}^{p}\gamma
_{k}\beta_{k}}F\left(  d\beta_{1},\ldots,d\beta_{p}\right)  }{\int%
_{\mathbb{R}_{+}^{p}}e^{-w_{0}\sum_{k=1}^{p}\gamma_{k}\widetilde{\beta}_{k}%
}F\left(  d\widetilde{\beta}_{1},\ldots,d\widetilde{\beta}_{p}\right)}.
\end{equation}
Additionally, the set of points $(w_{i0},\beta_{i1},\ldots,\beta_{ip})_{i=1,2,\ldots}$ is a
Poisson point process with mean measure%
\begin{equation}
e^{-w_{0}\sum_{k=1}^{p}\gamma_{k}\beta_{k}}F(d\beta_{1},\ldots,d\beta_{p}%
)\rho_{0}(dw_{0}).\label{eq:Levyw0beta}
\end{equation}

Dependence between the different CRMs is both tuned by the shared scaling parameter $w_{i0}$ and potential dependency between the scores $(\beta_{i1},\ldots,\beta_{ip})$. The hierarchical construction has the following interpretation:
\begin{itemize}
\item The weight $w_{i0}$ is an individual scaling parameter for node $i$ whose distribution is tuned by the base L\'evy measure $\rho_0$. It can be considered as a degree correction, as often used in network models~\citep{Karrer2011,Zhao2012,Herlau2015}. As shown in Section~\ref{sec:properties}, $\rho_0$ tunes the overall sparsity properties of the network.
\item The community-related scores $\beta_{ik}$ tune the level of affiliation of node $i$ to community $k$; this is controlled by both the score distribution $F$ and the tilting coefficients $\gamma_k$. These parameters tune the overlapping block-structure of the network.
\end{itemize}
An example of such a graph with three communities is displayed in Figure~\ref{fig:samplegraph3}.

\paragraph{Specific choices for $F$ and $\rho_0$.}
We now give here specific choices of score distribution $F$ and base L\'evy measure $\rho_0$, which lead to scalable inference algorithms. As in \cite{Griffin2014}, we consider that $F$ is a product of independent gamma distributions%
\begin{equation}
F(d\beta_{1},\ldots,d\beta_{p})=\prod_{k=1}^{p}\beta_{k}^{a_{k}-1}%
e^{-b_{k}\beta_{k}}\frac{b_{k}^{a_{k}}}{\Gamma(a_{k})}d\beta_{k}\label{eq:Fproductgamma}
\end{equation}
where $a_{k}>0$,$b_{k}>0$, $k=1,\ldots,p$, which leads to
\[
H(dw_{1},\ldots,dw_{p}|w_{0})\propto\prod_{k=1}^{p}w_{k}^{a_{k}-1}%
e^{-\frac{b_{k}w_{k}}{w_{0}}-\gamma_{k}w_{k}}dw_{k}%
\]
which is also a product of gamma distributions.

$\rho_0$ is set to be the mean measure of the jump part of a generalized gamma process \citep{Hougaard1986,Brix1999}, which
has been extensively used in BNP models due to its generality, the
interpretability of its parameters and its attractive conjugacy
properties~\citep{James2002,Lijoi2007,Saeedi2011,Caron2012,Caron2014a}. The L\'{e}vy measure in this case is%
\begin{equation}
\rho_0(dw_0)=\frac{1}{\Gamma(1-\sigma)}w_0^{-1-\sigma}\exp(-w_0\tau)dw_0
\label{eq:levyGGP}%
\end{equation}
where the parameters $(\sigma,\tau)$ verify

\begin{align}
\label{eq:conditionsGGP}%
\sigma \in(0,1),\tau\geq0~~\text{ or }~~ \sigma   \in(-\infty,0],\tau>0.
\end{align}

 The gamma
process ($\sigma=0$), the inverse Gaussian process ($\sigma=\frac{1}{2}$) and
the stable process ($\sigma\in(0,1)$, $\tau=0$) are special cases. Using \eqref{eq:Fproductgamma} and \eqref{eq:levyGGP}, the multivariate L\'evy measure has the following analytic form
\begin{align*}
\rho(dw_{1},\ldots,dw_{p})
&  =\frac{2e^{-\sum_{k=1}^{p}\gamma_{k}w_{k}}  }{\Gamma(1-\sigma)}\left[  \prod_{k=1}^{p}\frac
{w_{k}^{a_{k}-1}b_{k}^{a_{k}}}{\Gamma(a_{k})}\right]\left(  \frac{\tau}{\sum
_{k=1}^{p}b_{k}w_{k}}\right)  ^{{-}\frac{\kappa}{2}}%
K_{\kappa}\left(  2\sqrt{\tau\sum_{k}b_{k}w_{k}}\right)dw_1\ldots dw_p
\end{align*}
where $\kappa=\sigma+\sum_{k=1}^{p}a_{k}$ and $K$ is the modified Bessel function of the second kind.

\section{Properties and Simulation}
\label{sec:properties}

The first theorem provides expressions for the expected number of edges in the multigraph and simple graph, and for the expected number of nodes. The proof is given in Appendix \ref{sec:proofs2}.
\begin{theorem}
\label{sec::expected_num}
The expected number of edges in the multigraph $D^*_\alpha$, edges in the undirected graph $N^{(e)}_\alpha$ and observed nodes $N_\alpha$ are given as follows:
\begin{align*}
\mathbb{E}[D^*_\alpha]&= \alpha^2 \mu^T\mu + \alpha \text{tr}(\Sigma)\\
\mathbb{E}[N^{(e)}_\alpha]&=
\alpha \int_{\mathbb{R}^p_+}   \left (1-e^{-w^Tw}\right )  \rho(dw_{1},\ldots,dw_p) +  \frac{\alpha^2}{2} \int_{\mathbb{R}^p_+} \psi(2w_1,\ldots,2w_p)  \rho(dw_{1},\ldots,dw_p)\\
\mathbb{E}[N_\alpha]&=
\alpha  \int_{\mathbb{R}^p_+} \left (1-e^{-w^T w - \alpha \psi(2w)}\right )\rho(dw_1,\ldots,dw_p)
\end{align*}
where $\mu=\int_{\mathbb{R}^p_+} w\rho(dw_1,\ldots,dw_p)$, $\Sigma = \int_{\mathbb{R}^p_+} w w^T \rho(dw_1,\ldots,dw_p)$ and $\psi(t_1,\ldots,t_p) = \int_{\mathbb{R}^p_+}(1-e^{-\sum_{k=1}^p t_iw_i})\rho(dw_{1},\ldots,dw_p)$ is the multivariable Laplace exponent.
\end{theorem}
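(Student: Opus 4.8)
The plan is to exploit the two-layer Poisson structure of the model. Everything is built on a single Poisson point process $\Pi=\{(w_i,\theta_i)\}$ on $\mathbb{R}_+^p\times\mathbb{R}_+$ with mean measure $\rho(dw)\lambda(d\theta)$, on top of which the directed counts $n_{ijk}$ are conditionally independent Poisson with rates $w_{ik}w_{jk}$. In each case I would first condition on $\Pi$ (equivalently on the vector of CRMs) and integrate out the $n_{ijk}$, reducing the quantity to a functional of $\Pi$; then I would take the expectation over $\Pi$ using three standard tools for Poisson processes: the first/second moment (Campbell) formulas, the Slivnyak--Mecke formula for sums over distinct pairs, and the exponential formula (Laplace functional) $\mathbb{E}[\prod_{(w',\theta')\in\Pi,\,\theta'\le\alpha}e^{-h(w')}]=\exp(-\alpha\int(1-e^{-h(w')})\rho(dw'))$. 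Since $\rho(dw)\lambda(d\theta)$ factorizes and all integrands below are $\theta$-free, each $\theta$-integral over $[0,\alpha]$ just contributes a power of $\alpha$.

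For $\mathbb{E}[D^*_\alpha]$: conditionally on $\Pi$ we have $\mathbb{E}[n_{ijk}\mid\Pi]=w_{ik}w_{jk}$, so summing over all pairs with $\theta_i,\theta_j\le\alpha$ gives $\mathbb{E}[D^*_\alpha\mid\Pi]=\sum_{k=1}^p S_k^2$, where $S_k=W_{k\alpha}([0,\alpha])=\sum_{\theta_i\le\alpha}w_{ik}$ is a linear functional of $\Pi$. The Campbell moment formulas give $\mathbb{E}[S_k]=\alpha\mu_k$ and $\mathrm{Cov}(S_k,S_\ell)=\alpha\int w_k w_\ell\,\rho(dw)=\alpha\Sigma_{k\ell}$, hence $\mathbb{E}[S_k^2]=\alpha\Sigma_{kk}+\alpha^2\mu_k^2$; summing over $k$ yields $\alpha\,\mathrm{tr}(\Sigma)+\alpha^2\mu^T\mu$.

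For $\mathbb{E}[N^{(e)}_\alpha]$: integrating out the counts via \eqref{eq:zij} gives $\mathbb{E}[z_{ii}\mid\Pi]=1-e^{-w_i^Tw_i}$ and $\mathbb{E}[z_{ij}\mid\Pi]=1-e^{-2w_i^Tw_j}$ for $i\ne j$. I would split the edge count into its diagonal (self-loop) part and its off-diagonal part. The diagonal part, a sum over single points, has expectation $\alpha\int(1-e^{-w^Tw})\rho(dw)$ by the first Campbell formula. The off-diagonal part is a sum over ordered distinct pairs; applying Slivnyak--Mecke turns $\tfrac12\sum_{i\ne j}(1-e^{-2w_i^Tw_j})$ into $\tfrac12\int_{[0,\alpha]^2}\int\int(1-e^{-2w^Tw'})\rho(dw)\rho(dw')\,d\theta\,d\theta'$, and recognizing the inner $w'$-integral as $\psi(2w_1,\ldots,2w_p)$ produces the stated $\tfrac{\alpha^2}{2}\int\psi(2w)\rho(dw)$.

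For $\mathbb{E}[N_\alpha]$, the main obstacle: a node is observed iff it has at least one incident edge, an event depending on the entire configuration, so this cannot be reduced to a simple pairwise sum. I would tag a node with the Mecke formula, $\mathbb{E}[N_\alpha]=\int_{[0,\alpha]}\int\mathbb{E}[\,1-\Pr(\text{tagged }(w,\theta)\text{ isolated}\mid\Pi)\,]\rho(dw)\,d\theta$. Conditionally on $\Pi$, isolation factorizes, since the count variables governing the self-loop and the edges to distinct nodes are independent, so $\Pr(\text{isolated}\mid\Pi)=e^{-w^Tw}\prod_{\theta'\le\alpha}e^{-2w^Tw'}$. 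Taking the $\Pi$-expectation of the product via the exponential formula replaces it by $e^{-\alpha\psi(2w)}$, giving isolation probability $e^{-w^Tw-\alpha\psi(2w)}$ and hence $\mathbb{E}[N_\alpha]=\alpha\int(1-e^{-w^Tw-\alpha\psi(2w)})\rho(dw)$. The delicate points are the correct use of the Palm/Mecke calculus, so that the tagged point is excluded from its own connection product while its self-loop is handled by the separate $e^{-w^Tw}$ factor, and justifying the interchange of expectation with the infinite product, which is where condition \eqref{eq:conditionLevydef} ensuring finiteness of the restricted measures is used.
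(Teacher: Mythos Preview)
Your proposal is correct and follows essentially the same route as the paper: for $D^*_\alpha$ you condition on the CRMs and apply Campbell's moment formulas to $S_k=W_{k\alpha}([0,\alpha])$; for $N^{(e)}_\alpha$ you split diagonal/off-diagonal and apply Slivnyak--Mecke to the pairwise sum; for $N_\alpha$ you tag a point via Mecke, factor the conditional isolation probability, and compute the expectation of the product over remaining points via the Poisson exponential (Laplace) functional. The paper's proof is organized identically, the only cosmetic difference being that it names the last step ``Campbell's theorem'' rather than ``exponential formula.''
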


\subsection{Exchangeability}

The point process $Z$ defined by \eqref{eq:Zhierarchy} is jointly exchangeable in the sense of \cite{Kallenberg1990,Kallenberg2005}. For any $h>0$ and any permutation $\pi$ of $\mathbb N$
\begin{equation}
(Z(A_i\times A_j))\overset{d}{=}(Z(A_{\pi(i)}\times A_{\pi(j)}))\text{ for }(i,j)\in \mathbb N^2
\label{eq:kallenbergexchangeability}
\end{equation}
where $A_i=[h(i-1),hi]$. This follows directly from the fact that the vector of CRMs $(W_1,\ldots,W_p)$ has independent and identically distributed increments, hence
\begin{equation}
(W_1(A_i),\ldots,W_p(A_i))\overset{d}{=}(W_1(A_{\pi(i)}),\ldots,W_p(A_{\pi(i)})).
\end{equation}
  The model thus falls into the general representation theorem for exchangeable point processes~\citep{Kallenberg1990}.

\subsection{Sparsity}

In this section, following the asymptotic notations of \cite{Janson2011}, we derive the sparsity properties of our graph model, first for the general construction of Section~\ref{subsec:general}, then for the specific construction on compound CRMs of Section~\ref{subsec:ccrm}. Similarly to the notations in \cite{Caron2014}, let $Z_\alpha$ be the restriction of $Z$ to the box $[0,\alpha]^2$. Let  $(N_\alpha)_{\alpha\geq 0}$ and $(N_\alpha^{(e)})_{\alpha\geq 0}$ be counting processes respectively corresponding to the number of nodes and edges in $Z_\alpha$:
\begin{align*}
N_\alpha &=\sum_i \1{\theta_i\leq \alpha}\1{(\sum_j z_{ij}\1{\theta_j\leq \alpha})\geq 1}\\
N_\alpha^{(e)} &=\sum_{i\leq j} z_{ij}\1{\theta_i\leq \alpha}\1{\theta_j\leq \alpha}.
\end{align*}

Note that in the propositions below, we discard the trivial case $\int_{\mathbb R^p_+} \rho(dw_1,\ldots,dw_p)=0$ which implies $N_\alpha^{(e)}=N_\alpha=0$ a.s.

\paragraph{General construction.} The next proposition characterizes the sparsity properties of the random graph depending on the properties of the L\'evy measure $\rho$. In particular, if
\begin{equation}
\int_{\mathbb R^p_+} \rho(dw_1,\ldots,dw_p)=\infty
\end{equation}
then, for any $\alpha>0$, there is a.s. an infinite number of $\theta_i\in[0,\alpha]$ for which $\sum_k w_{ik}>0$ and the vector of CRMs is called infinite-activity. Otherwise, it is finite-activity.  

\begin{proposition}
Assume that, for any $k=1,\ldots,p$,
\begin{equation}
\int_{{\mathbb R^p_+}} w_k\rho(dw_1,\ldots,dw_p)<\infty\label{eq:momentcondition}
\end{equation}
Then

\[
N_\alpha^{(e)} =\left\{\begin{array}{ll}
                   \Theta(N_\alpha^2) & \text{if }(W_1,\ldots,W_p)\text{ is finite-activity} \\
                    o(N_\alpha^2)& \text{otherwise}
                 \end{array}\right .
\]
a.s. as $\alpha$ tends to $\infty$. \label{th:sparsitygeneral}
\end{proposition}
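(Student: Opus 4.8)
The plan is to establish the following sharper dichotomy, from which Proposition~\ref{th:sparsitygeneral} is immediate: under \eqref{eq:momentcondition} the edge count always satisfies $N^{(e)}_\alpha=\Theta(\alpha^2)$ a.s., whereas the node count satisfies $N_\alpha=\Theta(\alpha)$ a.s. in the finite-activity case and $N_\alpha/\alpha\to\infty$ a.s. in the infinite-activity case. Indeed, a simple graph (with self-loops) on $N_\alpha$ non-isolated nodes has at most $\binom{N_\alpha}{2}+N_\alpha\le N_\alpha^2$ edges, so $N^{(e)}_\alpha\le N_\alpha^2$ holds deterministically; combining this with the two regimes yields $N^{(e)}_\alpha=\Theta(N_\alpha^2)$ and $N^{(e)}_\alpha=o(N_\alpha^2)$ respectively. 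Since $\alpha\mapsto N_\alpha$ and $\alpha\mapsto N^{(e)}_\alpha$ are nondecreasing, it suffices throughout to control these quantities along a geometric subsequence $\alpha_n=c^n$ by Borel--Cantelli and then interpolate, which preserves all $\Theta$, $O$ and $o$ relations up to constants.

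For the upper bound on $N^{(e)}_\alpha$ I would condition on $(W_1,\dots,W_p)$, under which the $z_{ij}$ are independent Bernoulli variables; using $1-e^{-x}\le x$ in \eqref{eq:zij} gives $\mathbb E[N^{(e)}_\alpha\mid W]\le\sum_{i,j:\,\theta_i,\theta_j\le\alpha}w_i^T w_j=\sum_{k=1}^p W_k([0,\alpha])^2\le\bigl(\sum_{k=1}^p W_k([0,\alpha])\bigr)^2$. Each $W_k([0,\alpha])$ is a subordinator with finite mean $\mu_k$ by \eqref{eq:momentcondition}, so the strong law gives $\sum_k W_k([0,\alpha])/\alpha\to\sum_k\mu_k$ a.s. and the conditional mean is $O(\alpha^2)$ a.s.; a conditional Chernoff bound then upgrades this to $N^{(e)}_\alpha=O(\alpha^2)$ a.s., consistently with $\mathbb E[N^{(e)}_\alpha]=\Theta(\alpha^2)$ from Theorem~\ref{sec::expected_num}. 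For the matching lower bound, since $\rho\neq0$ I fix a community $\ell$ and $\epsilon>0$ with $r_\epsilon:=\rho(\{w_\ell\ge\epsilon,\ \sum_k w_k\le1/\epsilon\})\in(0,\infty)$; the number of such ``good'' nodes in $[0,\alpha]$ is $\mathrm{Poisson}(\alpha r_\epsilon)$ and any two connect with probability at least $1-e^{-2\epsilon^2}>0$, so a Chernoff/Borel--Cantelli argument on these conditionally independent edges yields $N^{(e)}_\alpha=\Omega(\alpha^2)$ a.s.

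For the node count in the finite-activity case, $R:=\int\rho<\infty$, so the total number $V_\alpha$ of nodes in $[0,\alpha]$ is $\mathrm{Poisson}(\alpha R)$, whence $N_\alpha\le V_\alpha=\Theta(\alpha)$ a.s.; the matching $N_\alpha\ge\sqrt{N^{(e)}_\alpha}=\Omega(\alpha)$ follows from $N^{(e)}_\alpha\le N_\alpha^2$ together with the edge lower bound, giving $N_\alpha=\Theta(\alpha)$. In the infinite-activity case, for each fixed $\epsilon$ the good nodes of the region above form a subgraph in which every pair connects with probability at least $q_\epsilon:=1-e^{-2\epsilon^2}$, so all but a vanishing fraction of them are non-isolated, and hence $\liminf_\alpha N_\alpha/\alpha\ge\tfrac12 r_\epsilon$ a.s.; because $\int\rho=\infty$ forces $\rho(\{w_\ell>0\})=\infty$ for some community $\ell$, we have $r_\epsilon\to\infty$ as $\epsilon\downarrow0$, and letting $\epsilon$ run through a rational sequence yields $N_\alpha/\alpha\to\infty$ a.s.

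The hard part will be the almost-sure upper bound $N^{(e)}_\alpha=O(\alpha^2)$. In the infinite-activity regime there are a.s. infinitely many nodes in $[0,\alpha]$, so no deterministic bound on the number of edges is available, and a direct variance computation for $N^{(e)}_\alpha$ would require the finite second moments $\int w_k w_\ell\,\rho<\infty$, which are \emph{not} assumed---only the first moments \eqref{eq:momentcondition}. The resolution sketched above sidesteps this by conditioning on the weights and bounding the conditional mean by the squared total mass $\bigl(\sum_k W_k([0,\alpha])\bigr)^2$, which the strong law for subordinators controls using first moments alone, the conditional independence of the edges then supplying the needed concentration. The remaining ingredients (the subordinator strong law, Poisson tail bounds, and the subsequence-plus-monotonicity interpolation) are routine.
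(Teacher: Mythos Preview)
Your strategy is sound and matches the paper's at the top level: reduce everything to $N^{(e)}_\alpha=\Theta(\alpha^2)$ a.s.\ together with $N_\alpha=\Theta(\alpha)$ (finite activity) or $N_\alpha/\alpha\to\infty$ (infinite activity). The technical implementations differ. For $N^{(e)}_\alpha=\Theta(\alpha^2)$, the paper invokes the law of large numbers for the jointly exchangeable array $(Z(A_i\times A_j))_{i,j}$ (cf.\ \eqref{eq:kallenbergexchangeability}), which delivers both bounds at once under the first-moment condition~\eqref{eq:momentcondition}; you instead get the upper bound by conditioning on $(W_1,\dots,W_p)$, bounding the conditional mean by $\bigl(\sum_k W_k([0,\alpha])\bigr)^2$, applying the strong law for subordinators, and then a conditional Bernstein/Chernoff inequality, and you get the lower bound from an Erd\H{o}s--R\'enyi subgraph on ``good'' nodes. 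For the infinite-activity node growth, the paper lower-bounds $N_\alpha$ by a conditionally Poisson counting process $\widetilde N_\alpha$ and defers to the argument of \cite{Caron2014}; your alternative is the same Erd\H{o}s--R\'enyi domination on good nodes, using that $\int\rho=\infty$ forces $\rho(\{w_\ell>0\})=\infty$ for some $\ell$, whence $r_\epsilon\uparrow\infty$. Your route is more self-contained---it avoids both the exchangeable-array LLN and the appeal to \cite{Caron2014}---at the price of a few more explicit estimates. Two minor points worth tightening when you write it out: (i) the ``conditional Chernoff'' step needs the Bernstein form, since the conditional mean can be much smaller than the target threshold $C\alpha^2$; and (ii) in the good-nodes argument, make the stochastic-domination coupling with $G(M_\alpha,q_\epsilon)$ explicit so that the ``all but a vanishing fraction are non-isolated'' statement becomes an almost-sure claim along the geometric subsequence.
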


The proof is given in Appendix~\ref{sec:proofs}.

\paragraph{Construction based on CCRMs.} For the CCRM L\'evy measure \eqref{eq:rho}, the sparsity properties are solely tuned by the base L\'evy measure $\rho_0$. Ignoring trivial degenerate cases for the score distribution $F$, it is easily shown that the CCRM model defined by \eqref{eq:nu} and \eqref{eq:rho} is infinite-activity iff the L\'evy measure $\rho_0$ verifies
\begin{equation}
\int_{0}^{\infty}\rho_0(dw)=\infty.\label{eq:condinfiniteCRM}
\end{equation}
In this case all CRMs $W_0,W_1,\ldots,W_p$ are infinite-activity. Otherwise they are all finite-activity and the vector of CRMs is finite-activity. In the particular case of a CCRM with independent gamma distributed scores \eqref{eq:Fproductgamma} and generalized gamma process base measure \eqref{eq:levyGGP}, the condition \eqref{eq:condinfiniteCRM} is satisfied whenever $\sigma\geq 0$. The next proposition characterizes the sparsity of the network depending on the properties of the base L\'evy measure $\rho_0$.

\begin{proposition}
\label{th:sparsityCCRM}
Assume that
\begin{equation}
\int_0^\infty w_0\rho_0(dw_0)<\infty\label{eq:momentcondition2}
\end{equation}
and $F$ is not degenerated at 0. Then

\[
N_\alpha^{(e)} =\left\{\begin{array}{ll}
                   \Theta(N_\alpha^2) & \text{if }\int_0^\infty \rho_0(dw)<\infty \\
                    o(N_\alpha^2)& \text{otherwise}
                 \end{array}\right .
\]
a.s. as $\alpha$ tends to $\infty$. Furthermore, if the tail L\'evy intensity $\overline{\rho}_0$ defined by
\begin{equation}
\overline{\rho}_0(x)=\int_x^\infty \rho_0(dw),\label{eq:taillevy}
\end{equation}
is a regularly varying function, i.e.
\[
\frac{\overline{\rho}_0(x)}{x^{-\sigma}\ell(1/x)}\longrightarrow 1\text{ as }x\rightarrow 0
\]
for some $\sigma\in(0,1)$ where $\ell$ is a {slowly varying} function verifying $\lim_{t\rightarrow\infty}\ell(at)/\ell(t)=1$ for any $a>0$ and $\lim_{t\rightarrow\infty}\ell(t)>0$,
then
\[
N_\alpha^{(e)} =  O(N_\alpha^{2/(1+\sigma)})
\]
a.s. as $\alpha$ tends to $\infty$. In the particular case of a CCRM with independent gamma distributed scores \eqref{eq:Fproductgamma} and generalized gamma process base measure \eqref{eq:levyGGP}, condition \eqref{eq:momentcondition2} is equivalent to having $\tau>0$. In this case, we therefore have
\[
N_\alpha^{(e)} =\left\{\begin{array}{ll}
                   \Theta(N_\alpha^2) & \text{if }\sigma<0 \\
                    o(N_\alpha^2)& \text{if }\sigma\geq 0\\
                    O(N_\alpha^{2/(1+\sigma)})& \text{if }\sigma\in (0,1).
                 \end{array}\right .
\]

\end{proposition}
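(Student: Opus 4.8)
# Proof Proposal for Proposition~\ref{th:sparsityCCRM}

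The plan is to reduce the CCRM statement to Proposition~\ref{th:sparsitygeneral} as much as possible, and then to extract the refined regular-variation rate by an asymptotic analysis of the relevant counting processes. First I would verify the preliminary claims about activity: using the hierarchical construction \eqref{eq:W0}--\eqref{eq:Levyw0beta}, the total mass of the governing L\'evy measure $\rho$ is controlled by $\int_0^\infty \widetilde\rho_0(dw_0)=\int_0^\infty M(-w_0\gamma_1,\ldots,-w_0\gamma_p)\rho_0(dw_0)$. Since $M$ is a moment generating function evaluated at nonpositive arguments (so $M\in(0,1]$ when $F$ is supported on $\mathbb R_+^p$ and nondegenerate), this integral is finite iff $\int_0^\infty\rho_0(dw_0)<\infty$; this gives the equivalence between \eqref{eq:condinfiniteCRM} and finite/infinite activity, and reduces the $\Theta(N_\alpha^2)$ versus $o(N_\alpha^2)$ dichotomy to an application of Proposition~\ref{th:sparsitygeneral}, provided the moment condition \eqref{eq:momentcondition} holds. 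I would check that \eqref{eq:momentcondition2}, namely $\int_0^\infty w_0\rho_0(dw_0)<\infty$, together with finiteness of the first moment of the scores under $F$, implies \eqref{eq:momentcondition}: indeed $\int w_k\rho(dw)=\int\int \beta_k w_0\,e^{-w_0\sum_j\gamma_j\beta_j}F(d\beta)\rho_0(dw_0)\le \EE_F[\beta_k]\int w_0\rho_0(dw_0)<\infty$.

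The heart of the argument is the regularly-varying refinement. Here I expect the standard strategy to be an asymptotic equivalence $N_\alpha\asymp \alpha\,\overline{\rho}_0(1/\alpha)^{-1}$-type relation between node count and $\alpha$, combined with a super-linear but sub-quadratic bound on the edge count $N_\alpha^{(e)}$. Concretely, the number of nodes with nonzero weight in $[0,\alpha]$ grows like $\alpha$ times the number of jumps of $W_0$ exceeding the effective connection threshold; since a node of scale $w_0$ connects to the rest of the graph with probability governed by $1-e^{-c\,w_0\cdot(\text{mass})}$, the relevant truncation level scales like $1/\alpha$, so the node count behaves like $\alpha\,\overline{\rho}_0(1/\alpha)$ up to slowly varying factors. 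Under $\overline{\rho}_0(x)\sim x^{-\sigma}\ell(1/x)$ this yields $N_\alpha$ of order $\alpha^{1+\sigma}\ell(\alpha)$. Meanwhile the expected edge count from Theorem~\ref{sec::expected_num} is $\Theta(\alpha^2)$ via the $\alpha^2\mu^T\mu$ term whenever $\mu<\infty$, i.e. under \eqref{eq:momentcondition2}. Eliminating $\alpha$ between $N_\alpha\asymp\alpha^{1+\sigma}$ and $N_\alpha^{(e)}\asymp\alpha^2$ gives $N_\alpha^{(e)}=O(N_\alpha^{2/(1+\sigma)})$, absorbing the slowly varying factor into the $O$-notation using the Potter bounds for $\ell$.

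To make the heuristic rigorous I would use Campbell's theorem and the Poisson structure to compute $\EE[N_\alpha]$ and $\EE[N_\alpha^{(e)}]$ exactly (these are already available from Theorem~\ref{sec::expected_num}), then upgrade expectations to almost-sure asymptotics via a concentration/Borel--Cantelli argument along a subsequence $\alpha_n$, interpolating by monotonicity of the counting processes between consecutive values of $\alpha_n$ — this is the same device used in the proof of Proposition~\ref{th:sparsitygeneral} and in \cite{Caron2014}. The regular variation of $\overline{\rho}_0$ feeds into a Tauberian/Karamata computation relating $\int_0^\epsilon w_0\,\rho_0(dw_0)$ and $\overline\rho_0(\epsilon)$, needed to pin down the exponent of $\EE[N_\alpha]$. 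Finally I would specialize to the GGP case \eqref{eq:levyGGP}: here $\int_0^\infty w_0\rho_0(dw_0)<\infty\iff\tau>0$ by direct inspection of $w_0^{-\sigma}e^{-w_0\tau}$, the activity condition $\int_0^\infty\rho_0(dw_0)=\infty\iff\sigma\ge0$, and $\overline{\rho}_0(x)\sim \frac{1}{\sigma\Gamma(1-\sigma)}x^{-\sigma}$ is regularly varying with index $-\sigma$ for $\sigma\in(0,1)$, yielding the three-case conclusion.

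The main obstacle I anticipate is the passage from the expected-value asymptotics to the almost-sure statement while keeping the slowly varying factor under control: the Poisson counting processes are not independent across $\alpha$, so the Borel--Cantelli step requires careful variance bounds (or a martingale/submartingale maximal inequality) on $N_\alpha^{(e)}$ and $N_\alpha$, and the joint relation $N_\alpha^{(e)}=O(N_\alpha^{2/(1+\sigma)})$ must hold simultaneously for both processes on the same almost-sure event. Handling the slowly varying $\ell$ so that it is genuinely absorbed by the $O$ — rather than producing an extra $\ell$-factor that the clean statement does not allow — is the delicate point, and is exactly where the hypotheses $\lim_{t\to\infty}\ell(t)>0$ and the Potter bounds become essential.
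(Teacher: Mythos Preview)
Your proposal is correct and follows the same high-level architecture as the paper: reduce the $\Theta(N_\alpha^2)$ versus $o(N_\alpha^2)$ dichotomy to Proposition~\ref{th:sparsitygeneral} via the activity/moment checks you describe, then combine $N_\alpha^{(e)}=\Theta(\alpha^2)$ with a lower bound $N_\alpha\gtrsim\alpha^{1+\sigma}$ and eliminate $\alpha$.

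The one substantive difference is the technical lemma driving the refined rate. The paper does \emph{not} work directly with $\overline\rho_0$ and the expected-count formula from Theorem~\ref{sec::expected_num}; instead it isolates a single intermediate result, Proposition~\ref{th:regularvar}, showing that the \emph{multivariate Laplace exponent} $\psi(tx_1,\ldots,tx_p)$ is regularly varying in $t$ with exponent $\sigma$ whenever $\overline\rho_0$ is. Once that is established, the node-count lower bound transfers verbatim from the univariate argument of \cite{Caron2014} (the conditionally-Poisson lower bound $\widetilde N_\alpha$), giving $N_\alpha=\Omega(\alpha^{1+\sigma}\ell(\alpha))$ almost surely, and the hypothesis $\lim_{t\to\infty}\ell(t)>0$ then absorbs the slowly varying factor exactly as you anticipate. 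Your proposed route via $\EE[N_\alpha]$ from Theorem~\ref{sec::expected_num} would also work, but note that the formula there already contains $\psi(2w)$ inside the exponent, so you would end up proving the regular variation of $\psi$ anyway; the paper's approach simply makes that the headline lemma and avoids re-doing the concentration step from scratch. Your identification of the role of $\lim\ell(t)>0$ and of the Potter-bound issue is exactly right, and your verification of the GGP specialization is correct.
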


\begin{figure}[ptb]
\begin{center}
\subfigure[]{\includegraphics[width=0.33\columnwidth]{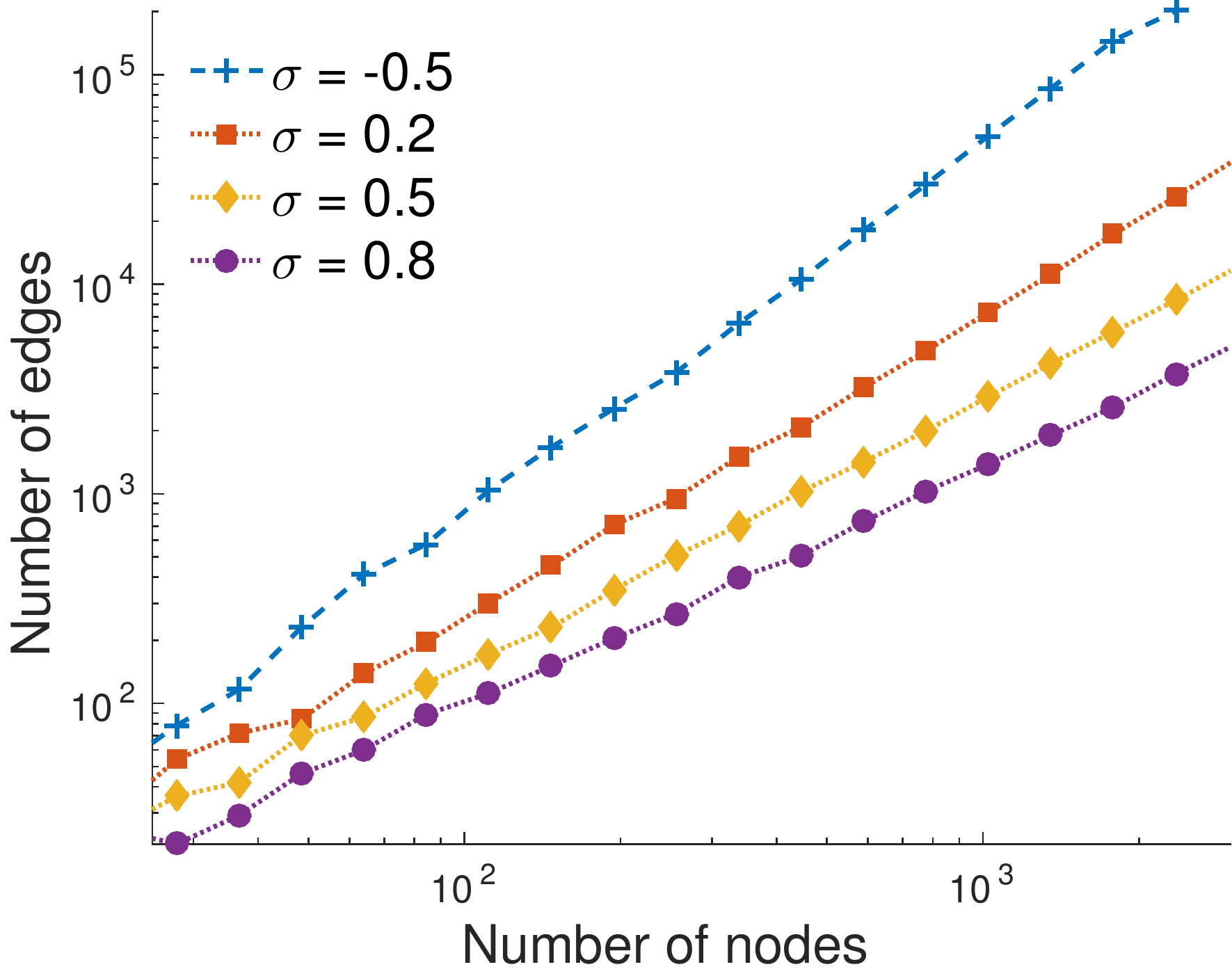}}
\hspace*{.1\textwidth}
\subfigure[]{\includegraphics[width=0.33\columnwidth]{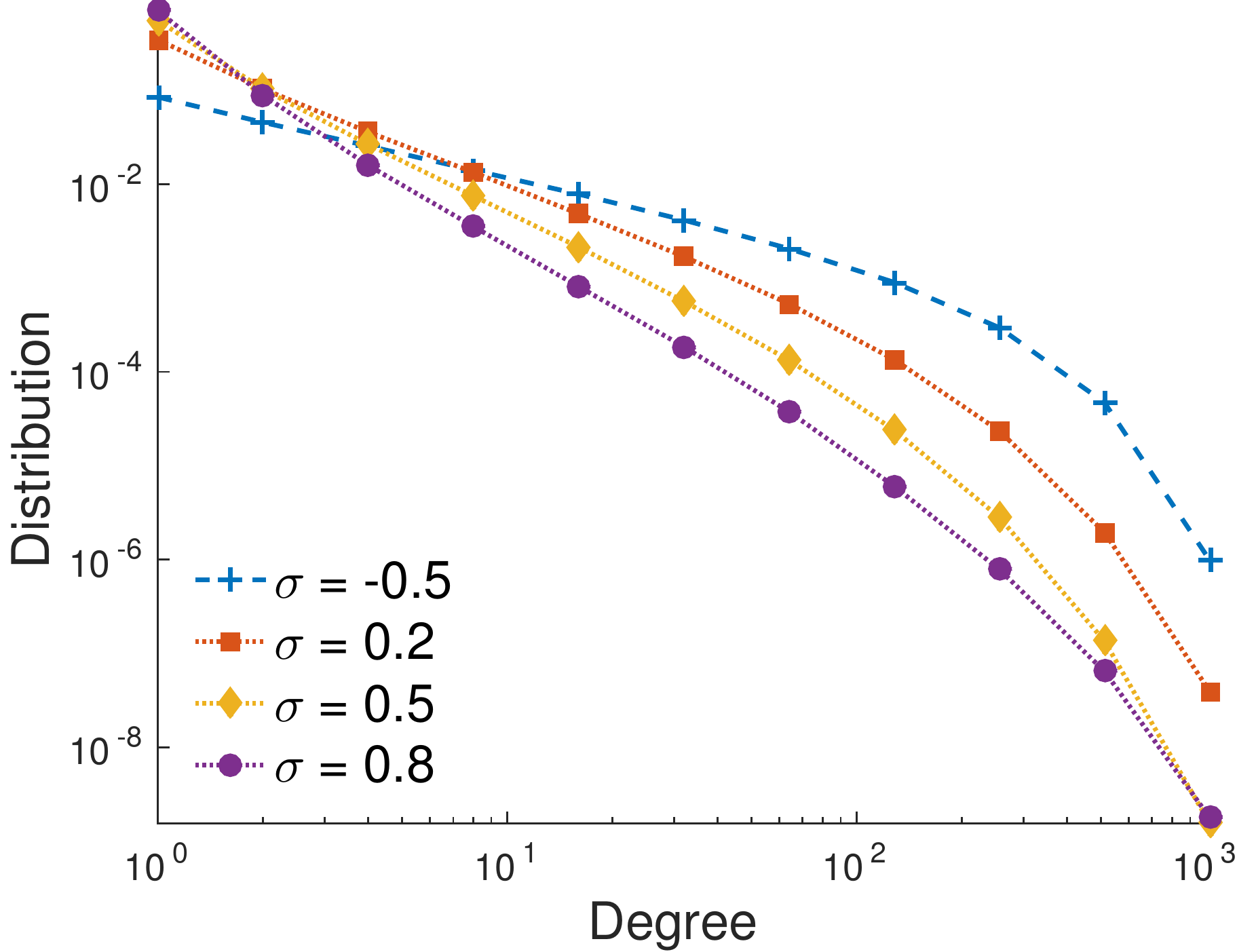}}
\end{center}

\caption{ Empirical analysis of the properties of CCRM based graphs generated with parameters $p=2$, $\tau=1$, $a_k=0.2$, $b_k=\frac{1}{p}$ and averaging over various $\alpha$. (a) Number of edges versus the number of nodes and (b) degree distributions on a log-log scale for various $\sigma$: one finite-activity CCRM ($\sigma=-0.5$) and three infinite-activity CCRMs ($\sigma=0.2$, $\sigma=0.5$ and $\sigma=0.8$). In (a) we note growth at a rate $\Theta(N_\alpha ^2)$ for $\sigma=-0.5$ and $O(N_\alpha^{2/(1+\sigma)})$ for $\sigma\in(0,1)$. }
\label{fig:degree_edgesvsnodes}%
\end{figure}

The proof is given in Appendix~\ref{sec:proofs}. Figure~\ref{fig:degree_edgesvsnodes}(a) provides an empirical illustration of Proposition~\ref{th:sparsityCCRM} for a CCRM with independent gamma scores and generalized gamma based L\'evy measure. Figure~\ref{fig:degree_edgesvsnodes}(b) shows empirically that the degree distribution also exhibits a power-law behaviour when $\sigma\in(0,1)$.

\subsection{Simulation}
\label{sec:simulation}

The point process $Z$ is defined on the plane. We describe in this section how to sample realizations of restrictions $Z_\alpha$ of $Z$ to the box $[0,\alpha]^2$.

\paragraph{General construction.} The hierarchical construction given by {Eq.~\eqref{eq:Zhierarchy}} suggests a direct way to sample from the model:
 \begin{enumerate}
 \item Sample $(w_{i1},\ldots,w_{ip},\theta_i)_{i=1,2,\ldots}$ from a Poisson process with mean measure $\nu(dw_1,\ldots,dw_p, d\theta){\mathds 1}_{\theta\in[0,\alpha]}$.
 \item For each pair of points, sample $z_{ij}$ from \eqref{eq:zij}.
  \end{enumerate}
  There are two caveats to this strategy. First, for infinite-activity CRMs, the number of points in $\mathbb R_+^p\times [0,\alpha]$ is almost surely infinite; even for finite-activity CRMs, it may be so large that it is not practically feasible. We need therefore to resort to an approximation, by sampling from a Poisson process with an approximate mean measure $\nu^{\varepsilon}(dw_1,\ldots, dw_p, d\theta){\mathds 1}_{\theta\in[0,\alpha]}=\rho^\varepsilon(dw_1,\ldots, dw_p)\lambda(d\theta){\mathds 1}_{\theta\in[0,\alpha]}$ where
\begin{equation*}
\int_{\mathbb R_+^p}\rho^{\varepsilon}(dw_1,\ldots, dw_p )<\infty
\end{equation*}
with  $\varepsilon>0$ controlling the level of approximation. The approximation is specific to the choice of the mean measure, and we describe such an approximation for CCRMs below.

The second caveat is that, for applying Eq.~\eqref{eq:zij}, we need to consider all pairs $i\leq j$, which can be computationally problematic. We can instead, similarly to~\cite{Caron2014}, use the hierarchical Poisson construction as follows:
\begin{enumerate}
\item Sample $(w_{i1},\ldots,w_{ip},\theta_i)_{i=1,2,\ldots,K}$ from a Poisson process with mean measure $\nu^{\varepsilon}(dw_1,\ldots,dw_p,d\theta){\mathds 1}_{\theta\in[0,\alpha]}$. Let $W_{k,\alpha}^\varepsilon=\sum_{i=1}^K w_{ik}\delta_{\theta_i}$ be the associated truncated CRMs and $W_{k,\alpha}^{\varepsilon \ast  }=\sum_{i=1}^K w_{ik}$ their total masses.
\item For $k=1,\ldots,p$, sample $D^\ast _{k,\alpha}|W_{k,\alpha}^{\varepsilon \ast  }\sim\text{Poisson}((W_{k,\alpha}^{\varepsilon \ast  })^2)$.
\item For $k=1,\ldots,p$, $\ell=1,\ldots,D^\ast_{k,\alpha}$, $j=1,2$, sample $U_{k\ell j}|W_{k,\alpha}^\varepsilon\overset{\text{ind}}{\sim}\frac{W_{k,\alpha}^\varepsilon}{W_{k,\alpha}^{\varepsilon \ast}}$.
\item Set $D_{k,\alpha}^\varepsilon=\sum_{\ell = 1}^{D^\ast_{k,\alpha}} \delta_{U_{k\ell 1,k\ell 2}}$.
\item Obtain $Z$ from $(D_1,\ldots,D_p)$ as in \eqref{eq:Zhierarchy}.
\end{enumerate}

\paragraph{Construction based on CCRMs.} The hierarchical construction of compound CRMs suggests an algorithm to simulate a vector of CRMS. We consider the following (truncated) mean measure
\begin{equation}
\rho^{\varepsilon}(dw_{1},\ldots,dw_{p})=e^{-\sum_{k=1}^{p}\gamma_{k}w_{k}}%
\int_{\varepsilon}^{\infty}w_{0}^{-p}F\left(  \frac{dw_{1}}{w_{0}}%
,\ldots,\frac{dw_{p}}{w_{0}}\right)  \rho_{0}(dw_{0})
\end{equation}
with $\varepsilon\geq 0$. We can sample from the (truncated) CCRM as follows
\begin{enumerate}
\item[1.] (a) Sample $(w_{i0},\theta_i)_{i=1,\ldots,K}$ from a Poisson point process with mean measure $\widetilde{\rho}_{0}(dw_0)\lambda(d\theta){\mathds 1}_{\{w_0>\varepsilon,\theta\in[0,\alpha]\}}$.
\item[~~] (b) For $i=1,\ldots,K$ and $k=1,\ldots,p$, set $w_{ik}=\beta_{ik}w_{i0}$ where $(\beta_{i1},\ldots,\beta_{ip})|w_{i0}$ is drawn from \eqref{eq:condbeta}.
\end{enumerate}
The truncation level $\varepsilon$ is set to 0 for finite-activity CCRMs, and $\varepsilon>0$ otherwise. We explain in Appendix~\ref{sec:app:simujumps} how to perform step 1.(a) in the case of a tilted generalized gamma process.

\section{Posterior inference}
\label{sec:inference}

In this section, we describe a MCMC algorithm for posterior inference of the model parameters and hyperparameters in the statistical network model defined in Section~\ref{sec:model}. We first describe the data augmentation scheme and characterization of conditionals. We then describe the sampler for a general L\'evy measure $\rho$, and finally derive the sampler for compound CRMs.

\subsection{Characterization of conditionals and data augmentation}

Assume that we have observed a set of connections $(z_{ij})_{1\leq i,j\leq N_\alpha}$, where $N_\alpha$ is the number of nodes with at least one connection. We aim at inferring the positive parameters $(w_{i1},\ldots,w_{ip})_{i=1,\ldots,N_\alpha}$ associated to the nodes with at least one connection. We also want to estimate the positive parameters associated to the other nodes with no connection. The number of such nodes may be large, and even infinite for infinite-activity CRMs; but under our model, these parameters are only identifiable through their sum, denoted $(w_{\ast 1},\ldots,w_{\ast p})$. Note that the node locations $\theta_i$ are not likelihood identifiable, and we will not try to infer them. We assume that there is a set of unknown hyperparameters $\phi$ of the mean intensity $\rho$, with prior $p(\phi)$. We assume that the L\'evy measure $\rho$ is absolutely continuous with respect to the Lebesgue measure on $\mathbb R^d$, and write simply $\rho(dw_1,\ldots,dw_p;\phi)=\rho(w_1,\ldots,w_p;\phi)dw_1\ldots dw_p$.  The parameter $\alpha$ is also assumed to be unknown, with some prior $\alpha\sim \Gam(a_\alpha, b_\alpha)$ with $a_\alpha>0,b_\alpha>0$.  We therefore aim at approximating $p((w_{1k},\ldots,w_{N_\alpha k},w_{\ast k})_{k=1,\ldots,p},\phi,\alpha|(z_{ij})_{1\leq i,j\leq N_\alpha})$.\medskip

As a first step, we characterize the conditional distribution of the restricted vector of CRMs $(W_{1\alpha},\ldots,W_{p\alpha})$ given the restricted measures $(D_{1\alpha},\ldots,D_{p\alpha})$. Proposition~\ref{th:characterization} below extends Theorem 12 in \cite{Caron2014} to the multivariate setting.

\begin{proposition}\label{th:characterization}
Let $(\theta_1,\ldots,\theta_{N_\alpha})$, $N_\alpha\geq 0$ be the support points of $(D_{1\alpha},\ldots,D_{p\alpha})$, with $$D_{k\alpha}=\sum_{1\leq i,j\leq N_\alpha}n_{ijk} \delta_{(\theta_i,\theta_j)}.$$ The conditional distribution of $(W_{1\alpha},\ldots,W_{p\alpha})$ given $(D_{1\alpha},\ldots,D_{p\alpha})$ is equivalent to the distribution of
\begin{equation}
\left (\widetilde{W}_{1} +\sum_{i=1}^{N_\alpha} w_{i1}\delta_{\theta_i} ,\ldots,\widetilde{W}_p +\sum_{i=1}^{N_\alpha} w_{ip}\delta_{\theta_i} \right )
\end{equation}
where $(\widetilde{W}_{1},\ldots,\widetilde{W}_{p})$ is a vector of discrete random measures, which depends on $(D_{1\alpha},\ldots,D_{p\alpha})$ only through the total masses $w_{\ast k}=\widetilde{W}_{k}([0,\alpha])$.

The set of weights $(w_{ik})_{i=1,\ldots,N_\alpha;k=1,\ldots,p}$ and $(w_{\ast k})_{k=1,\ldots,p}$ are dependent, with joint conditional distribution
\begin{align}
&p((w_{1k},\ldots,w_{N_\alpha k},w_{\ast  k})_{k=1,\ldots,p}|(n_{ijk})_{1\leq i,j\leq N_\alpha{;}k=1,\ldots,p},\phi,\alpha)\nonumber\\
&~~~~\propto \left [\prod_{i=1}^{N_\alpha} \prod_{k=1}^{p} w_{ik}^{m_{ik}}\right ] e^{-\sum_{k=1}^p( w_{\ast k}+\sum_{i=1}^{N_\alpha} w_{ik} )^2}\left [\prod_{i=1}^{N_\alpha} \rho(w_{i1},\ldots,w_{ip};\phi)\right ]\alpha^{N_\alpha}g_{\ast\alpha}(w_{\ast 1},\ldots,w_{\ast p};\phi)\label{eq:jointposterior}
\end{align}
where $m_{ik}=\sum_{j=1}^{N_\alpha} n_{ijk}+n_{jik}$ and $g_{\ast\alpha}(w_{\ast 1},\ldots,w_{\ast p};\phi)$ is the probability density function of the random vector $(W_{1}([0,\alpha]),\ldots,W_{p}([0,\alpha]))$.

\end{proposition}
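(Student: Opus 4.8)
The plan is to compute the conditional law of the vector of CRMs directly from Bayes' rule, viewing the points $(w_{i1},\ldots,w_{ip},\theta_i)$ as a Poisson process on $\mathbb R_+^p\times[0,\alpha]$ with mean measure $\rho(dw_1,\ldots,dw_p)\lambda(d\theta)$ and the counts $n_{ijk}$ as conditionally Poisson observations, extending the univariate argument of Theorem~12 in \cite{Caron2014} to the vector-valued setting. First I would write down the likelihood: conditionally on $(W_1,\ldots,W_p)$ the $n_{ijk}$ are independent with $n_{ijk}\sim\Poi(w_{ik}w_{jk})$ by the atomic-intensity Poisson construction in \eqref{eq:Zhierarchy}, so the conditional law of $(D_{1\alpha},\ldots,D_{p\alpha})$ factorizes as $\prod_{k=1}^p\prod_{i,j}\frac{(w_{ik}w_{jk})^{n_{ijk}}}{n_{ijk}!}e^{-w_{ik}w_{jk}}$. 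Summing the exponents collapses the exponential into the total-mass factor $e^{-\sum_{k=1}^p(W_k([0,\alpha]))^2}$, and collecting the powers of each $w_{ik}$ gives $\prod_i\prod_k w_{ik}^{m_{ik}}$ with $m_{ik}=\sum_j(n_{ijk}+n_{jik})$, so that up to a constant depending only on the data the likelihood equals $[\prod_i\prod_k w_{ik}^{m_{ik}}]\,e^{-\sum_k(W_k([0,\alpha]))^2}$.

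Second, I would observe that this likelihood vanishes unless every support point $\theta_i$ of $(D_{1\alpha},\ldots,D_{p\alpha})$ is an atom of the vector of CRMs, which forces atoms at $\theta_1,\ldots,\theta_{N_\alpha}$. Writing $\widetilde W_k:=W_k-\sum_{i=1}^{N_\alpha}w_{ik}\delta_{\theta_i}$ for the complementary (unobserved) measure, every unobserved atom carries no edge and therefore enters the likelihood only through $W_k([0,\alpha])=\sum_{i=1}^{N_\alpha}w_{ik}+\widetilde W_k([0,\alpha])$, that is, only through the aggregate masses $w_{\ast k}=\widetilde W_k([0,\alpha])$. This is precisely the structural statement that the conditional law of $(W_{1\alpha},\ldots,W_{p\alpha})$ depends on the data through $(\widetilde W_1,\ldots,\widetilde W_p)$ only via its total masses.

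Third, I would combine likelihood and prior via the multivariate Mecke (Slivnyak) formula for the Poisson process with mean measure $\rho(dw_1,\ldots,dw_p)\lambda(d\theta)$. Extracting the $N_\alpha$ observed atoms produces the factor $\prod_{i=1}^{N_\alpha}\rho(w_{i1},\ldots,w_{ip};\phi)$ together with $\alpha^{N_\alpha}$, arising from integrating each of the $N_\alpha$ unidentifiable locations over $\lambda([0,\alpha])=\alpha$, while Slivnyak's invariance identifies the leftover configuration as a Poisson process with the same law, so that $\widetilde W_k([0,\alpha])$ inherits the prior total-mass density $g_{\ast\alpha}$. Assembling these pieces with the likelihood factor $e^{-\sum_k(w_{\ast k}+\sum_i w_{ik})^2}$ and normalizing yields exactly \eqref{eq:jointposterior}.

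The main obstacle I anticipate is making this disintegration rigorous: one conditions a Poisson process on having atoms at the prescribed measure-zero locations $\theta_i$ while simultaneously reweighting by a likelihood that couples all atoms through the squared total masses $(W_k([0,\alpha]))^2$. This coupling is what blocks a naive factorization, and the crux is to argue that, because the unobserved atoms appear only through their aggregate mass, the interaction term can be transported as the single factor $e^{-\sum_k(w_{\ast k}+\sum_i w_{ik})^2}$ while Slivnyak's theorem still delivers the clean density $g_{\ast\alpha}$. The infinite-activity regime, where there are infinitely many unobserved atoms, must then be handled through the restriction to $[0,\alpha]$ and condition \eqref{eq:conditionLevydef}, which guarantees that all the quantities involved are almost surely finite.
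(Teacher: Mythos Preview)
Your proposal is correct and follows precisely the route the paper indicates: the paper does not actually write out a proof but states that it ``can be straightforwardly adapted from that of \cite{Caron2014}, or from Proposition 5.2 of \cite{James2014},'' and your sketch is exactly that adaptation---computing the Poisson likelihood, noting that unobserved atoms enter only through their aggregate mass, and applying the Slivnyak--Mecke/Palm disintegration to extract the $N_\alpha$ observed atoms with density $\rho$ and location factor $\alpha^{N_\alpha}$ while the remainder retains the prior law with total-mass density $g_{\ast\alpha}$.
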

The proof can be straightforwardly adapted from that of \cite{Caron2014}, or from Proposition 5.2 of \cite{James2014} and is omitted here. It  builds on other posterior characterizations in Bayesian nonparametric models \citep{Prunster2002,James2002,James2005,James2009}.

\paragraph{Data augmentation.} Similarly to \cite{Caron2014}, we introduce latent count variables $\widetilde{n}%
_{ijk}=n_{ijk}+n_{jik}$ with
\begin{align}
(\widetilde{n}_{ij1},\ldots,\widetilde{n}_{ijp})|w,z&\sim\left\{
\begin{array}
[c]{ll}%
\delta_{(0,\ldots,0)} & \text{if }z_{ij}=0\\
\tPoi(2w_{i1}w_{j1},\ldots,2w_{ip}w_{jp}) & \text{if }%
z_{ij}=1\text{, }i\neq j
\end{array}
\right.\nonumber\\
\left(  \frac{\widetilde{n}_{ij1}}{2},\ldots,\frac{\widetilde{n}_{ijp}}%
{2}\right)  |w,z&\sim%
\begin{array}
[c]{ll}%
\tPoi(w_{i1}^{2},\ldots,w_{ip}^{2}) & \text{if }z_{ij}=1\text{, }i=j
\end{array}\label{eq:condlatent}
\end{align}
where $\tPoi(\lambda_{1},\ldots,\lambda_{p})$ is the multivariate
Poisson distribution truncated at zero, whose {pmf} is
\[
\tPoi(x_{1},\ldots x_{p};\lambda_{1},\ldots,\lambda_{p})=\frac
{\prod_{k=1}^{p}\text{Poisson}(x_{k};\lambda_{k})}{1-\exp(-\sum_{k=1}^{p}%
x_{k}\lambda_{k})}{\mathds 1}_{\left\{  \sum_{k=1}^{p}x_{k}>0\right\}  }.%
\]
One can sample from this distribution by first sampling $x=\sum_{k=1}^{p}%
x_{k}$ from a zero-truncated Poisson distribution with rate $\sum_{k=1}%
^{p}\lambda_{k}$ , and then $(x_{1},\ldots,x_{p})|(\lambda_{1},\ldots,\lambda_p),x\sim\Mult\left(  x,\left( \frac{\lambda_{1}}{\sum\lambda_{k}},\ldots
\frac{\lambda_{p}}{\sum\lambda_{k}}\right) \right) $.

\subsection{Markov chain Monte Carlo algorithm: General construction}

Using the data augmentation scheme together with the posterior characterization \eqref{eq:jointposterior}, we can derive the following MCMC sampler, which uses Metropolis-Hastings (MH) and Hamiltonian Monte Carlo (HMC) updates within a Gibbs sampler, and iterates as described in Algorithm~\ref{algo:MCMC}.

\begin{algorithm}
\caption{Markov chain Monte Carlo sampler for posterior inference{.}}
\label{algo:MCMC}

\flushleft At each iteration
\begin{enumerate}
\item Update $(w_{i1},\ldots,w_{ip})$, $i=1,\ldots {,}N_\alpha$ given the rest using MH or HMC{.}

\item Update hyperparameters $(\phi,\alpha)$ and total masses
$(w_{\ast 1},\ldots,w_{\ast p})$ given the rest using MH{.}

\item Update the latent variables given the rest using \eqref{eq:condlatent}{.}
\end{enumerate}
\end{algorithm}
In general, if the L\'evy intensity $\rho$ can be evaluated pointwise, one can use a MH update for step 1, but it would scale poorly with the number of nodes. Alternatively, if the L\'evy intensity $\rho$ is differentiable, one can use a Hamiltonian Monte Carlo update~\citep{Duane1987,Neal2011}.

The challenging part of the Algorithm~\ref{algo:MCMC} is Step 2. From Eq.~\eqref{eq:jointposterior} we have
\begin{align*}
&p((w_{\ast  k})_{k=1,\ldots,p},\phi,\alpha|\text{rest})\propto  p(\phi)p(\alpha)e^{-\sum_{k=1}^p( w_{\ast k}+\sum_{i=1}^{N_\alpha} w_{ik} )^2}\left [\prod_{i=1}^{N_\alpha} \rho(w_{i1},\ldots,w_{ip};\phi)\right ]\alpha^{N_\alpha}g_{\ast\alpha}(w_{\ast 1},\ldots,w_{\ast p};\phi).
\end{align*}
This conditional distribution is not of standard form and involves the multivariate pdf $g_{\ast\alpha}(w_{\ast 1},\ldots,w_{\ast p})$ of the random vector $(W_{1}([0,\alpha]),\ldots,W_{p}([0,\alpha]))$ for which there is typically no analytical expression. All is available is its Laplace transform, which is given by
\begin{align}
\mathbb{E}\left[  e^{-\sum_{k=1}^{p}%
t_{k}W_{k}([0,\alpha])}\right] = e^{-\alpha\psi(t_{1},\ldots,t_{p};\phi)}
\end{align}
where
\begin{align}
\psi(t_{1},\ldots,t_{p};\phi)   =\int_{{\mathbb R^p_+}}\left(  1-e^{-\sum_{k=1}^{p}t_{k}w_{k}}\right)
\rho(dw_{1},\ldots,dw_{p};\phi)\label{eq:Laplaceexponent}
\end{align}
is the multivariate Laplace exponent, which involves a $p$-dimensional integral. We propose to use a Metropolis-Hastings step, with proposal
\begin{align}
q(\widetilde w_{\ast 1:p},\widetilde \phi,\widetilde \alpha|w_{\ast 1:p}, \phi,\alpha)&=q(\widetilde w_{\ast 1:p}|w_{\ast 1:p}, \widetilde \phi,\widetilde \alpha)\times q(\widetilde \phi| \phi)\times q(\widetilde \alpha| \alpha,\widetilde\phi,w_{\ast 1:p})
\end{align}
where
\begin{align}
q(\widetilde \alpha| \alpha,\widetilde \phi,w_{\ast 1:p})&=\Gam(\widetilde \alpha;a_\alpha+N_\alpha, b_\alpha+\psi(\lambda_1,\ldots,\lambda_p;\widetilde\phi))
\end{align}
and the proposal for $w_{\ast 1:p}$ is an exponentially tilted version of $g_{\ast \alpha}$
\begin{align}
q((\widetilde w_{\ast k})_{k=1,\ldots,p}|(w_{\ast k})_{k=1,\ldots,p}, \widetilde \phi)&=\frac{e^{-\sum_{k=1}^p \lambda_k \widetilde w_{\ast k}}g_{\ast\widetilde\alpha}(\widetilde w_1,\ldots,\widetilde w_p;\widetilde \phi) }{e^{-{\widetilde\alpha \psi(\lambda_1,\ldots,\lambda_p;\widetilde \phi)}}}\label{eq:proposalwstar}
\end{align}
where $\lambda_k=w_{\ast k}+2\sum_{i=1}^{N_\alpha} w_{ik}$ and $q(\widetilde \phi| \phi)$ can be freely specified by the user. This leads to the following acceptance ratio
\begin{align*}
r&=\frac{p(\widetilde\phi)q(\phi|\widetilde\phi)}{p(\phi)q(\widetilde\phi|\phi)}\left[\prod_{i=1}^{N_\alpha}\frac{\rho(w_{i1},\ldots,w_{ip};\widetilde\phi)}{\rho(w_{i1},\ldots,w_{ip};\phi)}\right]
\left[ \frac{b_\alpha + \psi(\widetilde\lambda_1,\ldots,\widetilde\lambda_p;\phi)}{b_\alpha + \psi(\lambda_1,\ldots,\lambda_p;\widetilde\phi)} \right ]^{a_\alpha + N_\alpha} e^{\sum_{k=1}^p  [ w_{\ast k}^2 - \widetilde w_{\ast k}^2]}
\end{align*}
where $\widetilde \lambda_k=\widetilde w_{\ast k}+2\sum_{i=1}^{N_\alpha} w_{ik}$. This acceptance ratio involves evaluating the multivariate exponent \eqref{eq:Laplaceexponent}.\medskip

In the general case, the MCMC algorithm \ref{algo:MCMC} thus requires to be able to
\begin{enumerate}
\item[(a)] evaluate pointwise the L\'evy intensity $\rho$, and potentially differentiate it,
\item[(b)] evaluate pointwise the {Laplace} exponent \eqref{eq:Laplaceexponent} and
\item[(c)] sample from the exponentially tilted distribution \eqref{eq:proposalwstar}.
\end{enumerate}
Regarding point (c), the random variable with pdf \eqref{eq:proposalwstar} has the same distribution as the random vector $\left(W_1^\prime([0,\alpha]{)},\ldots,W_p^\prime([0,\alpha]) \right)$ where $(W_1^\prime,\ldots,W_p^\prime)\sim \CRM(\rho^\prime,\lambda)$ with $\rho^\prime$ is an exponentially tilted version of $\rho$
\begin{equation}
\rho^\prime(w_1,\ldots,w_p)=e^{-\sum_k \lambda_k w_k}\rho(w_1,\ldots,w_p).\
\end{equation}
By considering an approximate tilted intensity $\rho^{\varepsilon~\prime}(w_1,\ldots,w_p)$, one can approximately sample from \eqref{eq:proposalwstar} by simulating points from a Poisson process with mean measure $\alpha{\rho^{\varepsilon}~}^\prime(w_1,\ldots,w_p)$ and summing them up.

\subsection{Markov chain Monte Carlo algorithm: Compound CRMs}

The hierarchical construction of CCRM{s} enables to derive a certain number of simplifications in the algorithm described in the previous section. Using the construction $w_{ik}=\beta_{ik}w_{i0}$ where the points $(w_{i0},\beta_{i1},\ldots,\beta_{ip})_{i=1,2,\ldots}$ have L\'evy measure \eqref{eq:Levyw0beta}, we aim at approximating the posterior
\begin{equation}
p((w_{10},\ldots,w_{N_\alpha 0}),(\beta_{1 k},\ldots,\beta_{N_\alpha k},w_{\ast  k})_{k=1,\ldots,p},\phi,\alpha|(z_{ij})_{1\leq i,j\leq N_\alpha}).\label{eq:posteriorCCRM}
\end{equation}
 Conditional on the latent count variables defined in \eqref{eq:condlatent}, we have the following conditional characterization, similar to \eqref{eq:jointposterior}
\begin{align}
&p((w_{10},\ldots,w_{N_\alpha 0}),(\beta_{1k},\ldots,\beta_{N_\alpha k},w_{\ast  k})_{k=1,\ldots,p}|(n_{ijk})_{1\leq i,j\leq N_\alpha{;}k=1,\ldots,p},\phi,\alpha)\nonumber\\
&~~~~~~~~\propto \left [\prod_{i=1}^{N_\alpha} w_{i0}^{m_{i}} \prod_{k=1}^{p} \beta_{ik}^{m_{ik}}\right ] e^{-\sum_{k=1}^p( w_{\ast k}+\sum_{i=1}^{N_\alpha} w_{ik} )^2-\sum_{i=1}^{N_\alpha} w_{i0}(\sum_{k=1}^p \gamma_k\beta_{ik}) }\nonumber\\
&~~~~~~~~~~\times\left [\prod_{i=1}^{N_\alpha} f(\beta_{i1},\ldots,\beta_{ip};\phi)\rho_0(w_{i0};\phi)\right ]\alpha^{N_\alpha}g_{\ast\alpha}(w_{\ast 1},\ldots,w_{\ast p};\phi)\label{eq:jointposteriorCCRM}
\end{align}
where $m_i=\sum_{k=1}^p m_{ik}$ and $f$ and $\rho_0$ are densities of $F$ and $\rho_0$ with respect to the Lebesgue measure.\medskip

If $f$ and $\rho_0$ are differentiable, one can use a HMC update for Step 1 of Algorithm~\ref{algo:MCMC}. In particular, when they take the form \eqref{eq:Fproductgamma} and \eqref{eq:levyGGP}, we obtain the following simple expressions for the gradient:
\begin{align*}
\frac{\partial U(q)}{d(\log w_{i0})}&=m_{i}-\sigma-w_{i0}\left[\tau+2\sum_{k=1}^{p}\beta_{ik}\left(w_{\ast k}+\sum_{{j}=1}^{N_\alpha}w_{{j}0}\beta_{{j}k}\right)\right],~~~i=1,\ldots,N_\alpha,\\
 \frac{\partial U(q)}{d(\log\beta_{ik})}&=m_{ik}+a_{k}-\beta_{ik}\left[b_{k}+2w_{i0}\left(w_{\ast k}+\sum_{{j}=1}^{N_\alpha}w_{{j}0}\beta_{{j}k}\right)\right],~~~i=1,\ldots,N_\alpha,~~~k=1,\ldots,p,
\end{align*}
where $U(q) = \log p\left(q\vert\text{rest}\right)$ with $q=\left(\log w_{i0},\log\beta_{i1},\ldots,\log\beta_{ip}\right)_{i=1,\ldots,N_{\alpha}}$.

Regarding Step 2 of Algorithm~\ref{algo:MCMC}, the Laplace exponent for CCRM takes the simple form
\begin{align}
\psi(t_{1},\ldots,t_{p})  =\int_{0}^{\infty}\left[  M(-w_{0}\gamma_{1},\ldots,-w_{0}\gamma
_{p})-M\left(-w_{0}(t_{1}+\gamma_{1}),\ldots,-w_{0}(t_{p}+\gamma_{p})\right)\right]
\rho_{0}(dw_{0})
\end{align}
which only requires evaluating a one-dimensional integral, whatever the number $p$ of communities, and this can be done numerically. For the specific model defined by \eqref{eq:Fproductgamma} and \eqref{eq:levyGGP}, we obtain
\begin{align*}
\psi(t_{1},\ldots,t_{p})  =\frac{1}{\Gamma(1-\sigma)} \int_{0}^{\infty}\left[  1-\prod_{k=1}^{p}\left(  1+\frac{w_{0}t_{k}%
}{b_{k}+w_{0}\gamma_{k}}\right)  ^{-a_{k}}\right]  \left[  \prod_{k=1}%
^{p}\left(  1+\frac{w_{0}\gamma_{k}}{b_{k}}\right)  ^{-a_{k}}\right] w_0^{-1-\sigma}e^{-w_0\tau}dw_0.
\end{align*}
Finally, we need to sample total masses $(w_{\ast 1},\ldots,w_{\ast p})$ from \eqref{eq:proposalwstar}, and this can be done by simulating points $(w_{{i0}},\beta_{i1},\ldots,\beta_{ip})_{i=1,2,\ldots}$ from a Poisson process with exponentially tilted L\'evy intensity
\begin{equation}
\alpha e^{-w_{0}\sum_{k=1}^{p}(\gamma_{k}+\lambda_k)\beta_{k}}f(\beta_{1},\ldots,\beta_{p}%
)\rho_{0}(w_{0})\label{eq:Levyw0betatilted}
\end{equation}
and summing up the weights $w_{\ast k}=\sum_{i=1,2,\ldots} w_{{i0}}\beta_{ik}$ for $k=1,\ldots,p$. For infinite-activity CRMs, this is not feasible, and we suggest to resort to the approximation of \cite{Cohen2007}. More precisely, we write $$(w_{\ast 1},\ldots,w_{\ast p}) =X_\varepsilon + X^{\varepsilon}$$ where the random vectors $X_\varepsilon\in\mathbb R_+^p$ and $X^{\varepsilon}\in\mathbb R_+^p$ are defined as  $X_\varepsilon=\sum_{i|w_{{i0}}<\varepsilon} w_{{i0}}(\beta_{i1},\ldots,\beta_{ip})$ and $X^\varepsilon=\sum_{i|w_{{i0}}>\varepsilon} w_{{i0}}(\beta_{i1},\ldots,\beta_{ip})$. We can sample a realization of the random vector $X^{\varepsilon}$ exactly by simulating the points of a Poisson process with mean intensity
\begin{equation}
\alpha e^{-w_{0}\sum_{k=1}^{p}(\gamma_{k}+\lambda_k)\beta_{k}}f(\beta_{1},\ldots,\beta_{p}%
)\rho_{0}(w_{0}){\mathds 1}_{w_0>\varepsilon}
\end{equation}
See Section~\ref{sec:simulation} and Appendix~\ref{sec:app:simujumps} for details. The positive random vector $X_\varepsilon$ is approximated by a truncated Gaussian random vector with mean $\mu_{\varepsilon}$ and variance $\Sigma_{\varepsilon}$ such that
\begin{align*}
\mu_{\varepsilon}  &  =\alpha\int_{\mathbb{R}_{+}^{p}}w_{1:p}\rho_{\varepsilon
}(dw_{1},\ldots,dw_{p})\\
\Sigma_{\varepsilon}  &  =\alpha\int_{\mathbb{R}_{+}^{p}}w_{1:p}w_{1:p}^{T}\rho
_{\varepsilon}(dw_{1},\ldots,dw_{p})
\end{align*}
where
\[
\rho_{\varepsilon}(dw_{1},\ldots,dw_{p})=e^{-\sum_{k=1}^{p}(\gamma_{k}+\lambda_k)w_{k}}%
\int_{0}^{\varepsilon}w_{0}^{-p}F\left(  \frac{dw_{1}}{w_{0}},\ldots
,\frac{dw_{p}}{w_{0}}\right)  \rho_{0}(dw_{0}).
\]
Note that $\mu_{\varepsilon}$ and $\Sigma_{\varepsilon}$ can both be expressed as one-dimensional integrals using the gradient and Hessian of the moment generating function $M$ of $F$. Theorem~\ref{th:gaussianapprox} {in Appendix~\ref{sec:app:gaussianapprox}}, which is an adaptation of the results of \cite{Cohen2007} to CCRM, gives the conditions on the parameters of CCRM under which
 \[
\Sigma_{\varepsilon}^{-1/2}(X_{\varepsilon}-\mu_{\varepsilon}%
)\overset{d}{\rightarrow}\mathcal{N}(0,I_{p}) {~\text{as}~\varepsilon\rightarrow 0}
\]
and thus the approximation is asymptotically valid. The Gaussian approximation is in particular asymptotically valid for the CCRM defined by \eqref{eq:Fproductgamma} and \eqref{eq:levyGGP} when $\sigma\in(0,1)$, hence is valid for all infinite-activity cases except $\sigma=0$.

Note that due to the Gaussian approximation in the proposal distribution for $(w_{\ast\alpha})$, {A}lgorithm~\ref{algo:MCMC} does not actually admit the posterior distribution \eqref{eq:posteriorCCRM} as invariant distribution, and is an approximation of an exact MCMC algorithm targeting this distribution. We observe in the experimental section that this approximation provides very reasonable results for the examples considered.

\section{Experiments}
\label{sec:experiments}
\subsection{Simulated data}
\label{sub:Simulated-data}
We first study the convergence of the MCMC algorithm on synthetic
data simulated from the CCRM based graph model described in Section~\ref{sec:model}  where $F$ and $\rho_0$ take the form \eqref{eq:Fproductgamma} and \eqref{eq:levyGGP}. We generate an undirected
graph with $p=2$ communities and parameters $\alpha=200$, $\sigma=0.2$,
$\tau=1$, $b_{k}=b=\frac{1}{p}$, $a_{k}=a=0.2$ and $\gamma_{k}=\gamma=0$.
The sampled graph has $1121$ nodes and $\num{6090}$ edges. For
the inference, we consider that $b$ and $\gamma$ are known and we
assume a vague prior $\mbox{Gamma}(0.01,0.01)$ on the unknown parameters
$\alpha$ and $\phi=(1-\sigma,\tau,a)$. We run $3$ parallel MCMC
chains with different initial values. Each chain starts with $\num{10000}$
iterations using our model with only one community where the scores
$\beta$ are fixed to $1$, which is equivalent to the model of \citet{Caron2014}. We then run $\num{200000}$ iterations
using our model with $p$ communities. We use $\varepsilon=10^{-3}$ as
a truncation level for simulating $w_{\ast1:p}$ and $L=10$ leapfrog
steps for the HMC. The stepsizes of both the HMC and the random walk
MH on $\left(\log(1-\sigma),\log\tau,\log a\right)$ are adapted during
the first $\num{50000}$ iterations so as to target acceptance ratios
of $0.65$ and $0.23$ respectively. The computations take around $1$h$10$ using Matlab on
a standard desktop computer.
Trace plots of the parameters $\log\alpha$, $\sigma$, $\tau$, $a$
and $\overline{w}_{\ast}=\frac{1}{p}\sum_{k=1}^{p}w_{\ast k}$ and
histograms based on the last $\num{50000}$ iterations are given in
Figure~\ref{fig:Traces-1}. Posterior samples clearly converge around the sampled value. Aiming to study further the rate of convergence of our algorithm, we explore the impact of a lower threshold value $\varepsilon $ and a higher number of iterations. Firstly, decreasing the threshold $\varepsilon$ to a value $ \ll 10^{-3}$ does not lead to any noticeable change in the MCMC histograms, suggesting that the target distribution of our approximate MCMC is very close to the posterior distribution of interest. Similarly, our approximate MCMC posterior obtained from running the algorithm for $10^6$ iterations is very close to the one obtained from $200000$ iterations.

\begin{figure}[th]
\begin{centering}
\subfloat[$\log\alpha$]{\includegraphics[width=0.25\textwidth]{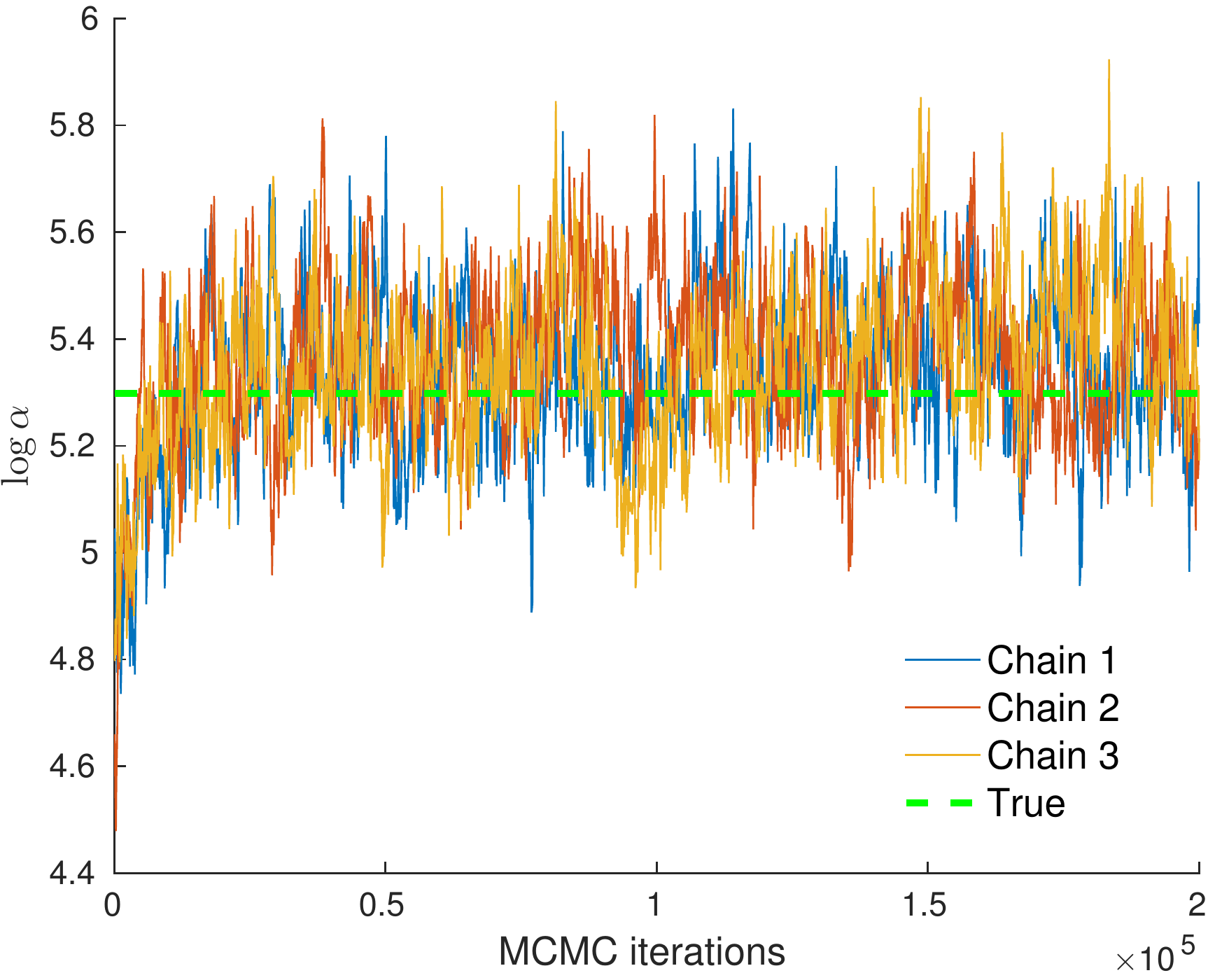}\includegraphics[width=0.25\textwidth]{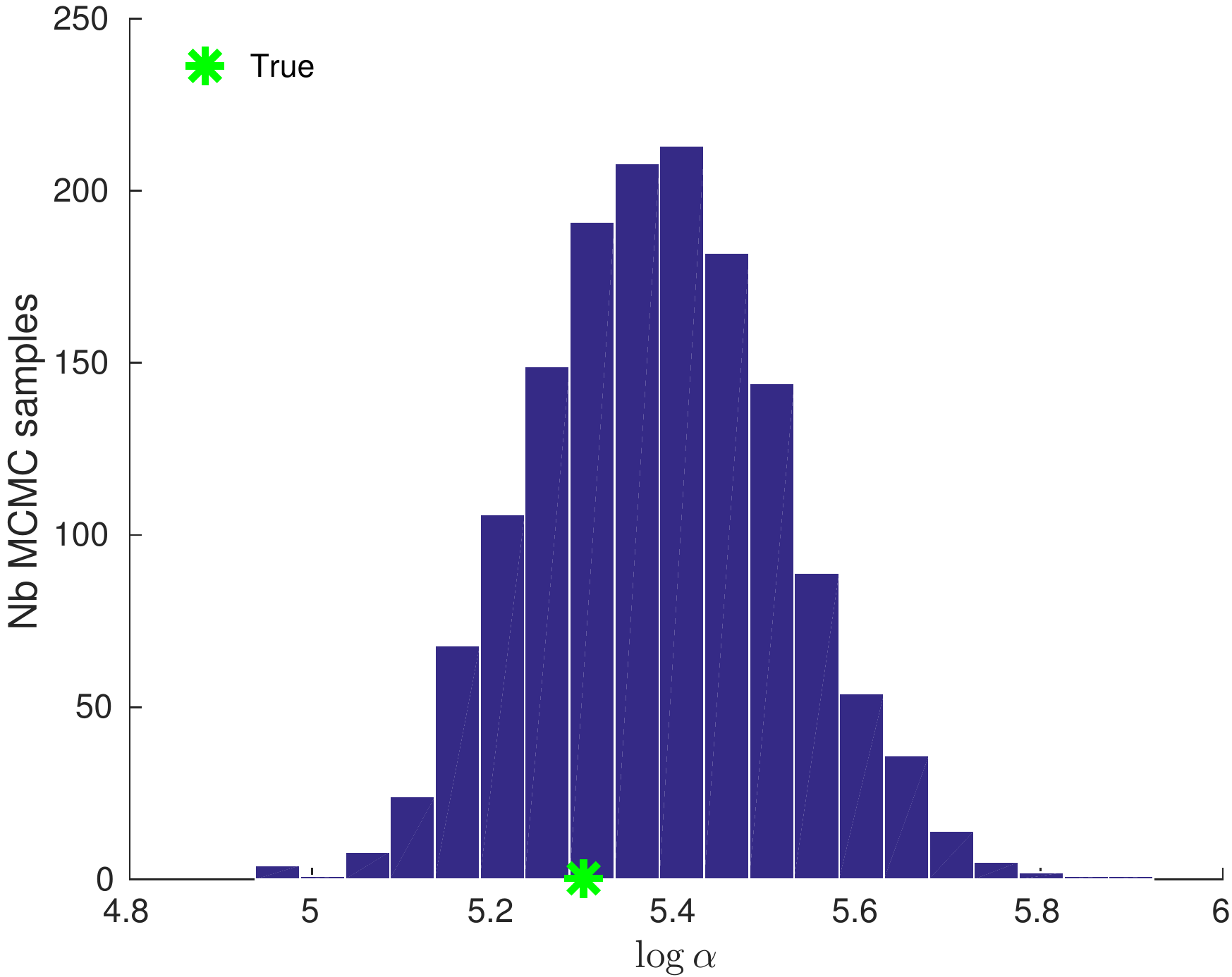}}\subfloat[$\sigma$]{\includegraphics[width=0.25\textwidth]{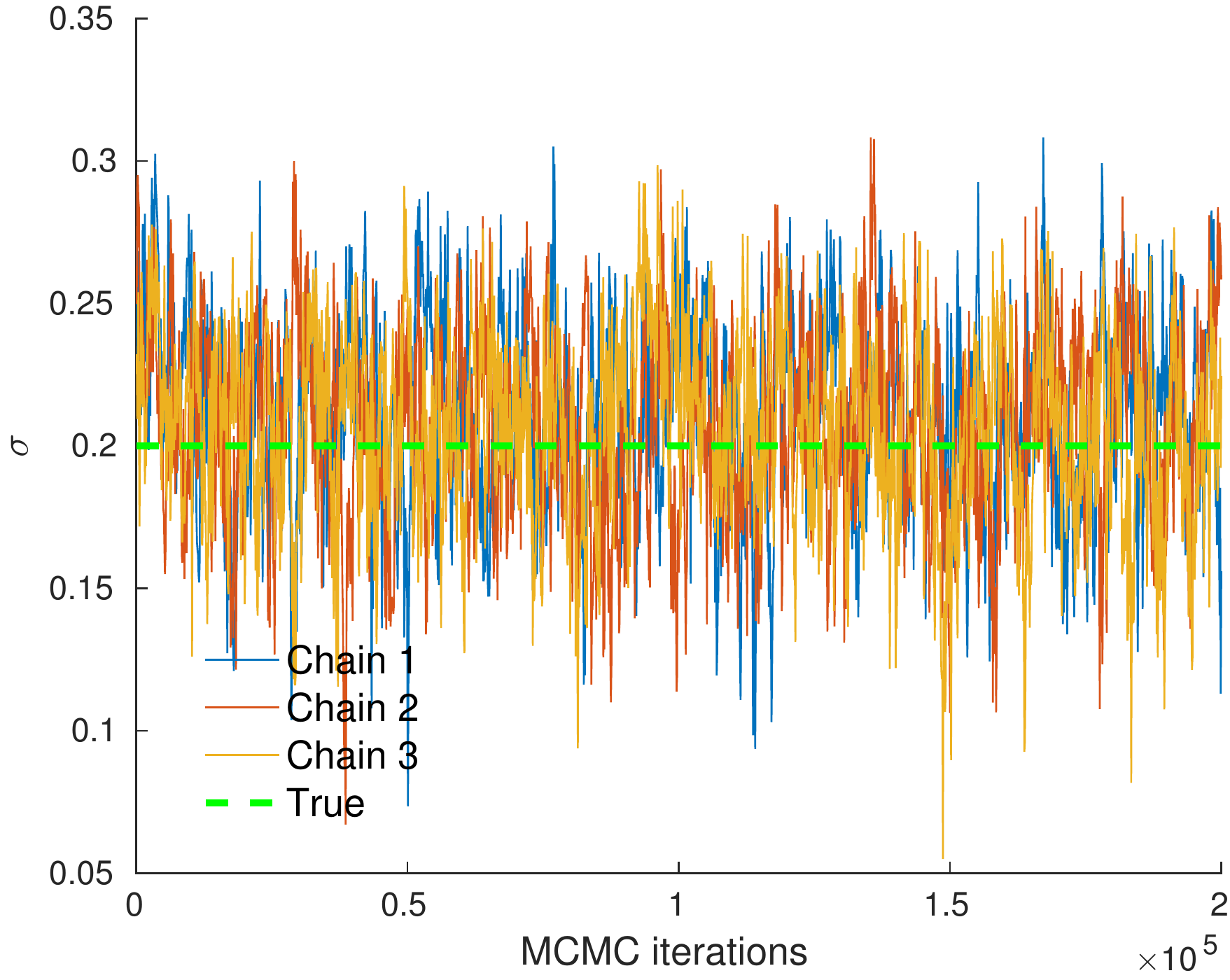}\includegraphics[width=0.25\textwidth]{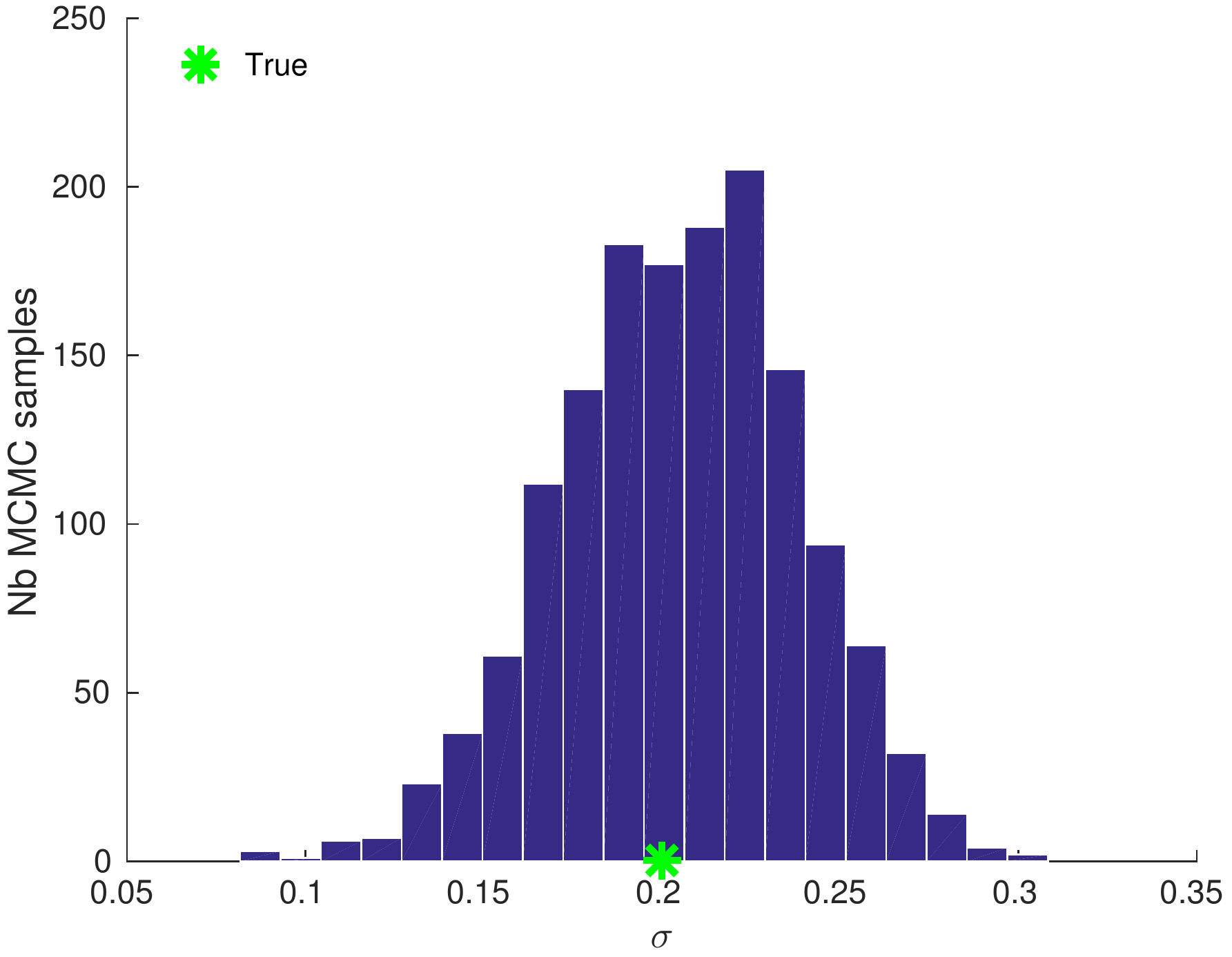}}
\par\end{centering}

\begin{centering}
\subfloat[$\tau$]{\includegraphics[width=0.25\textwidth]{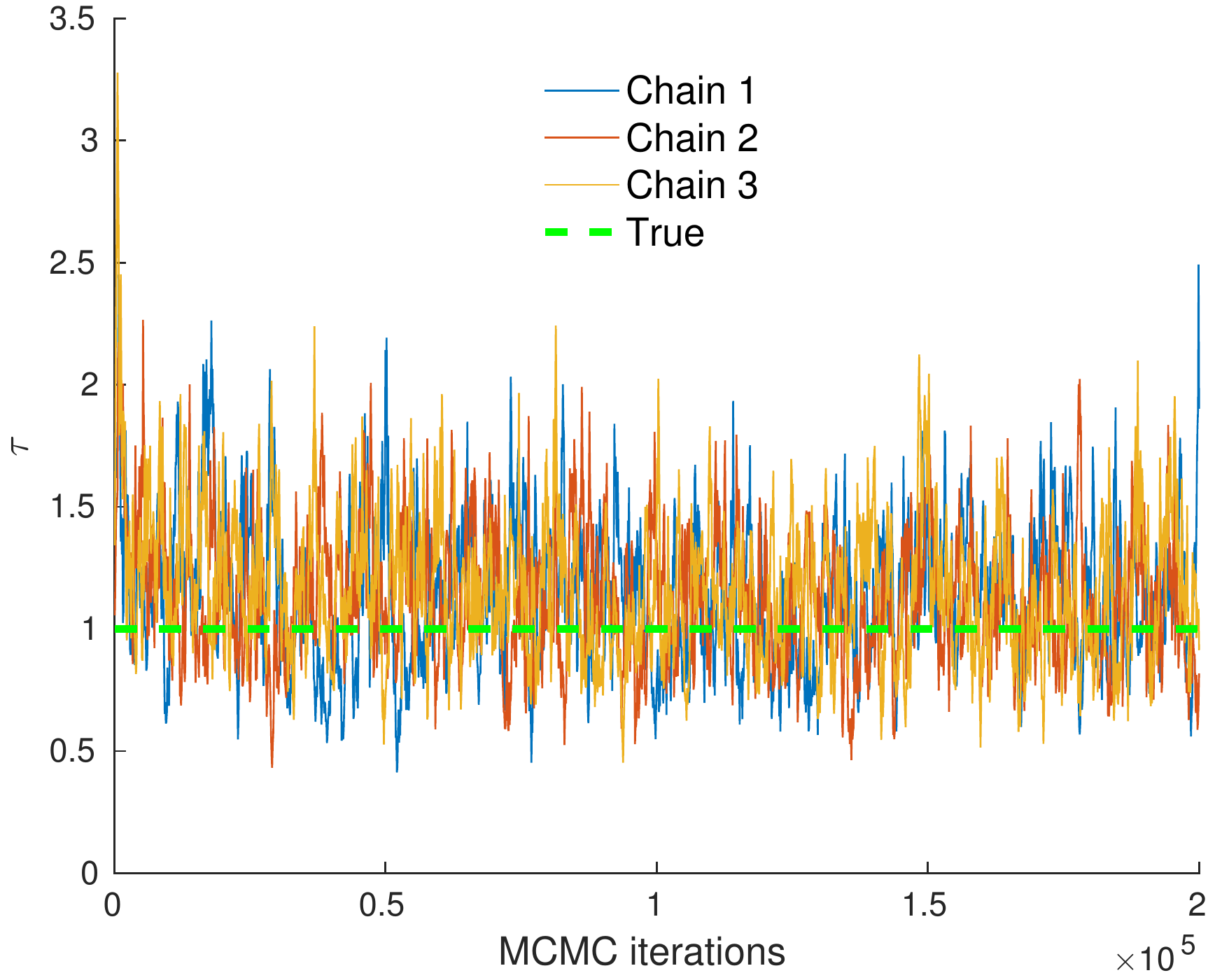}\includegraphics[width=0.25\textwidth]{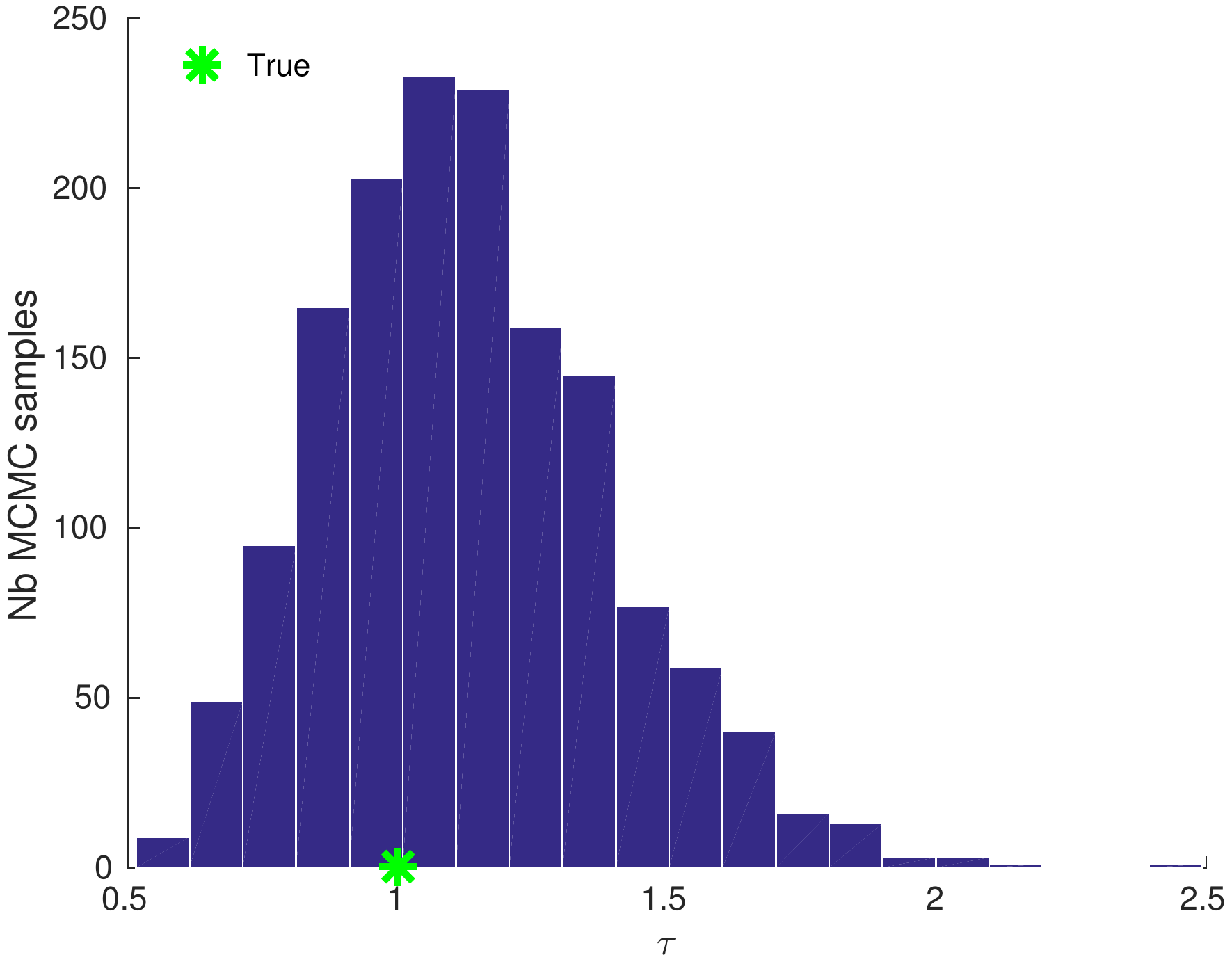}}\subfloat[$a$]{\includegraphics[width=0.25\textwidth]{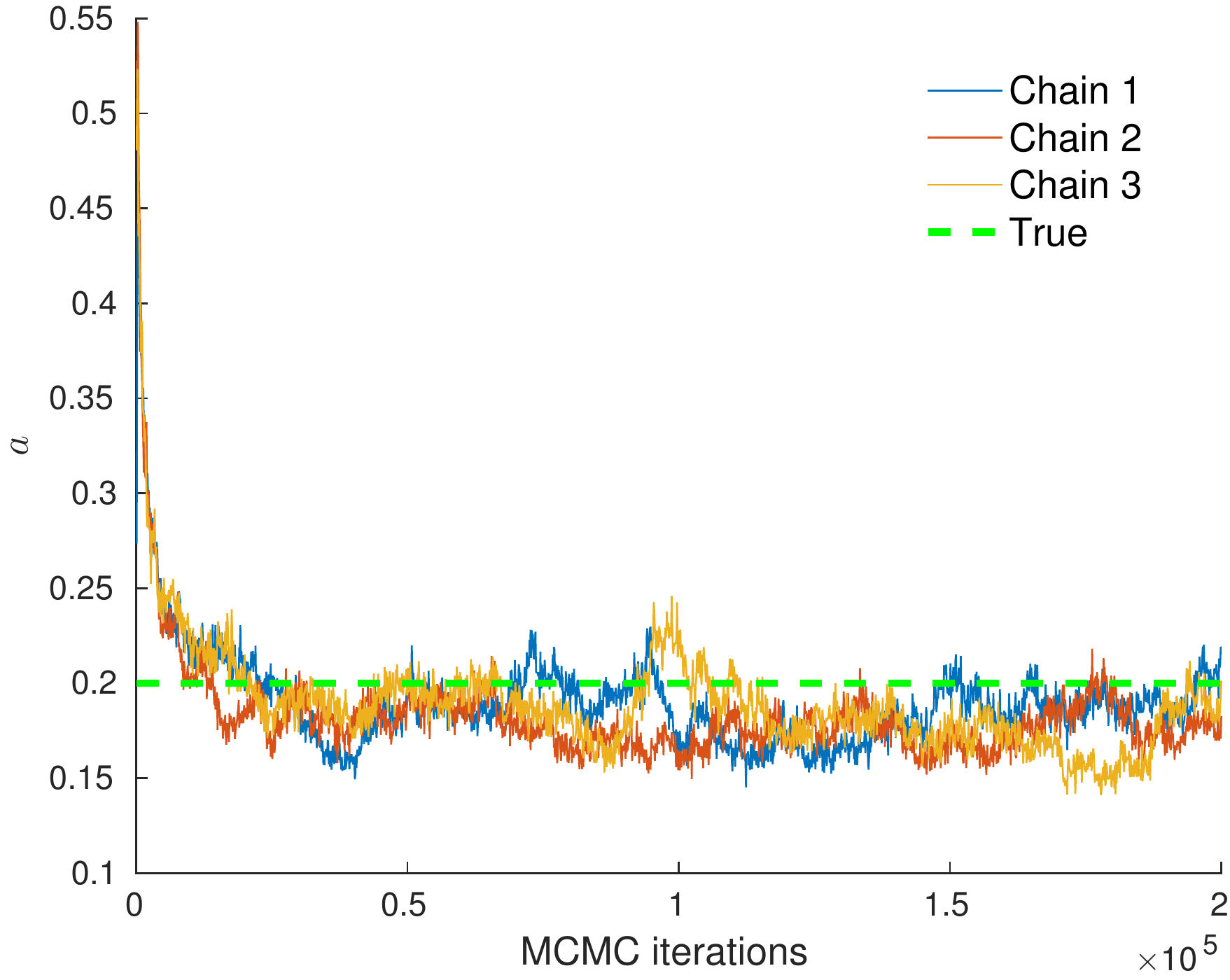}\includegraphics[width=0.25\textwidth]{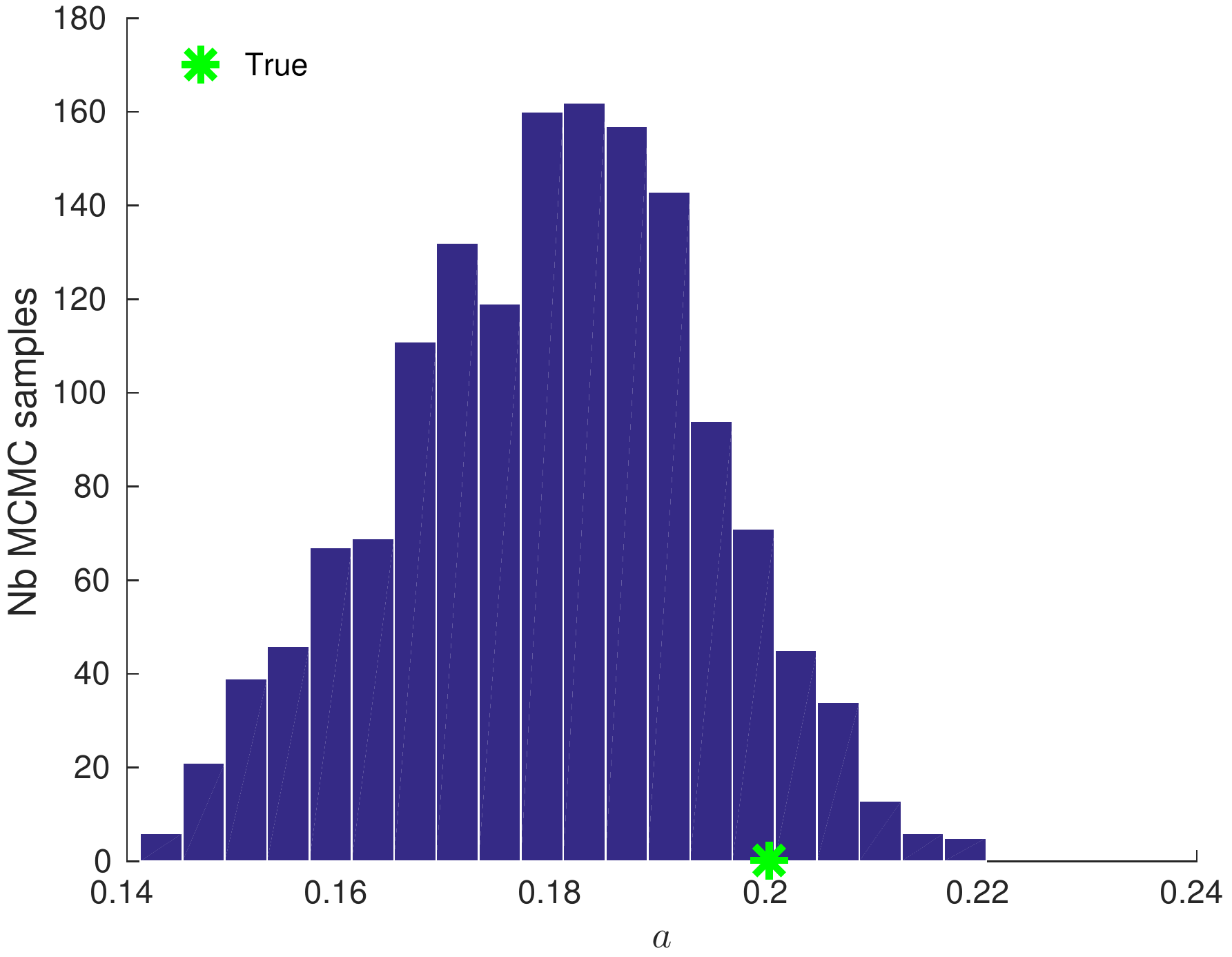}}
\par\end{centering}

\begin{centering}
\subfloat[$\overline{w}_{\ast}$]{\includegraphics[width=0.25\textwidth]{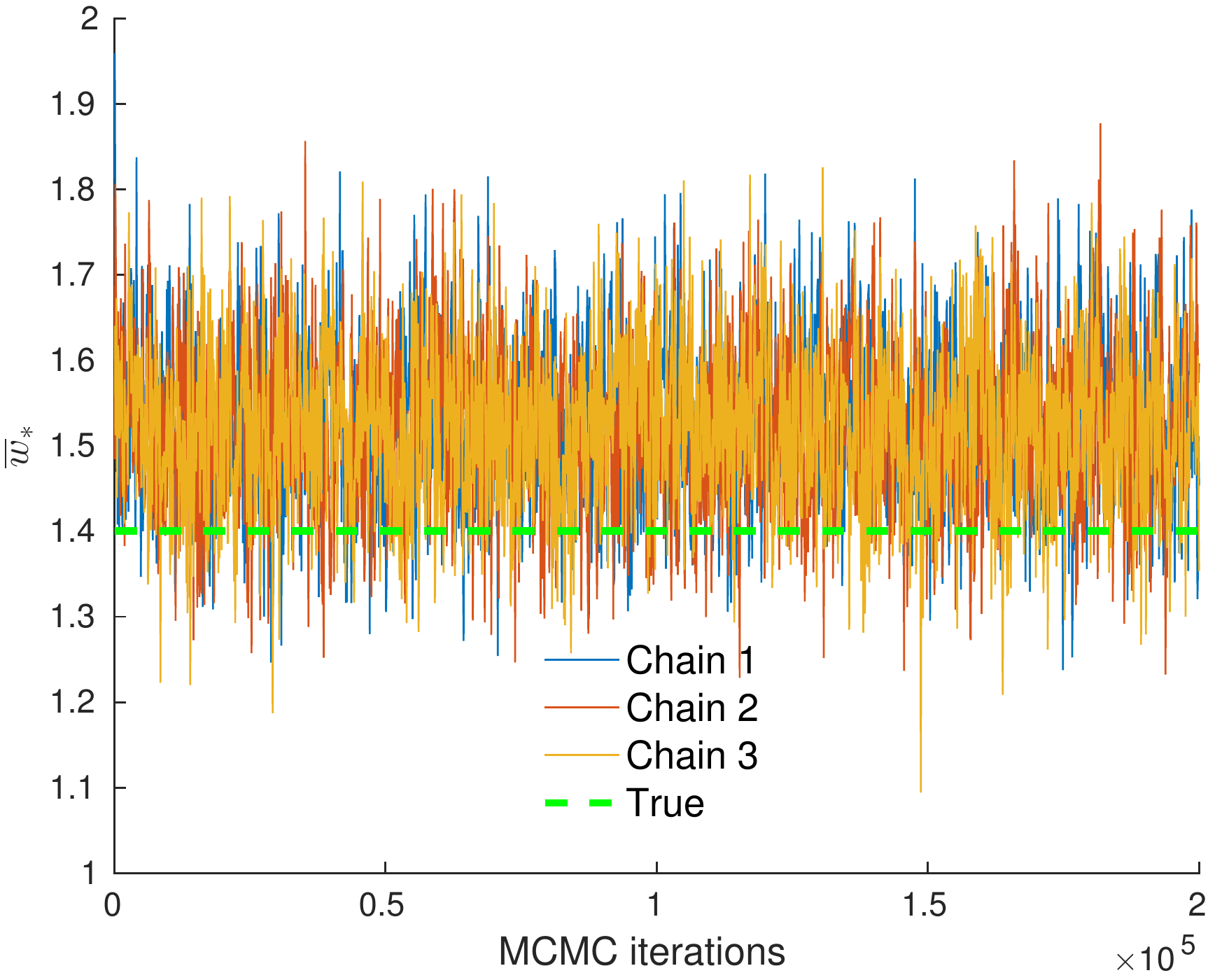}\includegraphics[width=0.25\textwidth]{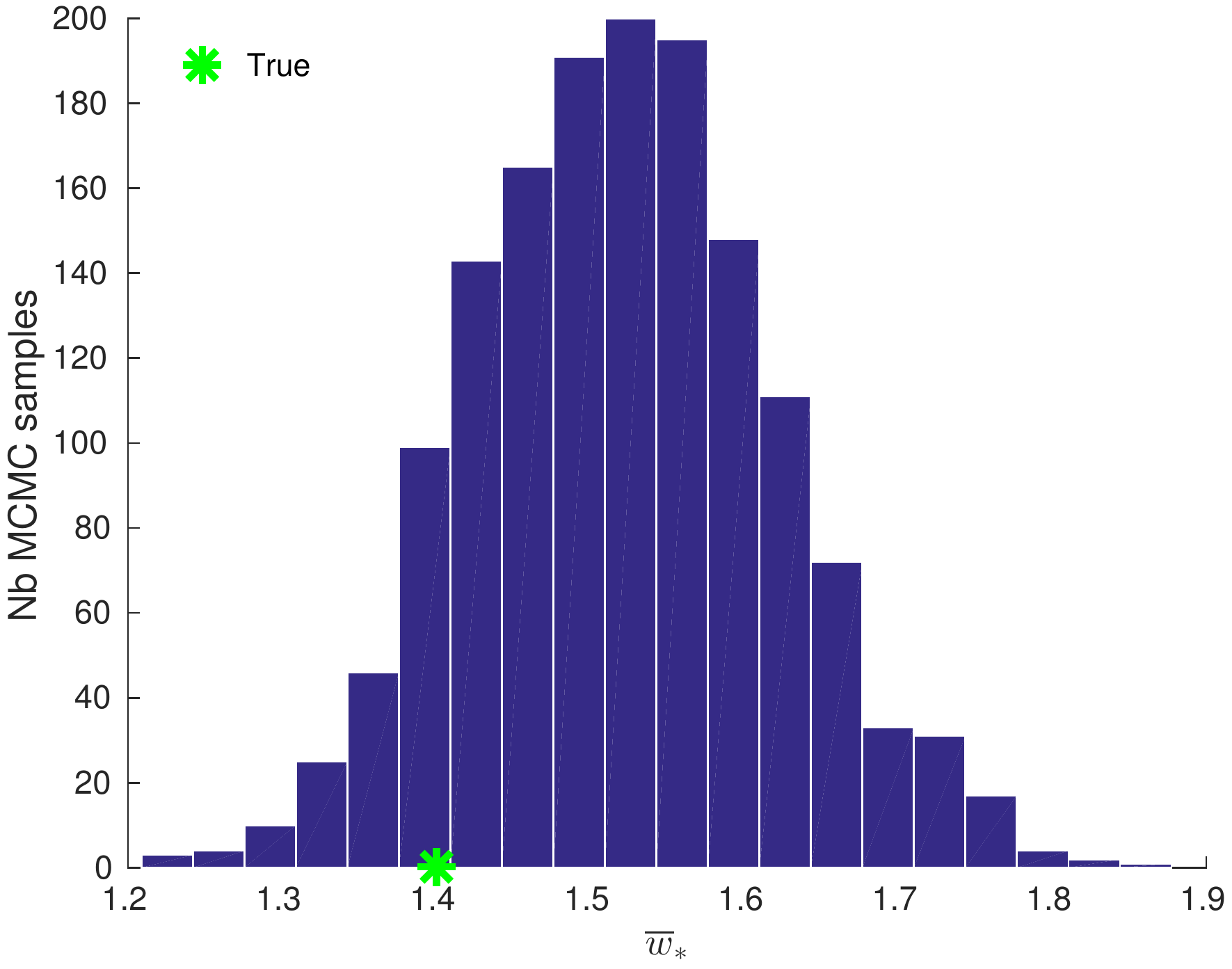}}
\par\end{centering}

\caption{\label{fig:Traces-1}MCMC trace plots (left) and histograms (right)
of parameters (a) $\log\alpha$, (b) $\sigma$, (c) $\tau$, (d) $a$
and (e) $\overline{w}_{\ast}$ for a graph generated with parameters
$p=2$, $\alpha=200$, $\sigma=0.2$, $\tau=1$, $b=\frac{1}{p}$,
$a=0.2$ and $\gamma=0$.}
\end{figure}

Our model is able to accurately recover the mean parameters
of both low and high degree nodes and to provide reasonable credible intervals, as shown in Figure~\ref{fig:Posterior-credible-intervals}(a-b). By generating $\num{5000}$ graphs from the posterior predictive
we assess that our model fits the empirical power-law degree distribution
of the sparse generated graph as shown in Figure~\ref{fig:Posterior-credible-intervals}(c). We demonstrate the interest of our nonparametric approach by
comparing these results to the ones obtained with the parametric version
of our model. To achieve this, we fix $w_{\ast k}=0$ and force
the model to lie in the finite-activity domain by assuming $\sigma\in(-\infty,0{)}$ and
using the prior distribution $-\sigma\sim\mbox{Gamma}(0.01,0.01)$. Note that in this case, the model is equivalent to that of \cite{Zhou2015}.
As shown in Figure~\ref{fig:finite-activity}(a-b), the parametric
model is able to recover the mean parameters of nodes
with high degrees, and credible intervals are similar to that obtained with the full model; however, it fails to provide reasonable credible intervals for nodes with low degree. In addition,
as shown in Figure~\ref{fig:finite-activity}(c), the posterior predictive
degree distribution does not fit the data, illustrating the unability
of this parametric model to capture power-law behaviour.

\begin{figure}[th]
\begin{centering}
\subfigure[50 nodes with highest degrees]{\includegraphics[width=0.33\textwidth]{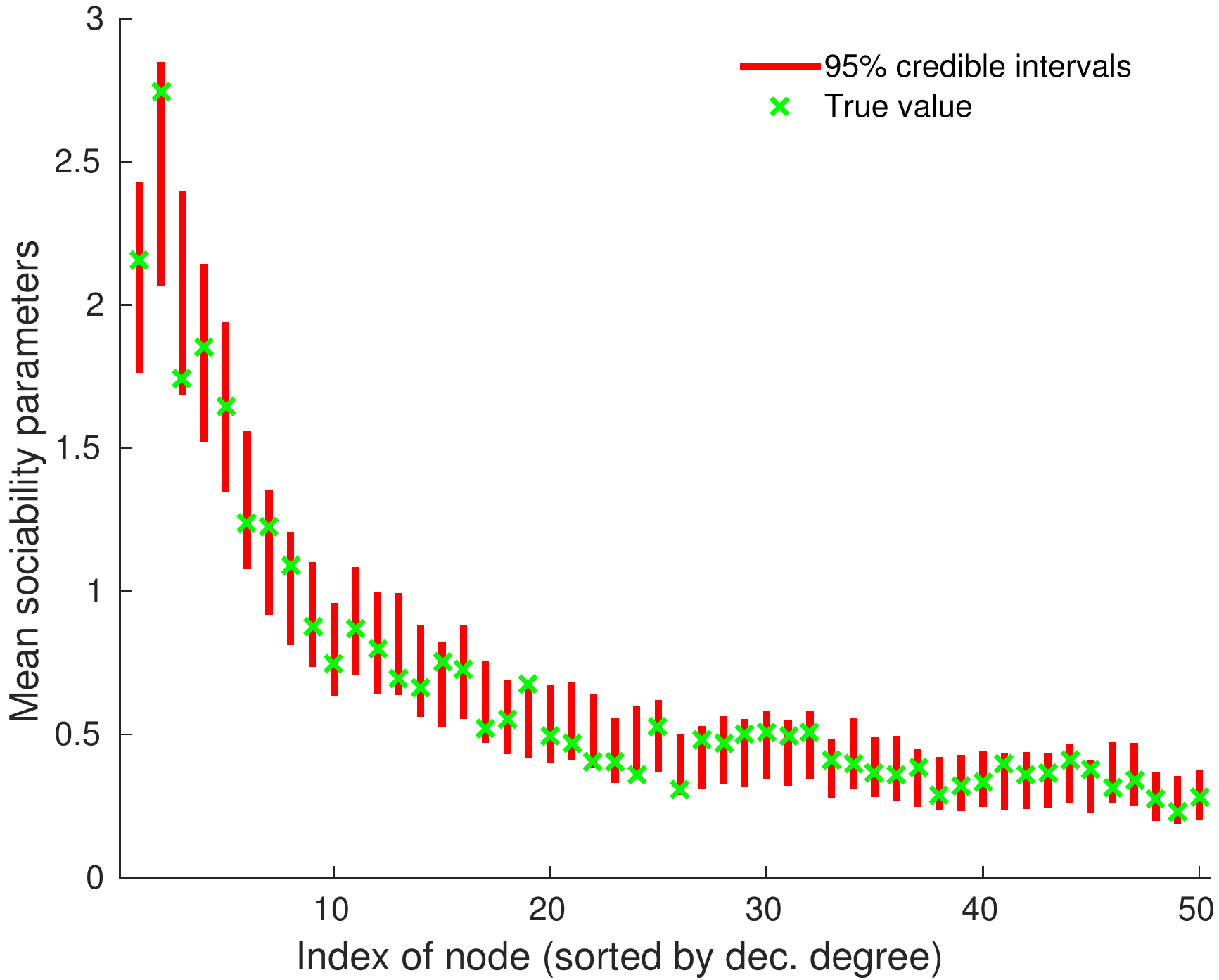}

}\subfigure[50 nodes with lowest degrees]{\includegraphics[width=0.33\textwidth]{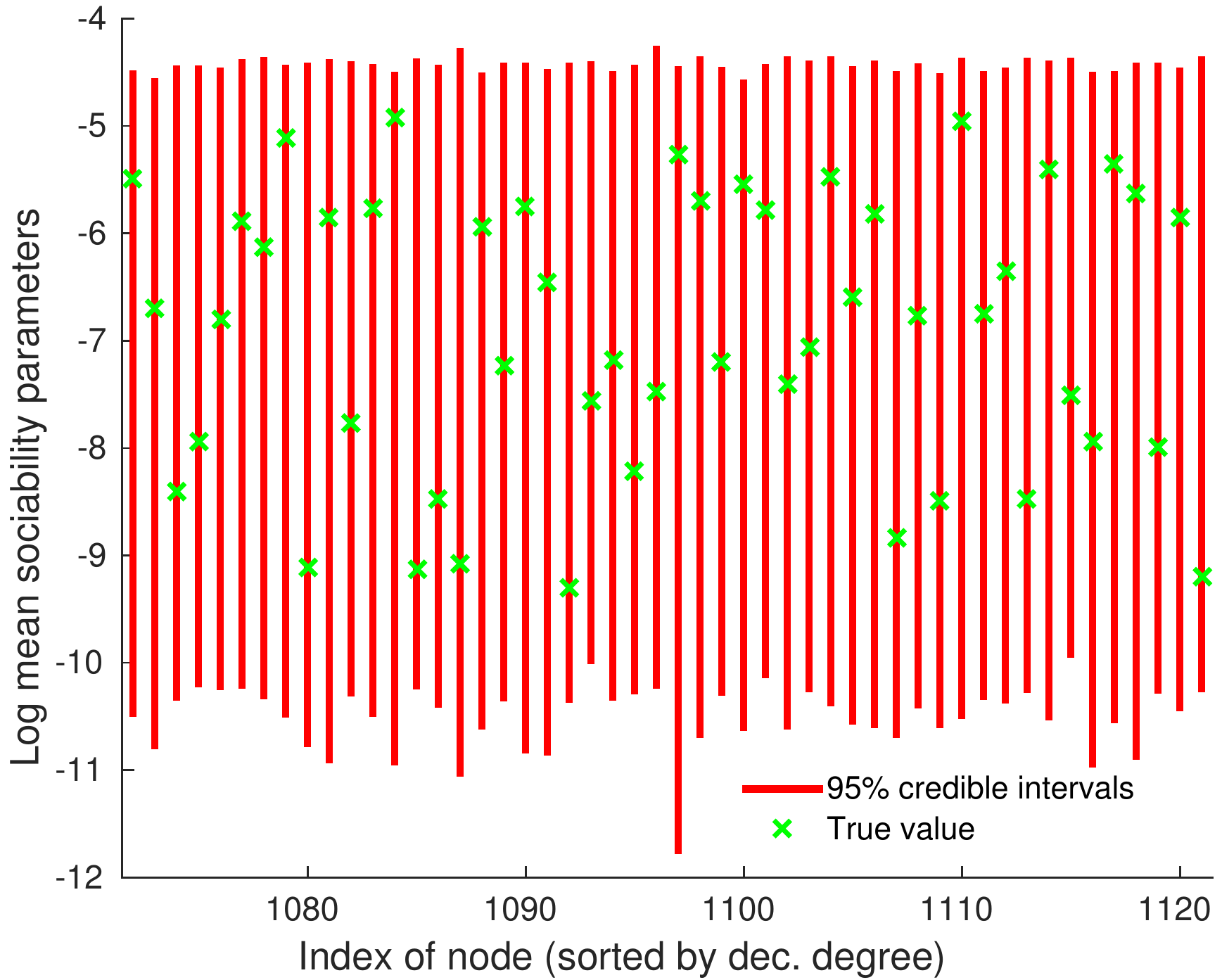}

}\subfigure[Degree distribution]{\includegraphics[width=0.33\textwidth]{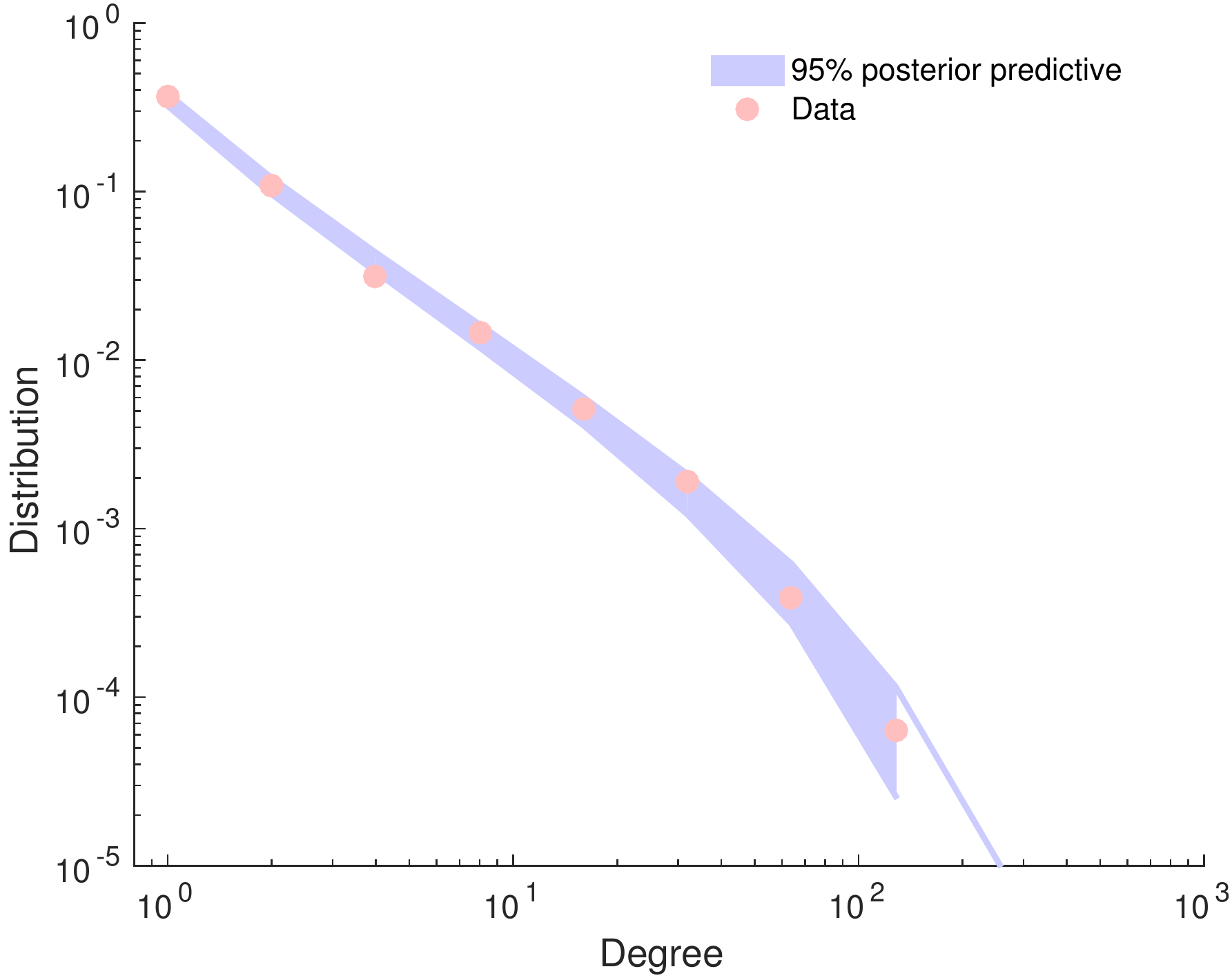}

}
\end{centering}

\caption{\label{fig:Posterior-credible-intervals}95\% posterior credible intervals
and true values of (a) the mean parameters $\overline{w}_{i}=\frac{1}{p}\sum_{k=1}^{p}w_{ik}$
of the 50 nodes with highest degrees and (b) the log mean
parameters $\log\overline{w}_{i}$ of the 50 nodes with lowest degrees.
(c) Empirical degree distribution and 95\% posterior predictive credible
interval. Results obtained for a graph generated with parameters $p=2$,
$\alpha=200$, $\sigma=0.2$, $\tau=1$, $b=\frac{1}{p}$, $a=0.2$
and $\gamma=0$.}
\end{figure}

\begin{figure}[th]
\begin{centering}
\subfigure[50 nodes with highest degrees]{\includegraphics[width=0.33\textwidth]{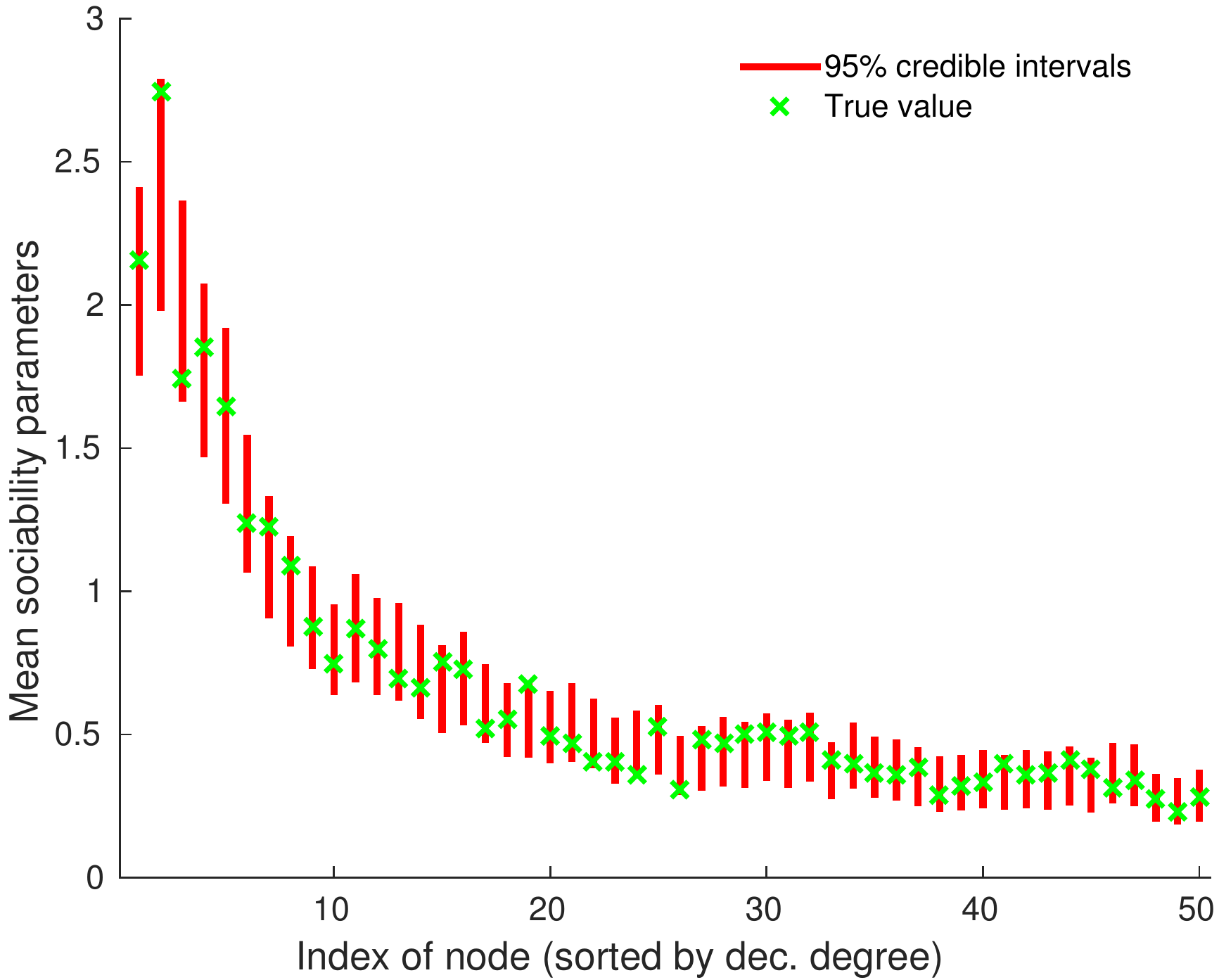}

}\subfigure[50 nodes with lowest degrees]{\includegraphics[width=0.33\textwidth]{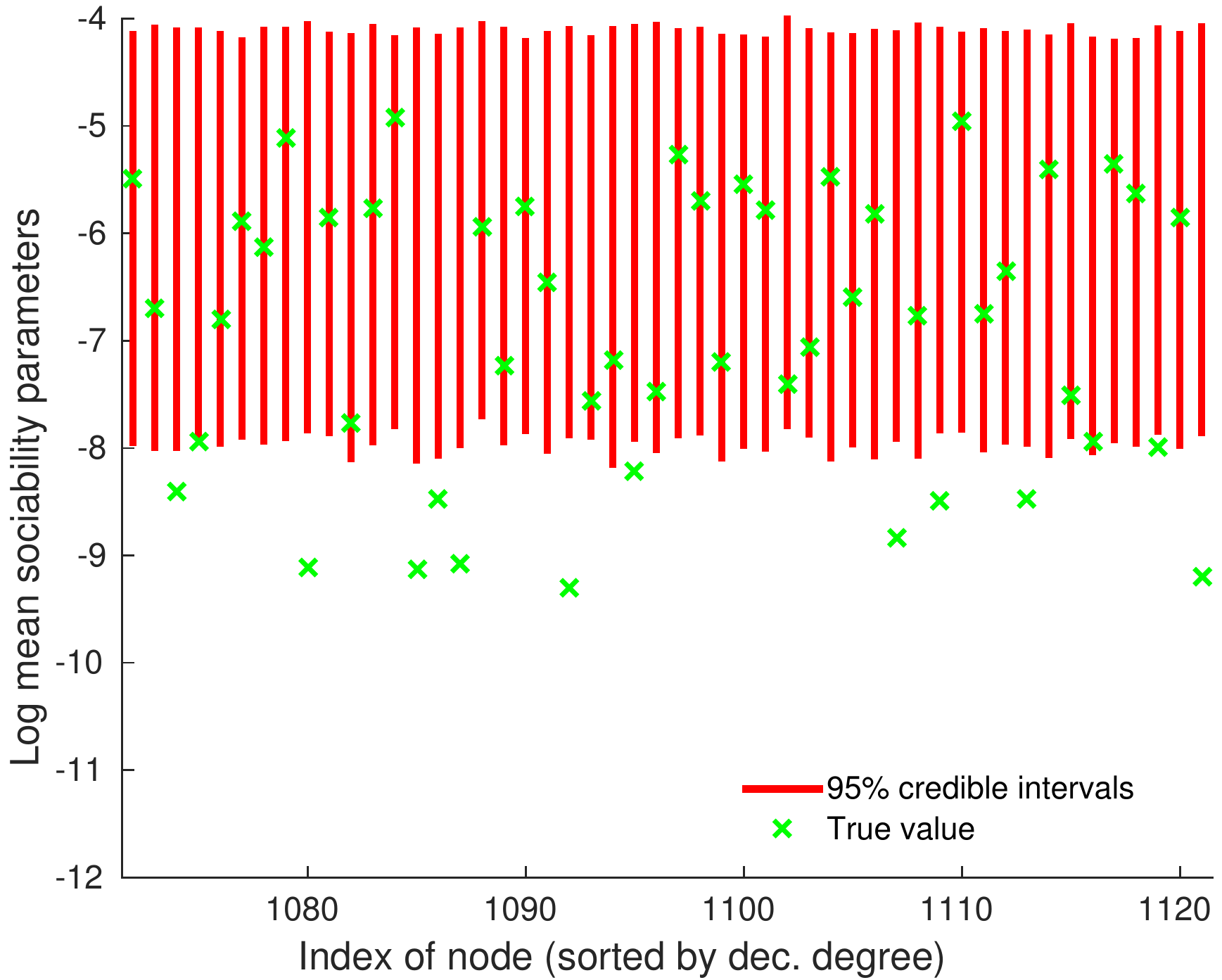}

}\subfigure[Degree distribution]{\includegraphics[width=0.33\textwidth]{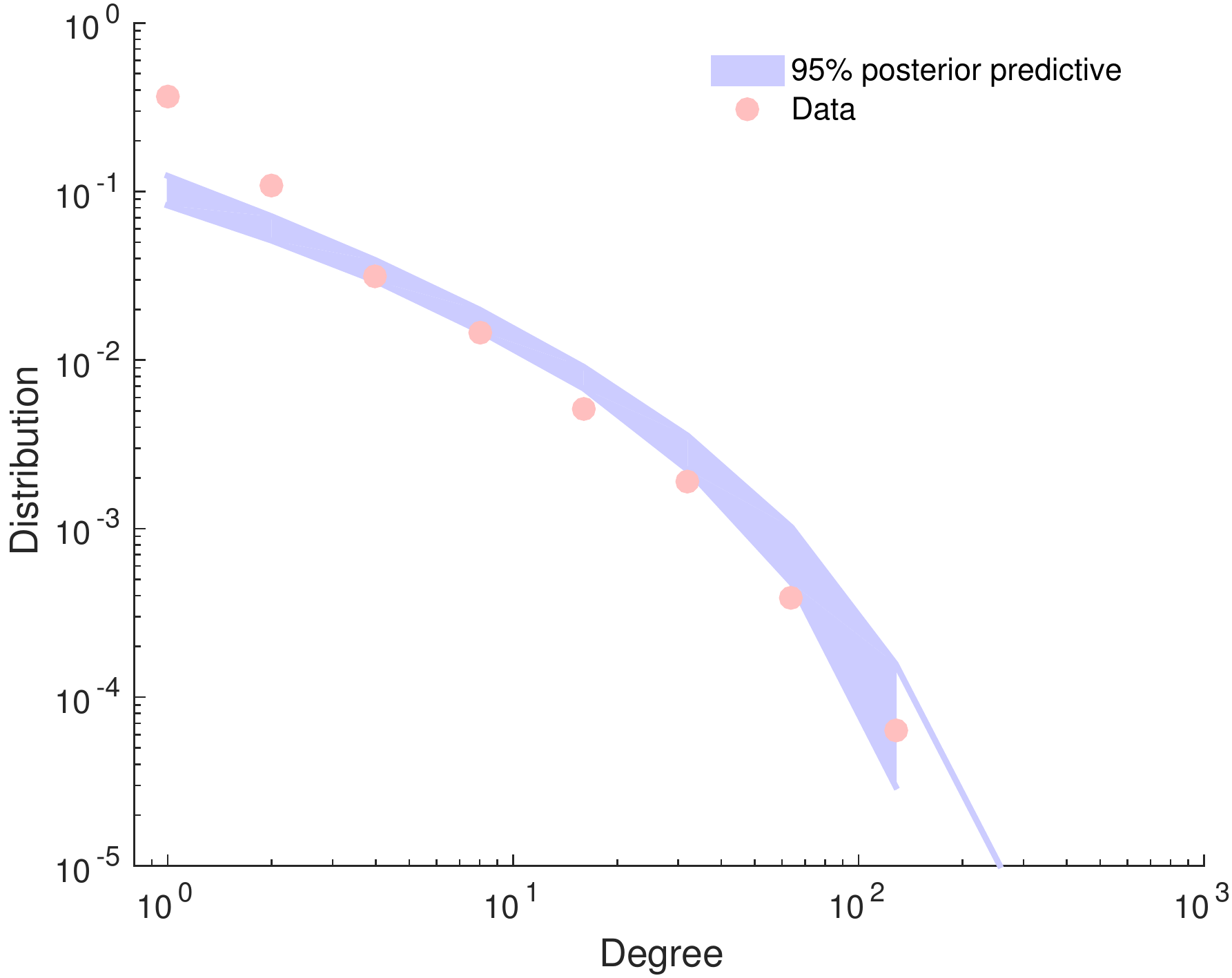}

}
\end{centering}
\caption{\label{fig:finite-activity}See Figure~\ref{fig:Posterior-credible-intervals}.
Results obtained on the same generated graph by inferring a finite-activity
model with $w_{\ast k}=0$ and $\sigma\leq0$.}
\end{figure}

\subsection{Real-world graphs}

We now apply our methods to learn the latent communities of two real-world
undirected simple graphs. The first network to be considered, the \textbf{polblogs} network~\citep{Adamic2005}, is the network of the American political blogosphere in February 2005\footnote{\url{http://www.cise.ufl.edu/research/sparse/matrices/Newman/polblogs}}. Two blogs are considered as connected if there is at least one hyperlink from one blog to the other. Additional information on the political leaning of each blog (left/right) is also available. The second network, named \textbf{USairport}, is the network of airports with at least one connection to a US airport in 2010\footnote{\url{http://www.transtats.bts.gov/DL_SelectFields.asp?Table_ID=292}}.

\begin{table}[th]
\begin{centering}
\caption{\label{tab:Size-of-real-world}Size of the networks, number of communities and computational time.}

\par\end{centering}

\centering{}%
\begin{tabular}{l|l|l|l|l}
Name & Nb nodes & Nb edges & Nb communities $p$ & Time\tabularnewline
\hline
\texttt{polblogs} & $\num{1224}$  & $\num{16715}$ & $2$ & $20$m\tabularnewline
\texttt{USairport} & $\num{1574}$ & $\num{17215}$ & $4$ & $1$h\tabularnewline
\end{tabular}
\end{table}

The sizes of the different networks are given in Table~\ref{tab:Size-of-real-world}.
We consider $\gamma_{k}=0$ is known and we assume a vague prior $\mbox{Gamma}(0.01,0.01)$
on the unknown parameters $\alpha$, $1-\sigma$, $\tau$, $a_{k}$
and $b_{k}$. We take $p=2$ communities for \texttt{polblogs} and $p=4$
communities for \texttt{USairport}. We run $3$ parallel MCMC chains,
each with $\num{10000}+\num{200000}$ iterations, using the same procedure
as used for the simulated data; see Section~\ref{sub:Simulated-data}.
Computation times are reported in Table~\ref{tab:Size-of-real-world}. The simulation of $w_{\ast1:p}$
  requires more computational time when $\sigma\geq 0$ (infinite-activity case). This explain the larger computation times for \texttt{USairport} compared to \texttt{polblogs}.

We interpret the communities based on the minimum Bayes risk point estimate
where the cost function is a permutation-invariant absolute loss on the weights $w=(w_{ik})_{i=1,\ldots,N_\alpha{;}k=1,\ldots,p}$. Let $\mathcal{S}_{p}$ be the
set of permutations of $\{1,\ldots,p\}$ and consider the cost
function
\[
C\left(w,w^{\star}\right)=\min_{\pi\in\mathcal{S}_{p}}\left[\sum_{k=1}^{p}\sum_{i=1}^{N_{\alpha}}\left|w_{i\pi(k)}-w_{ik}^{\star}\right|+\sum_{k=1}^{p}\left|w_{\ast\pi(k)}-w_{\ast k}^{\star}\right|\right]
\]
whose evaluation requires solving a combinatorial optimization problem
in $O\left(p^{3}\right)$ using the Hungarian method. We
therefore want to solve
\[
\widehat{w}=\underset{w^{\star}}{\arg\min}\,\mathbb{E}\left[C\left(w,w^{\star}\right)\vert Z\right]
\]
where $\mathbb{E}\left[C\left(w,w^{\star}\right)\vert Z\right]\simeq\frac{1}{N}\sum_{t=1}^{N}C\left(w^{(t)},w^{\star}\right)$
and $\left(w^{(t)}\right)_{t=1,\ldots,N}$ are from the MCMC output.
For simplicity, we limit the search of $\widehat{w}$ to the set of
MCMC samples giving
\[
\widehat{w}=\underset{w^{\star}\in\left\{w^{(1)},\ldots,w^{(N)}\right\}}{\arg\min}\frac{1}{N}\sum_{t=1}^{N}C\left(w^{(t)},w^{\star}\right).
\]

\begin{figure}[th]
\begin{centering}
\subfloat[\texttt{polblogs}]{\includegraphics[width=0.4\textwidth]{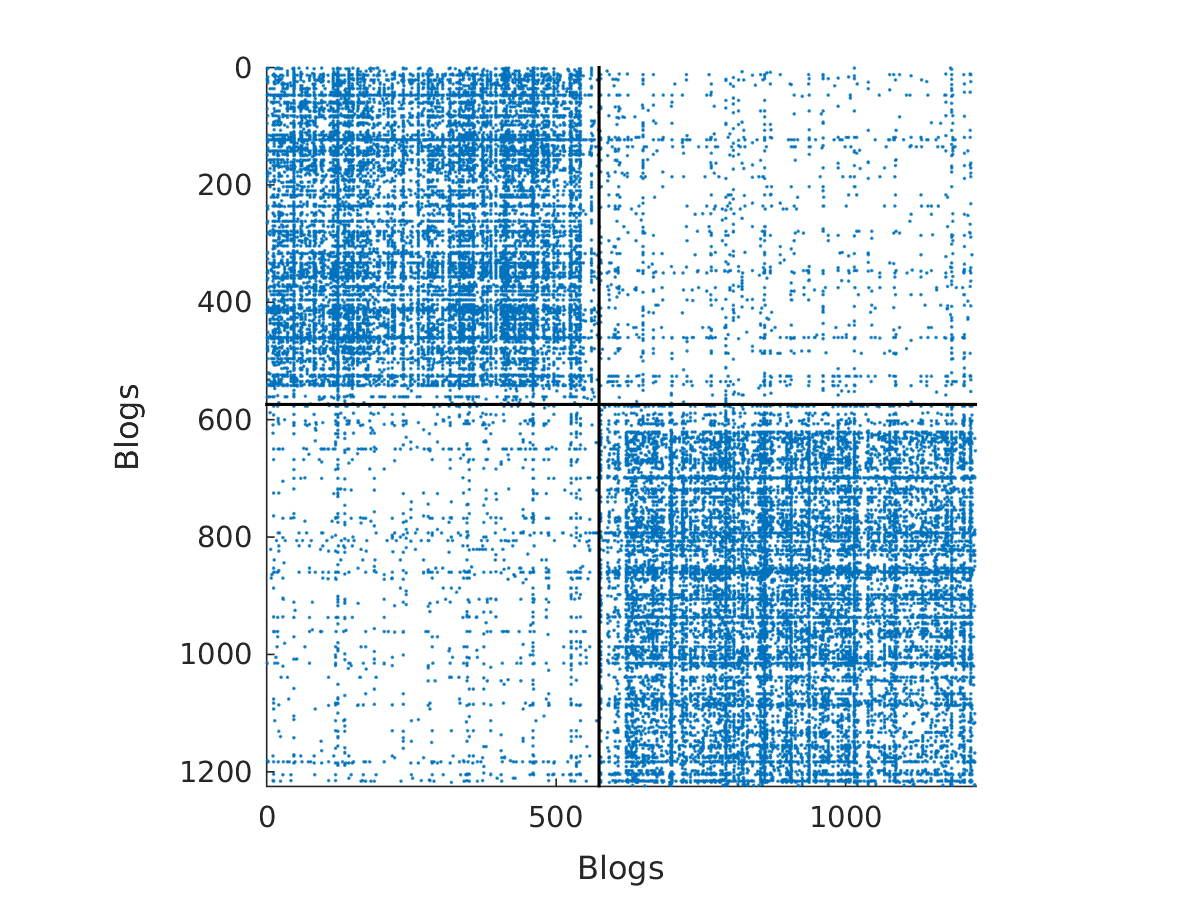}
}\subfloat[\texttt{USairport}]{\includegraphics[width=0.4\textwidth]{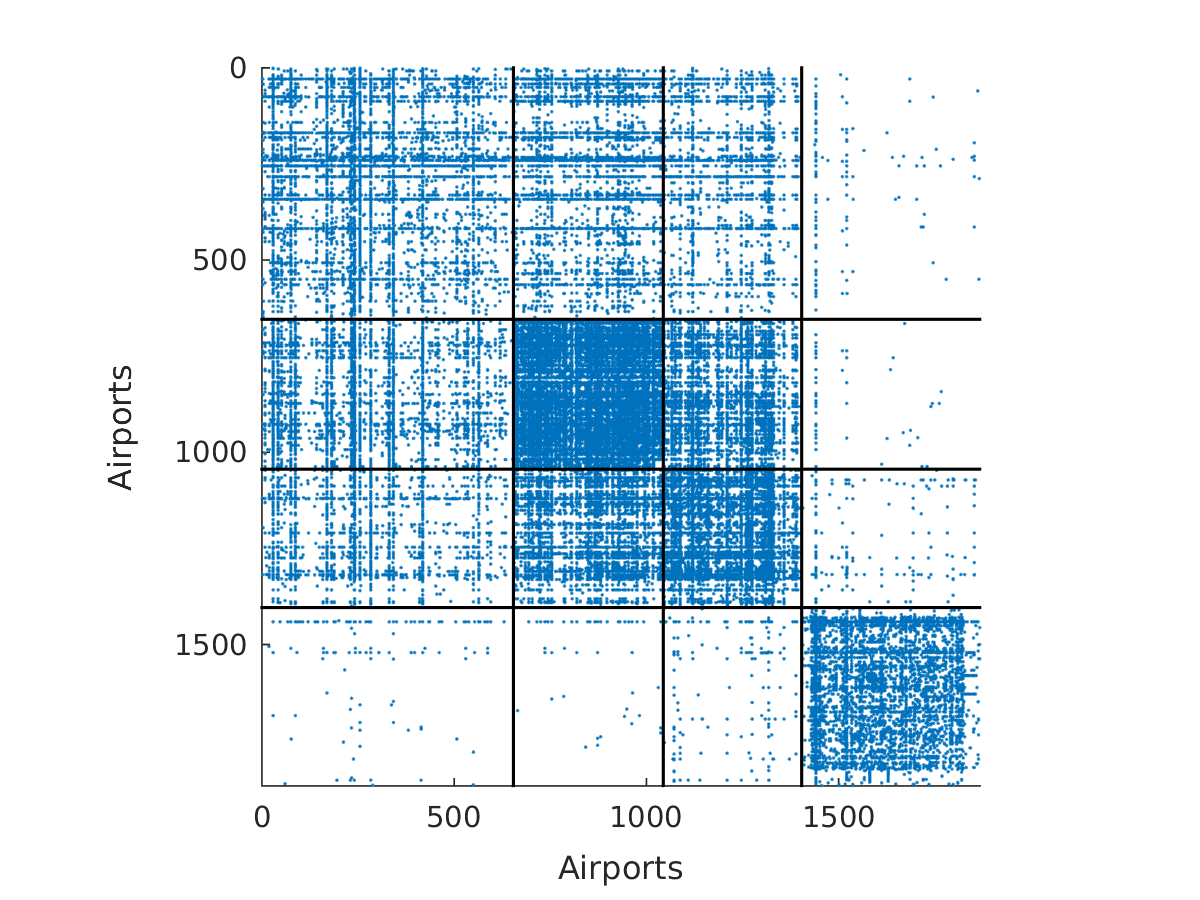}
}
\par\end{centering}
\caption{\label{fig:Block-structure}Adjacency matrices of the (a) \texttt{polblogs} and (b) \texttt{USairport} networks,
reordered by associating each node to the community where it has the highest weight.}
\end{figure}

\begin{table}[th]
\begin{centering}
\caption{Nodes with highest weight in each
community for the \texttt{polblogs} network. Blog URLs are follow{ed} by known political leaning: (L) for left-wing and (R) for right-wing.}
\label{tab:polblogs}
\begin{tabular}{>{\centering}p{0.3\textwidth}|>{\centering}p{0.3\textwidth}}
\textbf{Community 1: }``Liberal'' & \textbf{Community 2: }``Conservative''\tabularnewline
\hline
\textsf{\scriptsize{}\texttt{dailykos.com} (L)\\  \texttt{atrios.blogspot.com} (L)\\  \texttt{talkingpointsmemo.com} (L)\\  \texttt{washingtonmonthly.com} (L)\\  \texttt{liberaloasis.com} (L)\\  \texttt{talkleft.com} (L)\\  \texttt{digbysblog.blogspot.com} (L)\\  \texttt{newleftblogs.blogspot.com} (L)\\  \texttt{politicalstrategy.org} (L)\\ \texttt{juancole.com} (L)} & \textsf{\scriptsize{}\texttt{instapundit.com} (R)\\  \texttt{blogsforbush.com} (R)\\  \texttt{powerlineblog.com} (R)\\  \texttt{drudgereport.com} (R)\\  \texttt{littlegreenfootballs.com/weblog} (R)\\  \texttt{michellemalkin.com} (R)\\  \texttt{lashawnbarber.com} (R)\\  \texttt{wizbangblog.com} (R)\\  \texttt{hughhewitt.com} (R)\\ \texttt{truthlaidbear.com} (R)}
\end{tabular}
\end{centering}
\end{table}

\begin{table}
\begin{center}
\caption{Nodes with highest weights in each
community for the \texttt{USairport} network.}
\label{tab:airports}
\begin{tabular}{>{\centering}p{0.21\linewidth}|>{\centering}p{0.21\linewidth}|>{\centering}p{0.21\linewidth}|>{\centering}p{0.21\linewidth}}
\textbf{Community 1:}\textit{ }``Hub'' & \textbf{Community 2: }``East'' & \textbf{Community 3: }``West'' & \textbf{Community 4: }``Alaska''\tabularnewline
\hline
\textsf{\scriptsize{}Miami, FL\\ New York, NY\\ Newark, NJ\\ Los Angeles,
CA\\ Atlanta, GA\\ Washington, DC\\ Chicago, IL\\ Boston, MA\\ Houston,
TX\\ Orlando, FL} & \textsf{\scriptsize{}Cleveland, OH\\ Detroit, MI\\ Nashville, TN\\ Chicago,
IL\\ Knoxville, TN\\ Atlanta, GA\\ Louisville, KY\\ Indianapolis, IN\\
Memphis, TN\\ Charlotte, NC} & \textsf{\scriptsize{}Denver, CO\\ Las Vegas, NV\\ Los Angeles, CA\\ Salt
Lake City, UT\\Seattle, WA\\ Burbank, CA\\ Phoenix, AZ\\ Oakland, CA\\
Portland, OR\\ Albuquerque, NM} & \textsf{\scriptsize{}Anchorage, AK\\ Fairbanks, AK\\ Bethel, AK\\ St.
Mary's, AK\\ King Salmon, AK\\ McGrath, AK\\ Unalakleet, AK\\ Galena,
AK\\ Aniak, AK\\ Kotzebue, AK}\tabularnewline
\end{tabular}
\end{center}
\end{table}

\begin{figure}[th]
\begin{centering}
\includegraphics[width=0.47\textwidth]{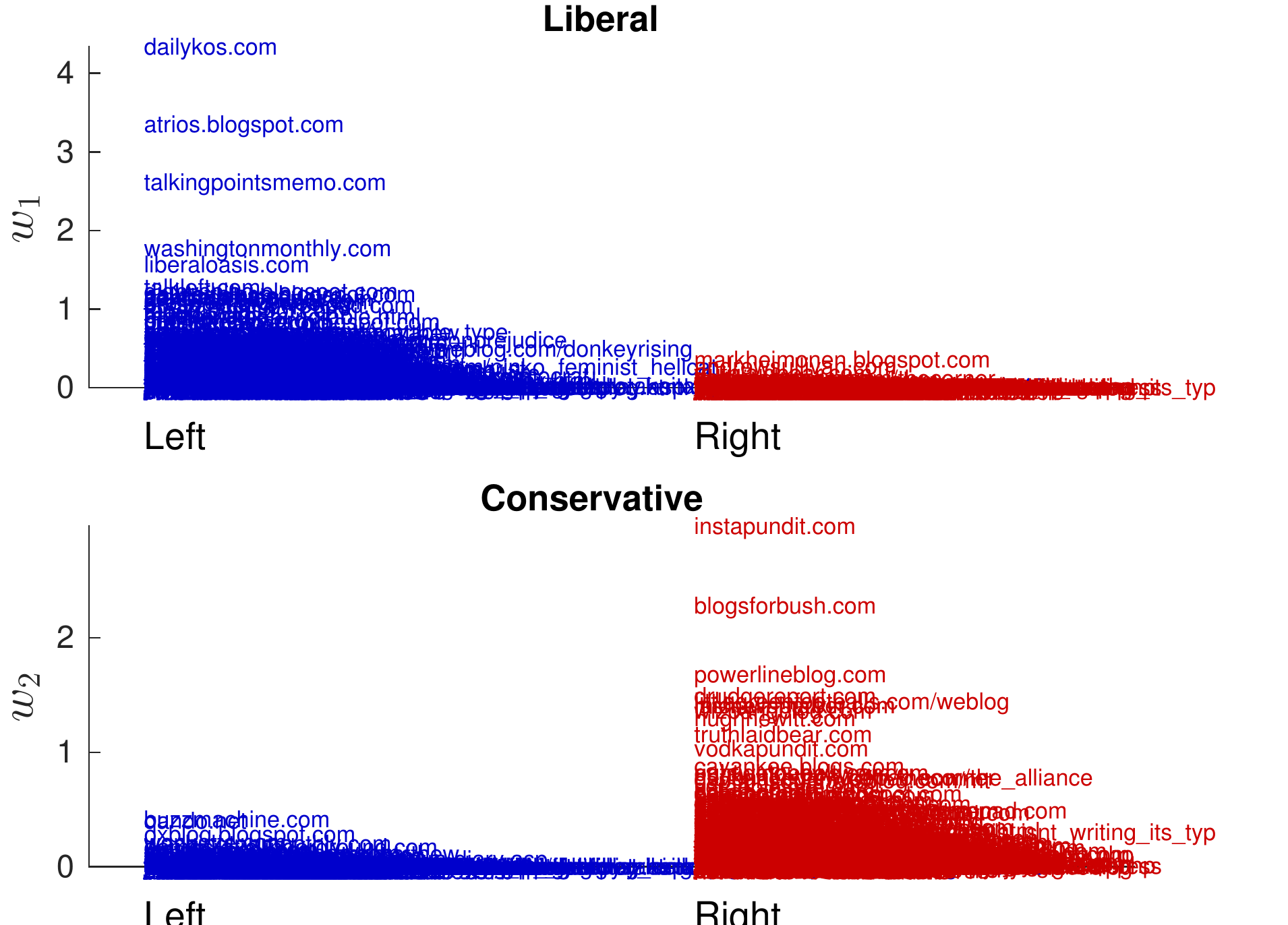}
\end{centering}

\caption{\label{fig:Feature-weights-of} Level of affiliation of each blog of the \texttt{polblogs}
network to the communities identified as ``Liberal'' (top) and ``Conservative'' (bottom). The names of the blogs are grouped according to the left-right wing ground truth. Left-wing blogs are represented in blue on the left, right-wing blogs in {red} on the right.}
\end{figure}

\begin{figure}[th]
\subfigure[\texttt{polblogs}]{\includegraphics[width=0.5\textwidth]{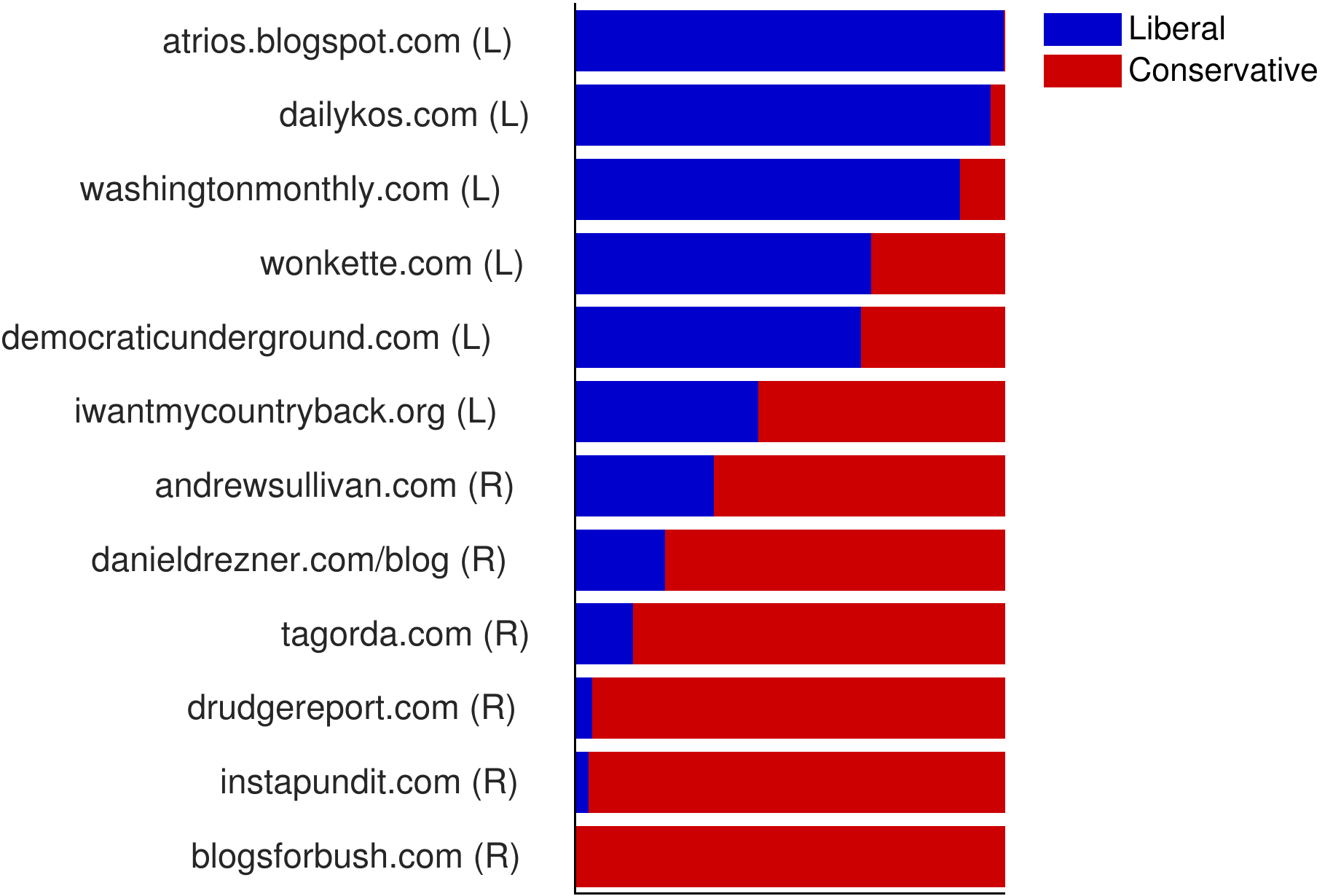}

}\subfigure[\texttt{USairport}]{\includegraphics[width=0.5\textwidth]{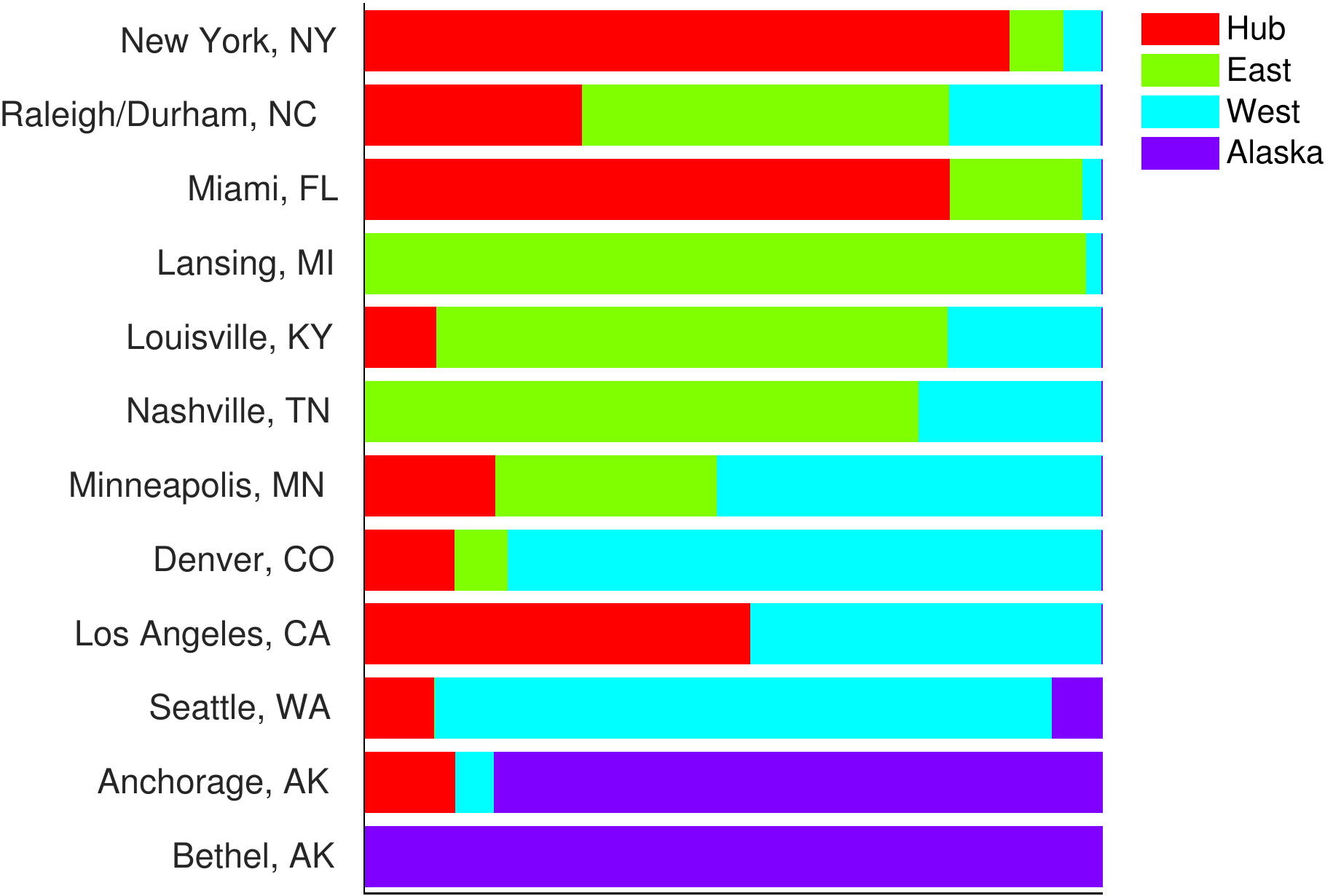}

}

\caption{\label{fig:Proportions-of-features}Relative values of the weights in each community for a subset of the nodes of the (a) \texttt{polblogs} and (b) \texttt{USairport} networks.}
\end{figure}

Table~\ref{tab:polblogs} reports the nodes with
highest weights in each community for the \texttt{polblogs} network. Figure~\ref{fig:Feature-weights-of} also shows the weight associated to each of the two community
 alongside the true left/right class for each blog.  The two learned communities, which can be interpreted as ``Liberal'' and ``Conservative'', clearly recover the political leaning of the blogs.
Figure~\ref{fig:Block-structure} shows the adjacency matrices
obtained by reordering the nodes by community membership, where each
node is assigned to the community whose weight is maximum, clearly showing the block-structure of this network. The obtained clustering yields a $93.95$\% accuracy when compared to the ground truth classification.
Figure~\ref{fig:Proportions-of-features}(a) shows the relative community proportions for a subset of the blogs. \texttt{dailykos.com} and \texttt{washingtonmonthly.com} are clearly
described as liberal while \texttt{blogsforbush.com}, \texttt{instapundit.com}
and \texttt{drudgereport.com} are clearly conservative.
Other more moderate blogs such as \texttt{danieldrezner.com/blog}
and \texttt{andrewsullivan.com} have more balanced values in both communities. Figure~\ref{fig:Empirical-degree-distribution}(a) shows that the posterior predictive degree distribution provides a good fit to the data.

For \texttt{USairport}, the four learned communities can also be easily interpreted, as seen in Table~\ref{tab:airports}. The first community, labeled ``Hub'', represents highly
connected airports with no preferred location, while the three others, labeled ``East'', ``West'' and ``Alaska'', are communities based on the location of the airport.
In Figure~\ref{fig:Proportions-of-features}(b), we can see that some airports
have a strong level of affiliation in a single community: New York and Miami for ``Hub'',
Lansing for ``East'', Seattle for ``West'' and Bethel and Anchorage
for ``Alaska''. Other airports have significant weights in different
communities: Raleigh/Durham and Los Angeles are hubs with strong regional
connections, Nashville and Minneapolis share a significant number of connections
with both East and West of the USA. Anchorage has a significant ``Hub''
weight, while most airports in Alaska are disconnected from the rest
of the world as can be seen in Figure~\ref{fig:Block-structure}(b). ``Alaska'' appears as a separate block while substantial overlaps are observed between the ``Hub'', ``East'' and ``West'' communities. Figure~\ref{fig:Empirical-degree-distribution}(b) shows that the posterior predictive degree distribution also provides a good fit to the data.

\subsection{Comparisons}

\begin{figure}[!th]
\begin{centering}
\subfloat[\texttt{polblogs}]{\includegraphics[width=0.4\textwidth]{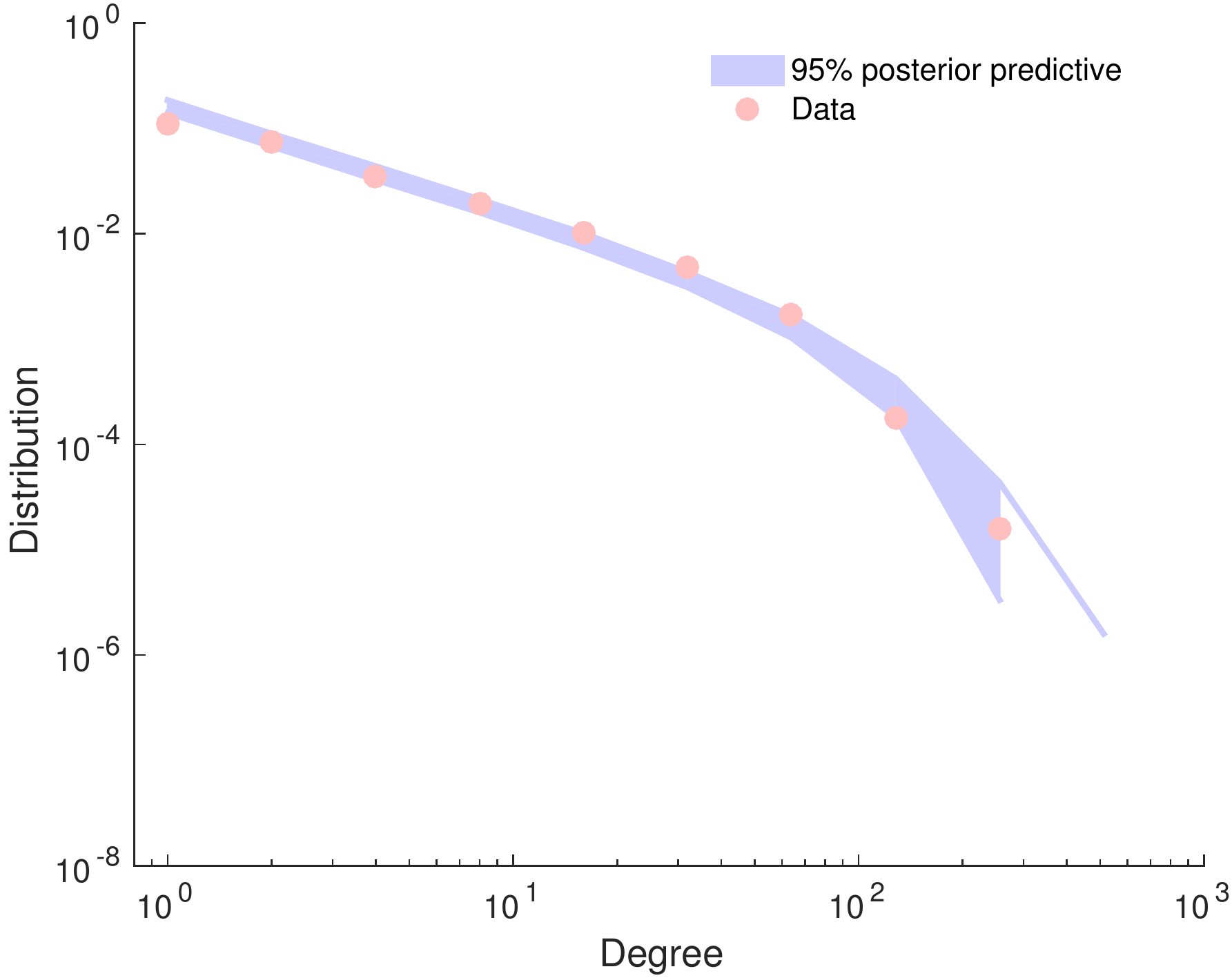}
} \hspace*{.1\textwidth}\subfloat[\texttt{USairport}]{\includegraphics[width=0.4\textwidth]{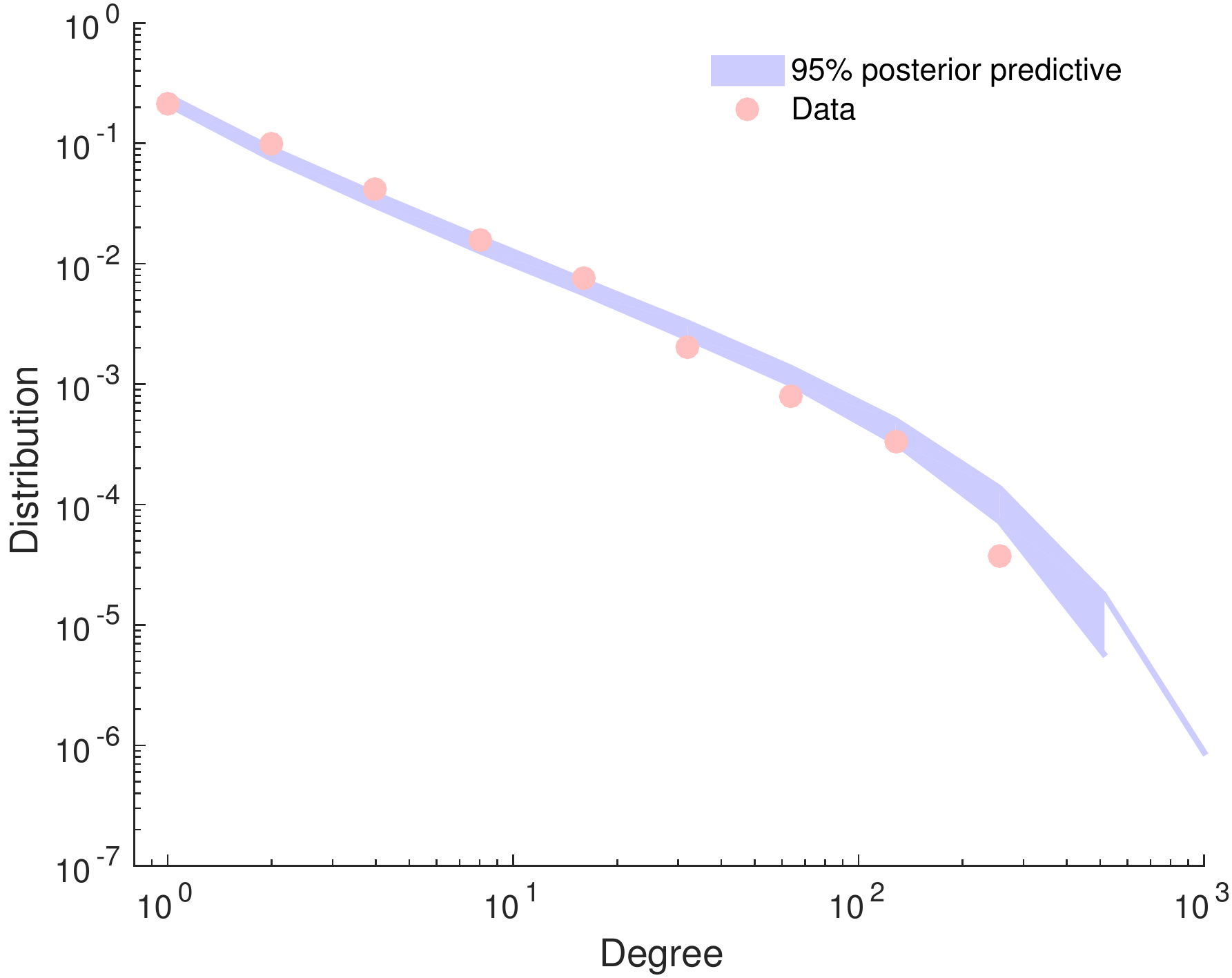}
}
\par\end{centering}
\begin{centering}
\subfloat[\texttt{polblogs}]{\includegraphics[width=0.4\textwidth]{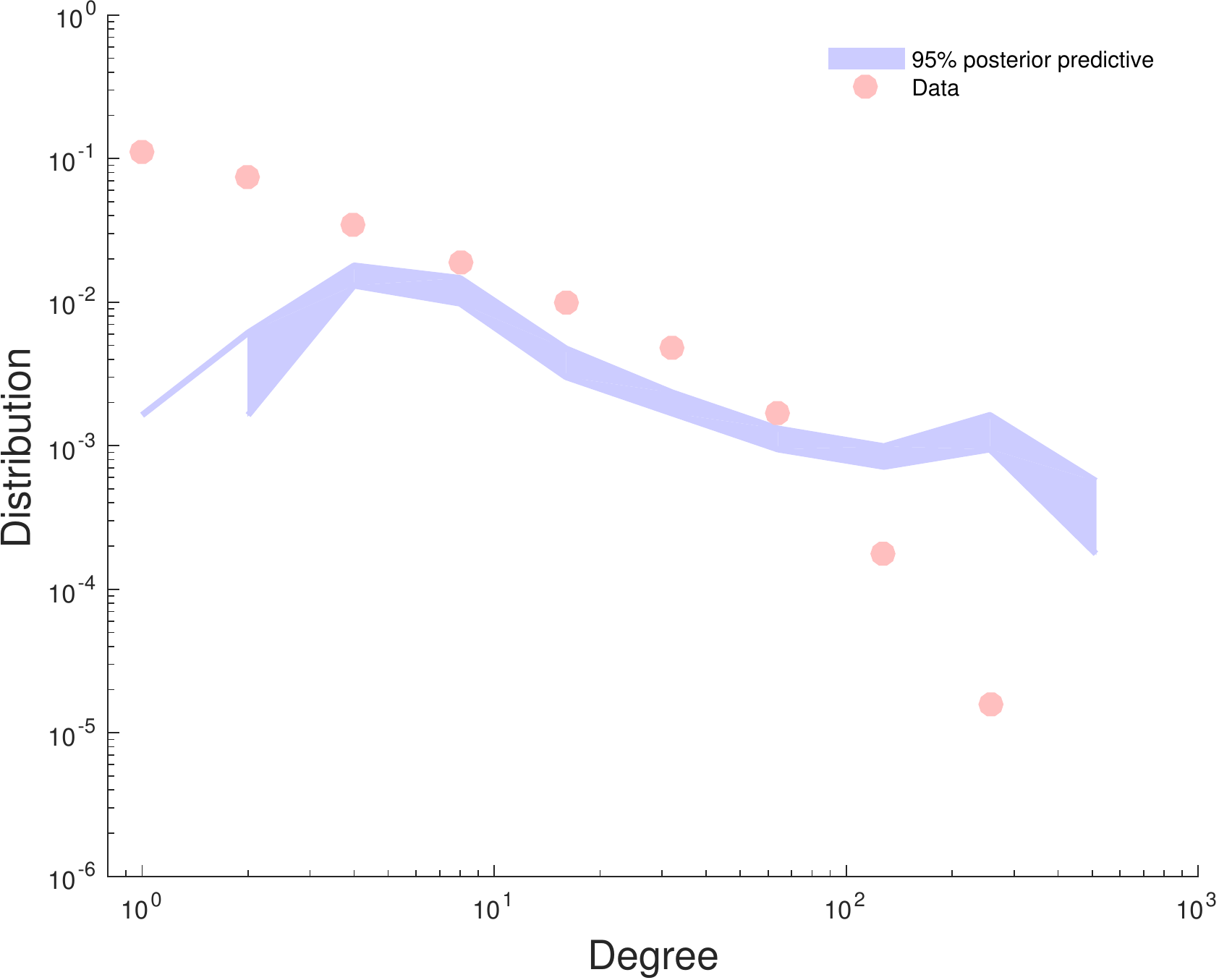}
}\hspace*{.1\textwidth}\subfloat[\texttt{USairport}]{\includegraphics[width=0.4\textwidth]{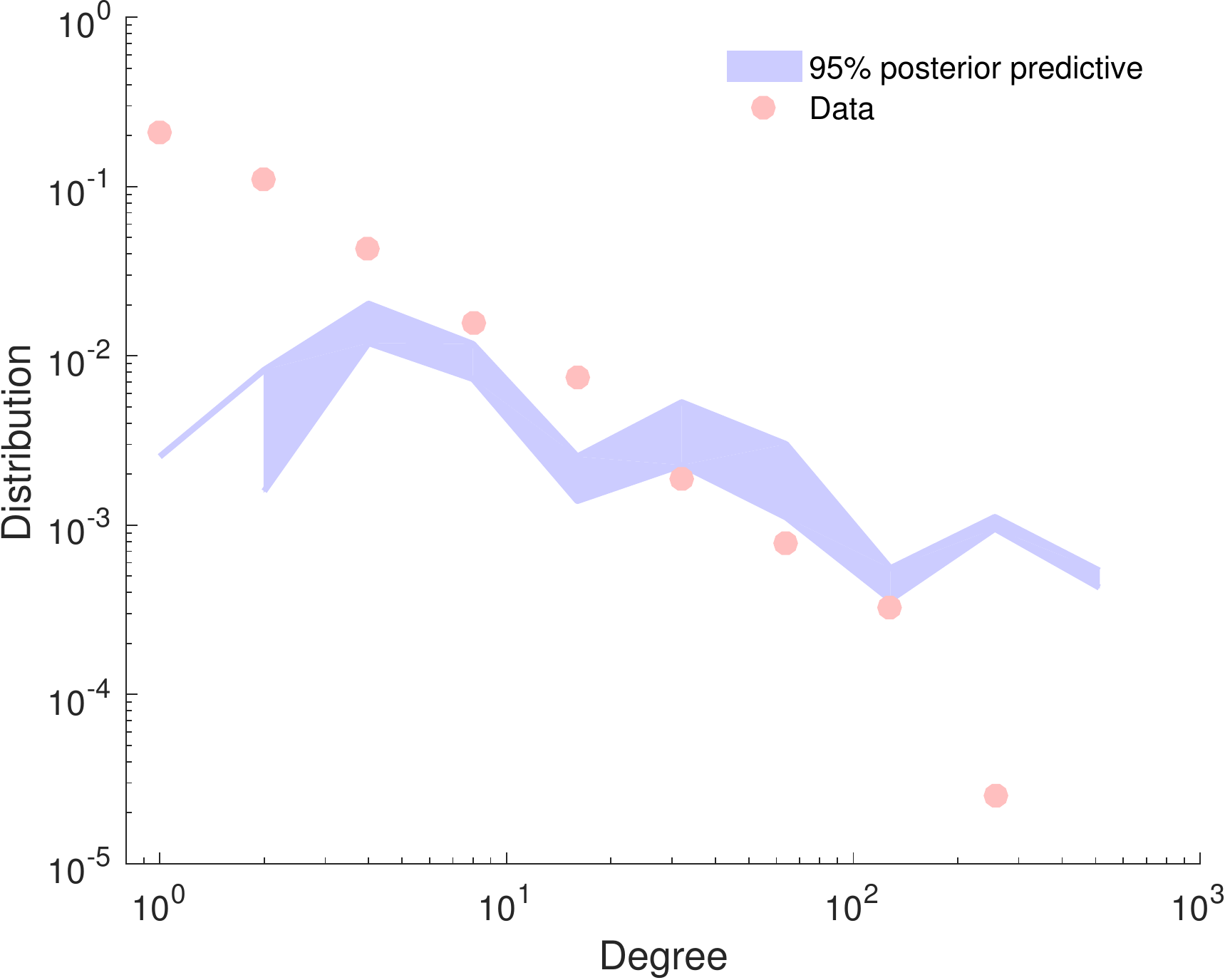}
}
\par\end{centering}
\begin{centering}
\subfloat[\texttt{polblogs}]{\includegraphics[width=0.4\textwidth]{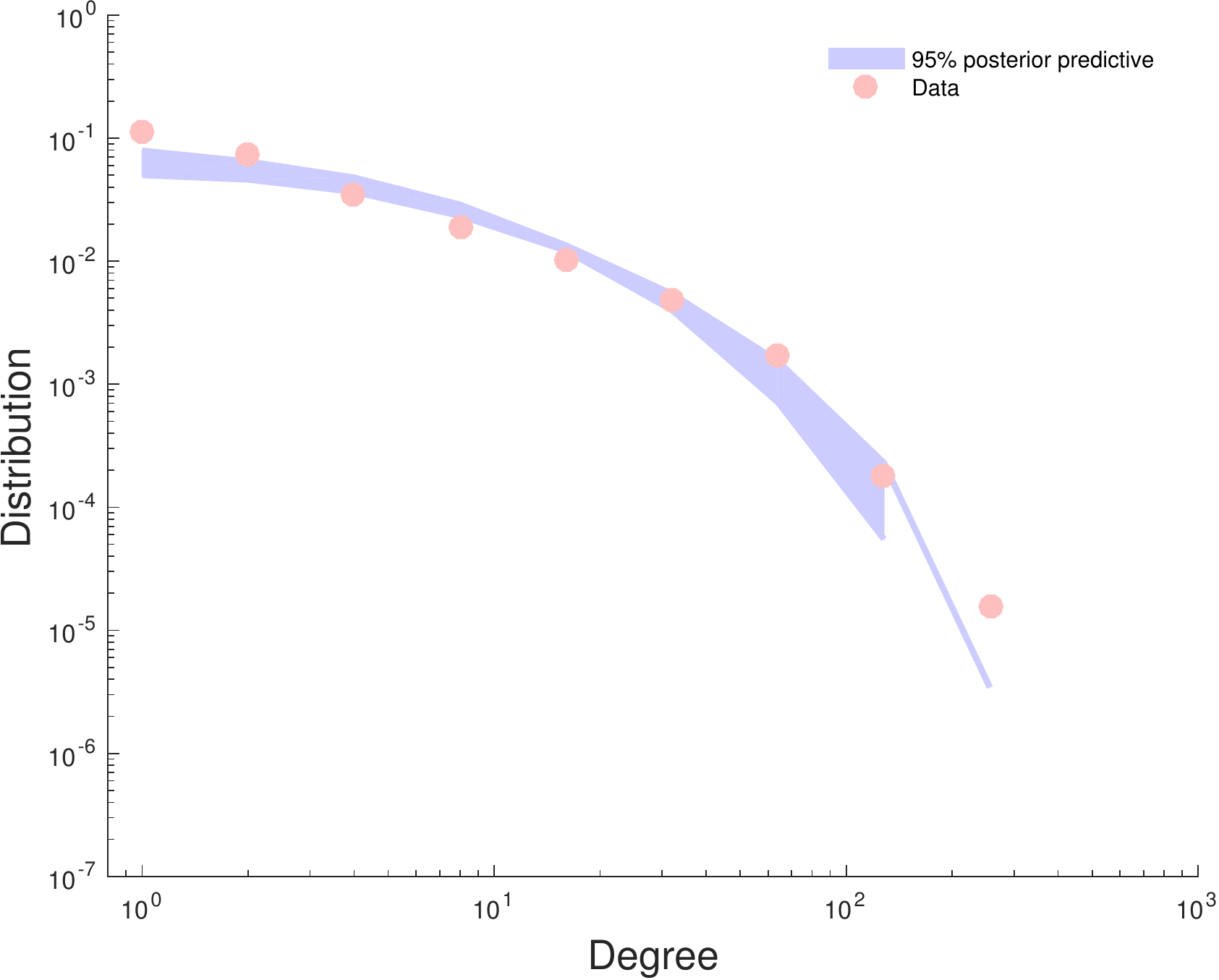}
}\hspace*{.1\textwidth}\subfloat[\texttt{USairport}]{\includegraphics[width=0.4\textwidth]{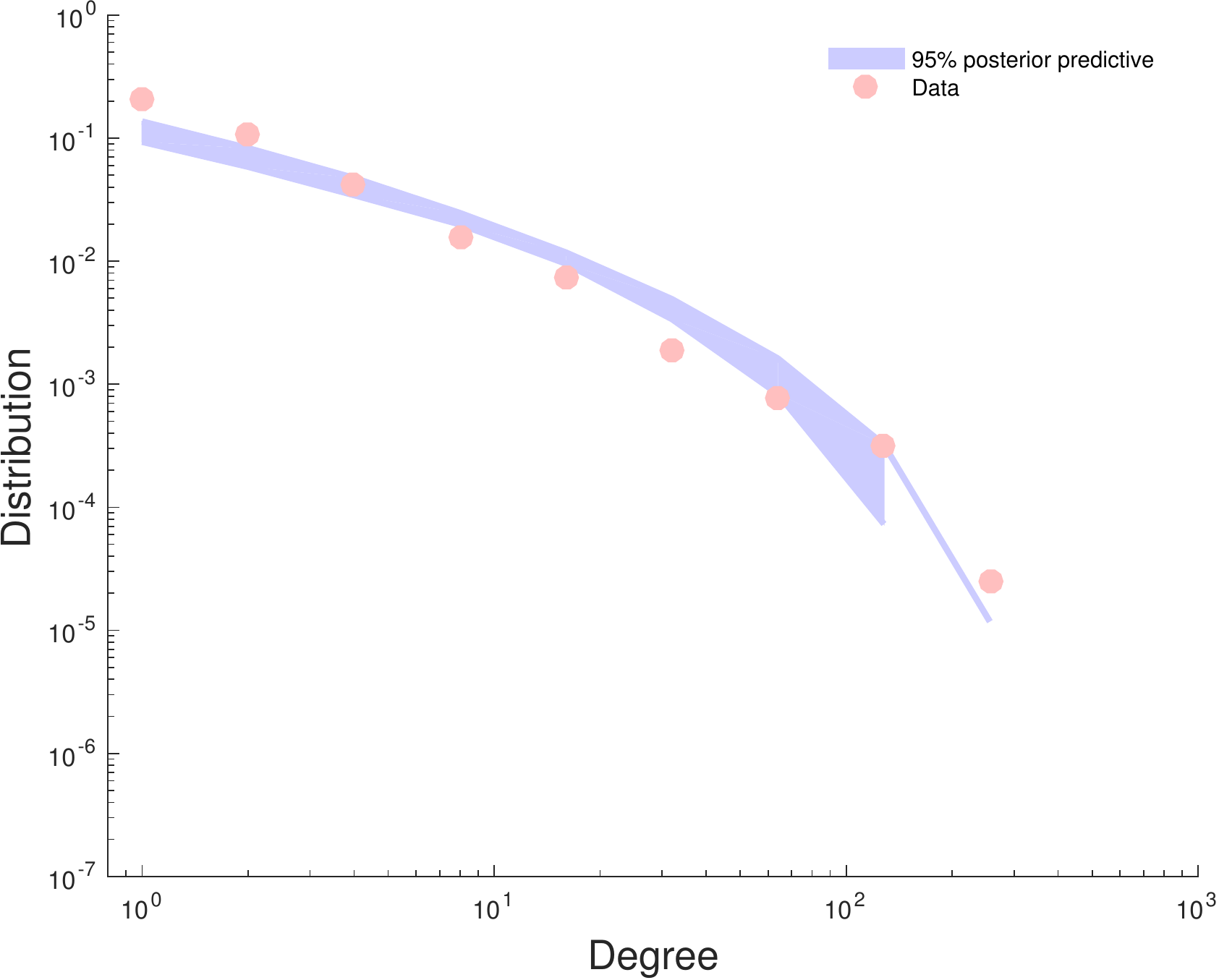}
}
\par\end{centering}
\caption{\label{fig:Empirical-degree-distribution}Empirical degree distribution
(red) and posterior predictive (blue) of the (left) \texttt{polblogs} and (right) \texttt{USairport} networks under our (top row) CCRM model, (middle row) MMSB and the (bottom row) MLFM.}
\end{figure}

\begin{figure}[!th]
\begin{centering}
\subfloat[\texttt{polblogs}]{\includegraphics[width=0.4\textwidth]{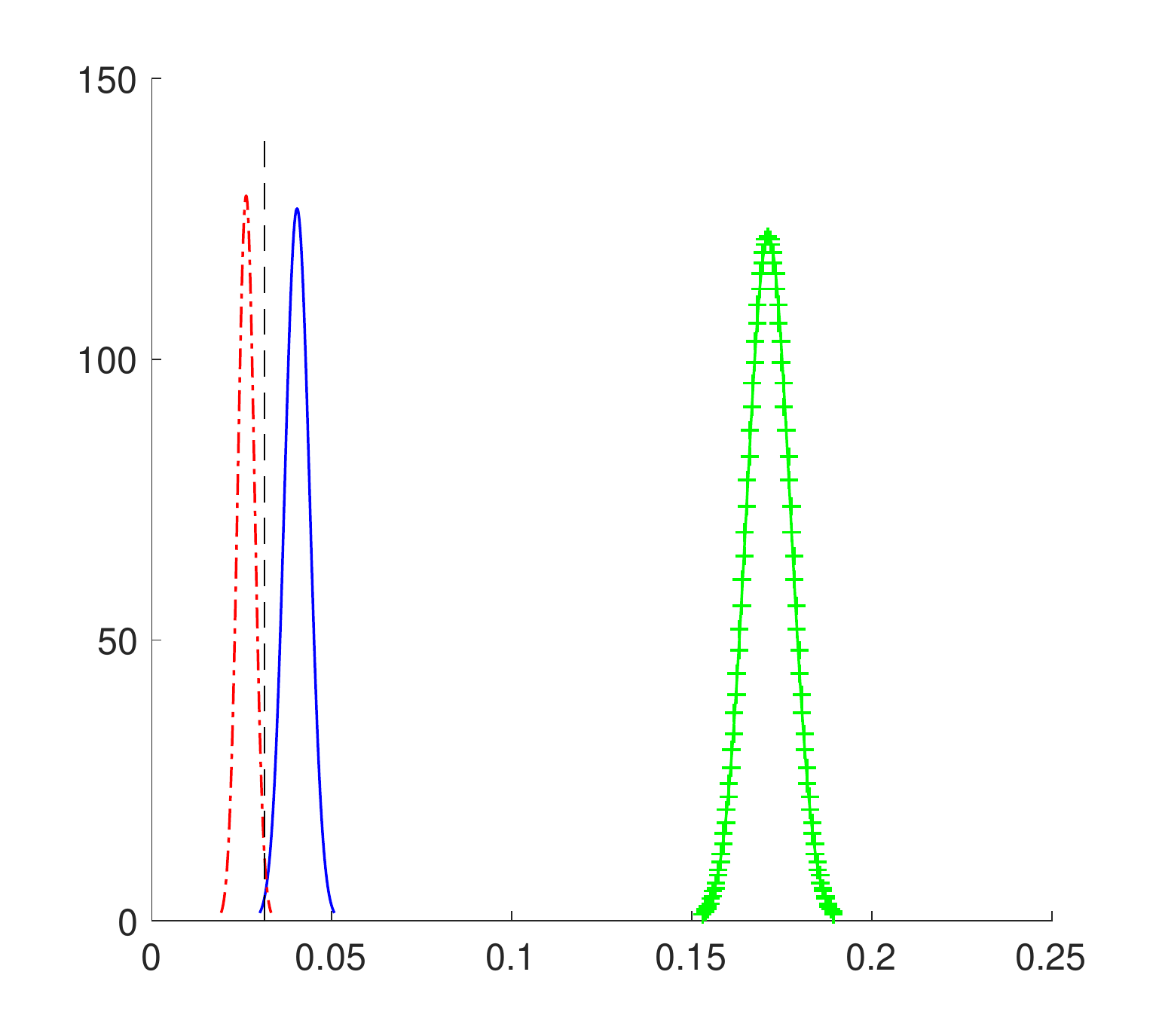}}\hspace*{.1\textwidth}
\subfloat[\texttt{USairport}]{\includegraphics[width=0.4\textwidth]{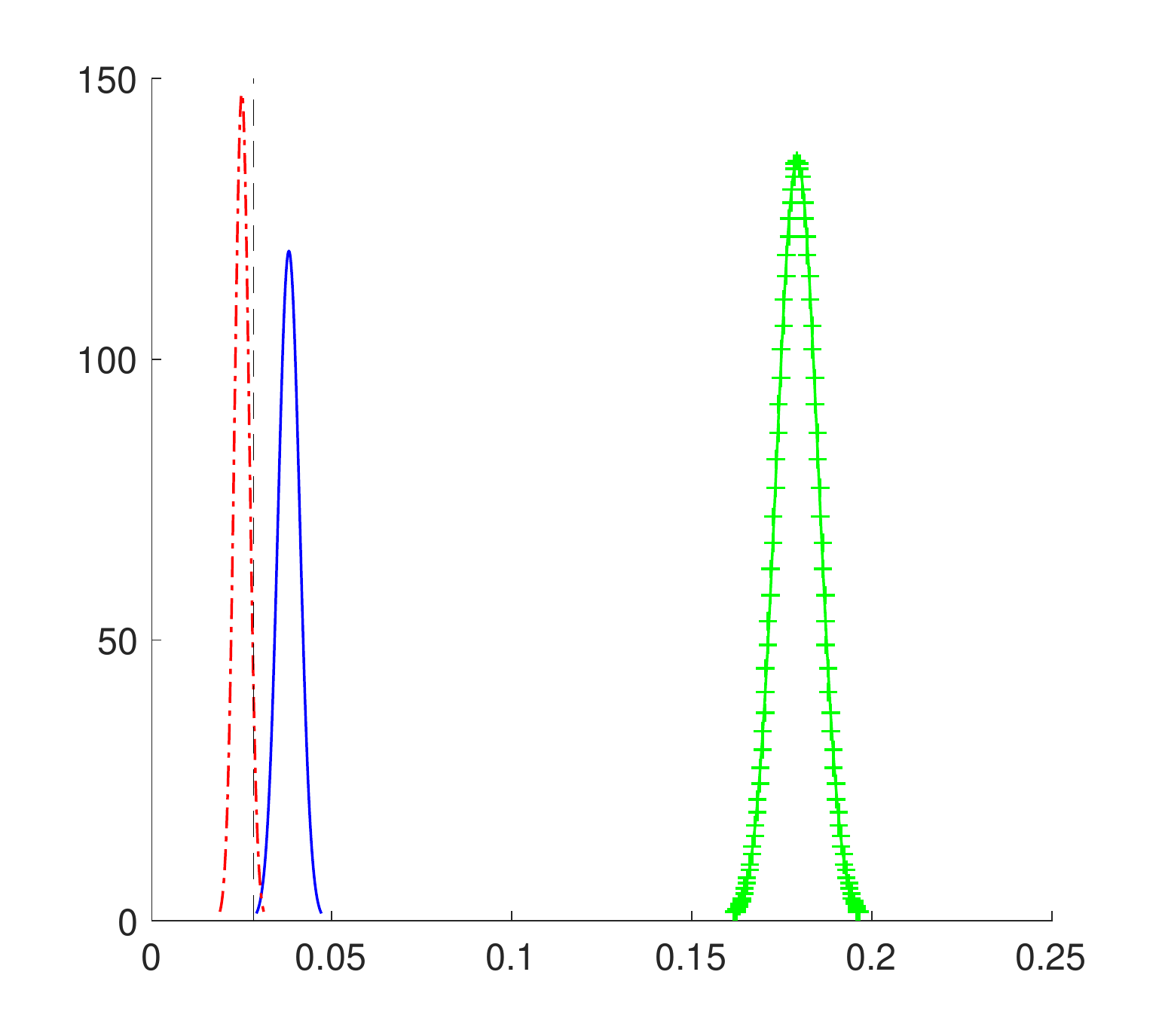}}
\par\end{centering}
\begin{centering}
\subfloat[\texttt{polblogs}]{\includegraphics[width=0.4\textwidth]{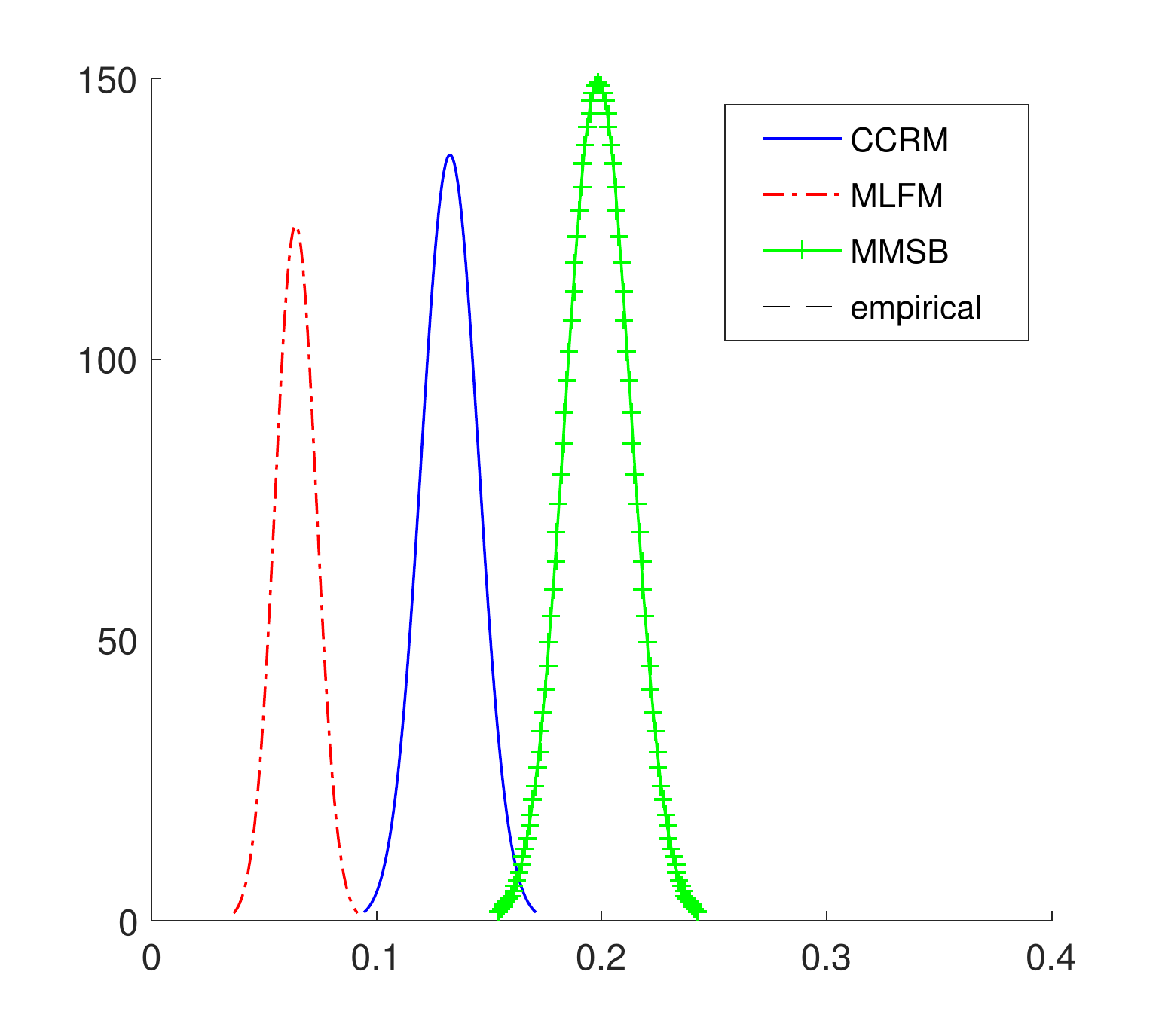}}\hspace*{.1\textwidth}
\subfloat[\texttt{USairport}]{\includegraphics[width=0.4\textwidth]{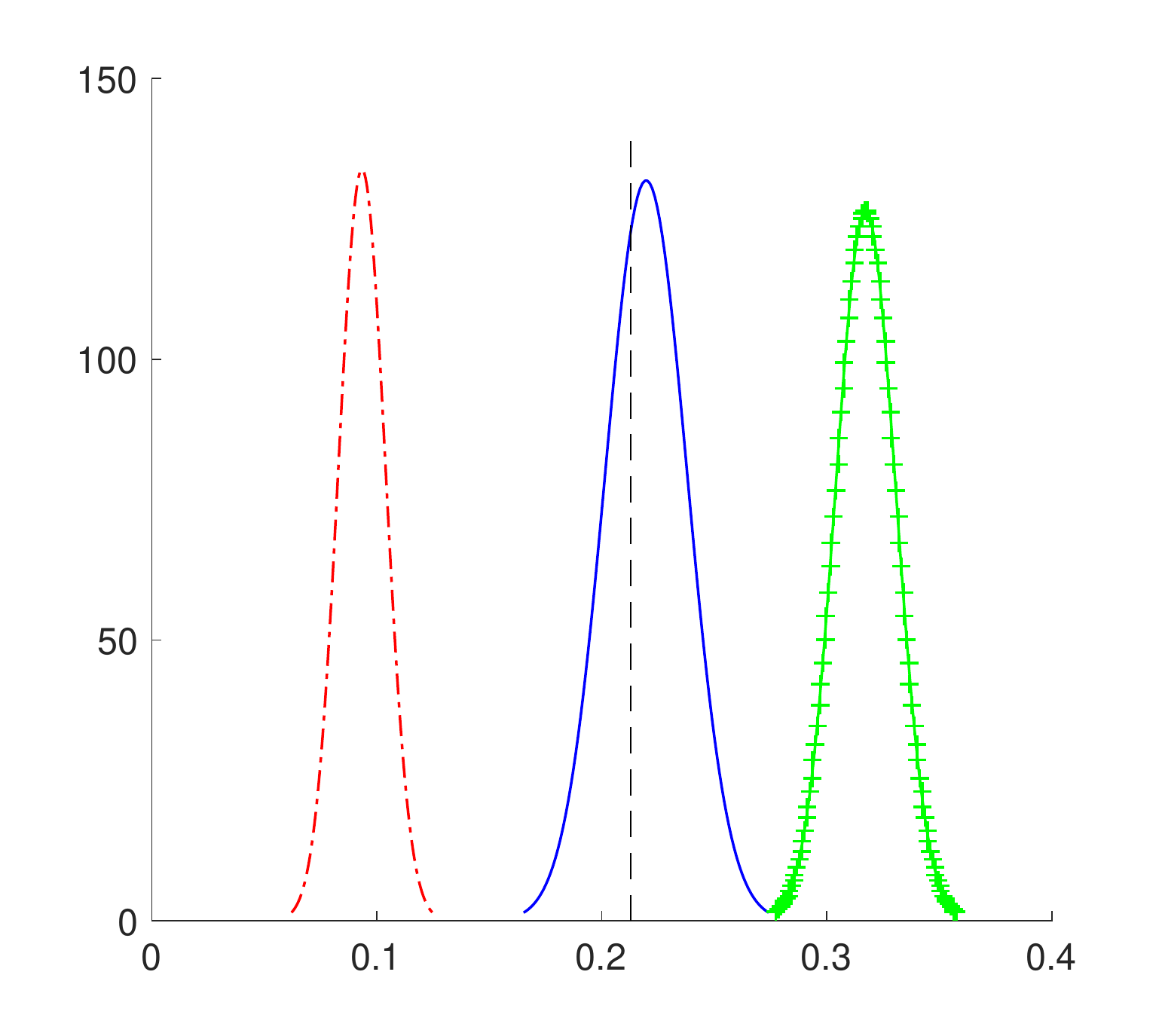}}
\par\end{centering}
\caption{\label{fig:Diagnostics}Posterior predictive plots for the \texttt{polblogs} and \texttt{USairport} networks under the CCRM, MLFM and MMSB models. Top row: standard deviation of degree; bottom row: cluster coefficient}
\end{figure}

We compare the fit of our model to the mixed membership stochastic blockmodel (MMSB) of \citet{Airoldi2008} and the multiplicative latent factor model (MLFM) of \citet{Hoff2009}. The two models are briefly explained below.
\subsection*{Multiplicative latent factor model} Let $X \in \mathbb{R}^{N \times N}$ denote a (symmetric) random matrix of effects for a set of $N$ nodes. Under this model, $X$ is explained as the sum of systematic patterns and random noise.  $Y$ denotes the adjacency matrix. For $1\leq
i<j\leq N$
\begin{align*}
X_{ij}  & = \xi_0 + \zeta_i + \zeta_j + M_{ij}+E_{ij}\\
Y_{ij} &=\1{ X_{ij}  > 0 }
\end{align*}
where $\xi_{0}\in\mathbb{R}$ is the intercept, $\zeta_i \in \mathbb{R}$ is the additive node effect, $M \in \mathbb{R}^{N \times N} $ is the matrix of multiplicative effects. The square symmetric matrix $M$ has a latent decomposition of the form $M = U\Lambda U^T,$ where  $U \in \mathbb{R}^{N\times p}$ and $\Lambda \in \mathbb{R}^{p \times p}$ is a diagonal matrix. $E \in \mathbb{R}^{N\times N}$ is the matrix of standard normal noise; $E_{ij} \overset{\text{iid}}{\sim}\mathcal{N}(0,1) $. Denoting by $u_i$ the $i^{\text{th}}$ row of $U$  we have $M_{ij} = u^T_{i}\Lambda u_{j}$.
We use the priors implemented in the package \textsf{amen}. For the additive effect $\zeta_i$ it is assumed that $  \,\zeta_{i} \overset{\text{i.i.d.}}{\sim}\mathcal{N}(0,\sigma_{\zeta}^{2})$, with $1/\sigma_{\zeta}^{2}   \sim \text{Gamma}\left(  \frac{1}{2},\frac{1}{2}\right) $. For the multiplicative effects it is assumed that $\,u_{ij} \overset{\text{ind}}{\sim} \mathcal{N}(0,\sigma_{j}^{2}) $ with $1/\sigma_{j}^{2}\sim \text{Gamma}(2,1)$.
For $\xi_{0}$ we use an improper prior $p(\xi_0) \propto 1 /\xi_0$.

\subsection*{Mixed membership stochastic blockmodel}
Let $N$ be the number of nodes in the network and $Y$ the adjacency matrix. For each node $i=1,\ldots,N$, let $$\pi_i\sim \text{Dirichlet}(\varsigma,\ldots,\varsigma)$$ be a $p$ dimensional mixed membership probability vector for node $i$, with $\varsigma>0$.
For each pair of nodes $i<j$, let
\begin{align*}
c_{i j}|\pi_i& \sim \pi_i\\
c_{j i}|\pi_j& \sim \pi_j\\
Y_{ij}|c_{ij},c_{ji},B& \sim \text{Bernoulli}\left\{ \left(1-\rho\right)B_{c_{ij}c_{ji}}\right\}
\end{align*}
where $c_{ij}\in\{1,\ldots,p\}$ is the emission indicator variable, $\rho$ is a parameter that controls the proportion of zeros that should not be explained by the blockmodel and  $B$  is the $p \times p$ matrix that contains the Bernoulli rates  of the link probabilities between different communities, i.e. $B_{k,l}$ is the probability of a connection between a member of group $k$ and one of group $l$. We assume $\varsigma \sim\text{Gamma}(1,1)$, $\rho\sim\text{Beta}(1/2,1/2)$ and $B_{k,\ell}\sim\text{Beta}(1,1)$.\bigskip

For each model, we run three MCMC chains for posterior inference, using the \textsf{amen R} package~\citep{Hoff2017} for MLFM. Under MMSB, we use $200000$ MCMC iterations,  of which $100000$ discarded as burn in. Under MLFM, which required more iterations to converge, we ran $1M$ iterations, of which $500 000$ discarded as burn-in. In both cases we thinned the output to obtain 500 samples approximately distributed from the posterior distribution.

\paragraph{Posterior predictive checks.} In order to evaluate the goodness of fit of each model, we look at some statistics of a replicated network for a \textbf{new} set of nodes, sampled under the posterior predictive. Let $Y^*$ be a $N$ by $N$ adjacency matrix corresponding to the edges of a set of $N$ new nodes, sampled from
\begin{align*}
\Pr(Y^*|Y)= \left \{
\begin{array}{ll}
  \int \Pr(Y^*|\xi_0,\Lambda)\Pr(d\xi_0,d\Lambda|Y) & \text{ for MLFM} \\
  \int \Pr(Y^*|\varsigma,\rho,B)\Pr(d\varsigma,d\rho,dB|Y) & \text{ for MMSB}
\end{array}\right .
\end{align*}
Note that the posterior predictive setting is different from that of \cite{Hoff2009}, which replicates a network for the same set of nodes and therefore conditions on the parameters $u_i,\zeta_i$.
We generate 500 samples from the posterior predictive under each model. We are interested in some standard summary statistics: the degree distribution, standard deviation of the degrees and cluster coefficient. Posterior degree distributions under each model for \texttt{polblogs} and \texttt{USairport} are presented in Figure~\ref{fig:Empirical-degree-distribution} and posterior predictive distributions of the standard deviation of the degree and clustering coefficient in Figure~\ref{fig:Diagnostics}.

Overall, the MMSB did not perform well on the two datasets considered. While being a very flexible model, applied successfully to a wide range of real-world networks, the MMSB doesn't explicitly capture degree heterogeneity; the latent communities recovered (not shown) do not correspond to those recovered by CCRM or MLFM: it tends to cluster nodes according to their degree, explaining the shape of the posterior degree predictive in Figure~\ref{fig:Empirical-degree-distribution}(c-d) . Such limitations, due to the lack of degree correction, have been acknowledged by previous authors~\citep{Karrer2011,Gopalan2013a}. Note that degree-corrected MMSB have been proposed by \cite{Gopalan2013a}, similar to the general class of models discussed in \citep{Hoff2009}.

MLFM on the other hand incorporates degree heterogeneity and thus gives better fit on the degree predictive distribution. It also gives similar results than CCRM and recovers similar latent communities. However, by construction this model cannot capture sparsity or heavy tailed degree distributions. It underestimates the proportion of nodes with degree one as shown in Figure~\ref{fig:Empirical-degree-distribution} (top row). While our CCRM model over-estimates the proportion of nodes of degree one, it tends to give a better fit to the empirical degree distribution overall. Finally, in Figure~\ref{fig:Diagnostics}, we report the goodness of fit statistics used in the analyses of \citet{Hoff2009}. The first one is the standard deviation of the degree shown in the middle row, on which MMSB performs poorly, whereas CCRM and MLFM have similar results, marginally close to the empirical value. The bottom row reports the cluster coefficient, also known as triadic dependence. In this case, CCRM gives a better fit on the sparse \texttt{USairport} dataset while MFLM gives a  better result than CCRM on the dense \texttt{polblogs} dataset.

\section*{Acknowledgments}
The authors thank George Deligiannidis for pointing out the article of \cite{Asmussen2001}.
FC acknowledges the support of the European Commission under the Marie Curie Intra-European
Fellowship Programme. Part of this work has been supported by the BNPSI ANR project no ANR-13-BS-03-0006-01.

\appendix

\section{Background on completely random measures}
\label{sec:background}

\subsection{Completely random measures}

Completely random measures (CRM) were introduced by
\cite{Kingman1967,Kingman1993} and are now standard tools for
constructing flexible Bayesian nonparametric (BNP) models; see for example the surveys of \cite{Lijoi2010} or~\citet[Section~10.1]{Daley2008a}.

A CRM $W$ on $\mathbb R_+$ is a random measure
such that, for any collection of disjoint measurable subsets $A_{1}%
,\ldots,A_{n}$ of $\mathbb R_+$, $W(A_{1}),\ldots,W(A_{n})$ are independent. A CRM
can be decomposed into a sum of three independent parts: a non-random measure,
a countable collection of atoms with {random masses at} fixed locations, and a countable
collection of atoms with random masses and random locations. Here, we will
only consider CRMs with random masses and random locations, which take the
form%
\begin{equation}
W=\sum_{i=1}^{\infty}w_{i}\delta_{\theta_{i}}%
\end{equation}
where the $w_i\in \mathbb R_+$ are the masses and $\theta_{i}\in\mathbb R_+$ are the
locations. The law of $W$ can actually be characterized by a Poisson point process $N=\{(w_{i},\theta_{i})_{i=1,2,\ldots}\}$ on $\mathbb R_+^2$  with mean measure $\nu(dw,d\theta)$. We focus here on the case where the CRM is homogeneous with independent increments. This implies that the location $\theta_i$ are independent of the weights $w_i$ and the mean measure decomposes as $\nu(dw,d\theta)=\rho(dw)\lambda(d\theta)$ where $\lambda$ is the Lebesgue measure and $\rho$ is a measure on $\mathbb R_+$ such that
\begin{equation}
\int_{0}^{\infty}(1-e^{-w})\rho(dw)<\infty.\label{eq:conditionLevy}
\end{equation}
We write $W\sim\CRM(\rho,\lambda)$. Note that $W([0,T])<\infty$ a.s. for any real $T$ while $W(\mathbb R_+)=\infty$ a.s. if $\rho$ is not degenerate at $0$. If
\begin{equation}
\int_{0}^{\infty}\rho(dw)=\infty
\end{equation}
then there will be a.s. an  infinite number of jumps in any interval $[0,T]$ and we refer to the CRM as infinite-activity. Otherwise, it is called finite activity. Let $\overline{\rho}$ be the tail L\'evy intensity defined as
\begin{equation}
\overline{\rho}(x)=\int_x^\infty \rho(dw)\label{eq:taillevy2}
\end{equation}
for $x>0$. This function corresponds to the expected number of points $(w_i,\theta_i)$ such that $w_i>x$ and $\theta_i\in[0,1]$, and its asymptotic properties play an important role in the characterization of the graph properties.

\subsection{Vectors of CRMs}

Multivariate extensions of CRMs have been proposed recently by various authors~\citep{Epifani2010,Leisen2011,Leisen2013,Griffin2013,Lijoi2014}. These models are closely related to L\'evy copulas~\citep{Tankov2003,Cont2003,Kallsen2006} and multivariate subordinators on cones~\citep{Barndorff-Nielsen2001,Skorohod1991}.  A vector of CRMs $(W_1,\ldots,W_p)$ on $\mathbb R_+$ is a collection of random measures $W_k$, $k=1,\ldots,p$, such that, for any collection of disjoint measurable subsets $A_1,\ldots,A_n$ of $\mathbb R_+$, the vectors  $(W_1(A_1),\ldots,W_p(A_1))$, $(W_1(A_2),\ldots,W_p(A_2))$,\ldots,$(W_1(A_n),\ldots,W_p(A_n))$ are mutually independent. We only consider here vectors of CRMs with both random weights and locations. In this case, the measures
$W_{k}$, $k=1,\ldots,p$, are a.s. discrete  and take the form

\begin{equation}
W_{k}=\sum_{i=1}^{\infty}w_{ik}\delta_{\theta_{i}}.%
\end{equation}
The law of the vector of CRMs can be characterized by a Poisson point process on $\mathbb R^{p+1}_+$ with mean measure $\nu(dw_1,\ldots,dw_p,d\theta)$. We focus again on homogeneous vectors of CRMs with independent increments where the mean measure can be written as
\begin{equation}
\nu(dw_{1},\ldots,dw_{p},d\theta)=\rho(dw_{1},\ldots,dw_{p})\lambda(d\theta).
\end{equation}
where $\rho$ is a measure on $\mathbb R^p_+$, concentrated on $\mathbb R^p_+\backslash \{\mathbf 0 \}$,  which satisfies
\begin{equation}
\int_{\mathbb R_+^p} \min\left(1,{\sum_{k=1}^{p} w_{k}}\right)\rho(dw_1,\ldots,dw_p)<\infty.
\end{equation}
We use the same notation as for (scalar) CRMs and write simply $(W_1,\ldots,W_p)\sim \text{CRM}(\rho,\lambda)$. A key quantity is the multivariate Laplace exponent defined by
\begin{align}
\psi(t_{1},\ldots,t_{p})  &  :=-\log\mathbb{E}\left[  e^{-\sum_{k=1}^{p}%
t_{k}W_{k}([0,1])}\right] \\
&  =\int_{{\mathbb R^p_+}}\left(  1-e^{-\sum_{k=1}^{p}t_{k}w_{k}}\right)
\rho(dw_{1},\ldots,dw_{p})
\end{align}
Note that this quantity involves a $p$-dimensional integral which may not be analytically computable, and may be expensive to evaluate numerically. As for CRMs, if
\begin{equation}
\int_{\mathbb R^p_+} \rho(dw_1,\ldots,dw_p)=\infty
\end{equation}
then there will be an infinite number of $\theta_i\in[0,T]$ for which $\sum_k w_{ik}>0$ and the vector of CRMs is called infinite-activity. Otherwise, it is called finite-activity. Note that some (but not all) CRMs may still be marginally finite-activity.

\section{Proof of Propositions  \ref{th:sparsitygeneral} and  \ref{th:sparsityCCRM}}
\label{sec:proofs}

The proof of~\cite[Appendix C]{Caron2014} can be directly adapted to the multivariate generalization presented in this paper. We only provide a sketch of the proof. First, as $Z$ is a jointly exchangeable point process verifying \eqref{eq:kallenbergexchangeability} and under the moment condition~\eqref{eq:momentcondition}, it follows from the law of large numbers that \begin{equation*}
N_\alpha^{(e)}=\Theta(\alpha^2) \text{ a.s. as }\alpha\rightarrow\infty.
\end{equation*}
\paragraph{Finite-activity case.} If the vector of CRMs is finite-activity, the jump locations arise from an homogeneous Poisson process with finite rate, and $N_\alpha =\Theta(\alpha)$ a.s. It follows that
\begin{equation*}
N_\alpha^{(e)}=\Theta(N^2_\alpha) \text{ a.s. as }\alpha\rightarrow\infty.
\end{equation*}
\paragraph{Infinite-activity case.} Consider now the infinite-activity case. Following \cite{Caron2014}, one can lower bound the node counting process $N_\alpha$ by a counting process $\widetilde N_\alpha$ which is conditionally Poisson, and the same proof applies. For infinite-activity CCRM, we use the fact that $\psi(W_1([0,\alpha]),\ldots,W_p([0,\alpha]))\rightarrow \infty$ a.s., it follows that $N_\alpha =\Omega(\alpha)$ a.s., and therefore
\begin{equation*}
N_\alpha^{(e)}=o(N^2_\alpha) \text{ a.s. as }\alpha\rightarrow\infty.
\end{equation*}

Finally, for compound CRMs with regularly varying $\rho_0$ with exponent $\sigma$ and slowly varying function such that $\lim_{t\rightarrow \infty}\ell(t)>0$, Proposition \ref{th:regularvar} in Appendix~\ref{sec:app:multivariateregularvariation} implies that $N_\alpha =\omega(\alpha^{1+\sigma})$ a.s. and \[
N_\alpha^{(e)} =  O(N_\alpha^{2/(1+\sigma)}) \text{ a.s. as }\alpha\rightarrow\infty.
\]

\section{Proof of Theorem \ref{sec::expected_num}}
\label{sec:proofs2}

Let $D^*_\alpha=\sum_{k=1}^p D_{k\alpha}([0,\alpha])$ be the number of edges in the directed graph of size $\alpha$, $W_{k,\alpha}^*=W_k([0,\alpha])$ and $W_{\alpha}^*=(W_{1,\alpha}^*,\ldots,W_{p,\alpha}^*)^T$. Using Campbell's theorem,
\begin{equation}
\begin{split}
 \mathbb{E}\left[D^*_\alpha\right]
& =\mathbb{E}\left[\mathbb{E}\left[D^*_{\alpha} | W^*_{\alpha}\right]\right]= \mathbb{E}\left[ (W^*_{\alpha})^TW^*_{\alpha}\right] \\
& =  \mathbb{E}\left[W^*_{\alpha}\right]^T\mathbb{E}\left[W^*_{\alpha}\right] + \text{tr}(\text{cov}\left(W^*_{\alpha}\right))\\
& = \alpha^2 \mu^T\mu + \alpha \text{tr}(\Sigma)
\end{split}
\end{equation}
where we define
$$
\mu = \int_{\mathbb{R}^p_+} w \rho(dw_1,\ldots,dw_p),\,\,\,
 \Sigma =  \int_{\mathbb{R}^p_+}  w w^T \rho(dw_1,\ldots,dw_p).
 $$

Let $w_i=(w_{i1},\ldots,w_{ip})$. We have, using the extended Slivnyak-Mecke theorem~\citep[Theorem 3.3]{Moller2003}
\begin{equation}
\begin{split}
\mathbb{E}[N_\alpha^{(e)}]&= \mathbb{E}\left[ \mathbb{E}[N^{(e)}_{\alpha}| W_1,\ldots,W_p ]\right] \\
&=  \mathbb{E}\left[   \sum_i  \1{\theta_i \leq \alpha}\left[\left(1-e^{-w_i^Tw_i}\right) + \frac{1}{2}\sum_{j\neq i} \1{\theta_j \leq \alpha}\left(1-e^{-2w_i^Tw_j}\right)\right] \right] \\
&= \alpha \int_{\mathbb{R}^p_+}   \left (1-e^{-w^Tw}\right )  \rho(dw_{1},\ldots,dw_p) +  \frac{\alpha^2}{2} \int_{\mathbb{R}^p_+} \psi(2w_1,\ldots,2w_p)  \rho(dw_{1},\ldots,dw_p).
\end{split}
\end{equation}

Using the extended Slivnyak-Mecke theorem then Campbell's theorem,
\begin{equation}
 \begin{split}
\mathbb{E}[N_\alpha]&=  \mathbb{E} \left[ \mathbb{E} \left[ N_\alpha |W_1,\dots,W_p \right] \right] \\
&=  \mathbb{E} \left[\sum_i \left( 1-   e^{-2w_i^T (\sum_{j\neq i} w_j 1_{\theta_j \leq \alpha}) -w_i^T w_i } \right) \1{\theta_i \leq \alpha} \right]  \\
&= \alpha \int_{\mathbb{R}_+^p}  \mathbb{E} \left( 1-   e^{-2w^T (\sum_j w_j \1{\theta_j \leq \alpha}) -w^Tw } \right) \rho(dw_{1},\ldots,dw_p) \\
&= \alpha
\int_{\mathbb{R}_+^p} \left(1-e^{-w^Tw - \alpha \psi(2w_1,\ldots,2w_p)}\right) \rho(dw_1,\ldots,dw_p )
\end{split}
\label{mean_nodes_1}
\end{equation}

By monotone convergence, we have, as $\alpha$ tends to infinity,
$$
\mathbb{E}[N_\alpha]\sim \alpha \int_{\mathbb{R}_+^p} \rho(dw_{1},\ldots,w_p)
$$
if the CRM is finite-activity and $
\mathbb{E}[N_\alpha]=\omega (\alpha)$
otherwise.

\section{Simulation from a tilted truncated generalized gamma process}
\label{sec:app:simujumps}

We want to sample points from a Poisson process with {truncated} mean measure%
\begin{equation}
\rho^{\varepsilon}(dw)=h(w) w^{-1-\sigma}e^{-\tau w} {\mathds 1}_{w>\varepsilon}dw
\end{equation}
where $h$ is a monotone decreasing and bounded function, and $(\tau,\sigma)$ verify either $\tau\geq 0$ and $\sigma\in(0,1)$, or $\tau>0$ and $\sigma\in(-1,0]$. We will resort to adaptive thinning~\citep{Lewis1979,Ogata1981,Favaro2013}.

For $\tau>0$, consider the family of adaptive bounds%
\[
g_{t}(s)=h(t)t^{-1-\sigma}%
\exp(-\tau s)
\]
with $g_{t}(s)>\rho(s)$ for $s>t$. We have,
\begin{align*}
G_{t}(s)  &  =\int_{t}^{s}g_{t}(s^{\prime})ds^{\prime}\\
&  =\frac{h(t)}{\tau}t^{-1-\sigma}%
(\exp(-\tau t)-\exp(-\tau s))
\end{align*}
and
\begin{align*}
G_{t}^{-1}(r)  &  =-\frac{1}{\tau}\log\left(  \exp(-\tau t)-\frac{r\tau}{ t^{-1-\sigma}h(t)}\right).
\end{align*}
For $\tau=0$, we consider bounds
\[
g_{t}(s)=h(t)s^{-1-\sigma}%
\]
and we obtain
\begin{align*}
G_{t}(s)   &=\frac{h(t)}{\sigma}(t^{-\sigma
}-s^{-\sigma})\\
G_{t}^{-1}(r)&=\left[  t^{-\sigma}-\frac{r\sigma}{h(t)}\right]  ^{-1/\sigma}.%
\end{align*}%

The adaptive thinning sampling scheme is as follows:

\begin{enumerate}
\item Set $N=\emptyset,t=\varepsilon$

\item iterate until termination

\begin{enumerate}
\item Draw $r\sim \text{Exp}(1)$

\item If $r>G_{t}(\infty)$, terminate; else set $t^{\prime}=G_{t}^{-1}(r);$

\item with probability $\rho^{\varepsilon}(t^{\prime})/g_{t}(t^{\prime})$
accept sample {$t^{\prime}$} and set $N=N\cup\{t^{\prime}\}$

\item set $t=t^{\prime}$ and continue
\end{enumerate}

\item Return $N$ a draw from the Poisson random measure with intensity
$\rho^{\varepsilon}$ on $[\varepsilon,+\infty)$
\end{enumerate}

The efficiency of this approach depends on the acceptance probability, which is given, for $\tau>0$, by
\[
\frac{\rho^{\varepsilon}(s)}{g_{t}(s)}=\frac{h(s)s^{-1-\sigma}%
}{h(t)t^{-1-\sigma}}<1%
\]
for $s>t$.

\section{Bipartite networks}

\label{sec:app:bipartite}

It is possible to use a construction similar to that of Section~\ref{sec:model} to model bipartite graphs, and extend the model of \cite{Caron2012}. A bipartite graph is a graph with two types of nodes, where only connections between nodes of different types are allowed. Nodes of the first type are embedded at locations $\theta_i\in\mathbb R_+$, and nodes of the second type at location $\theta_j^\prime\in\mathbb R_+$. The bipartite graph will be represented by a (non-symmetric) point process
\begin{equation}
Z=\sum_{i,j} z_{ij}\delta_{(\theta_i,\theta_j^\prime)}\label{eq:pointprocessZbipartite}
\end{equation}
where $z_{ij}=1$ if there is an edge between node $i$ of type 1 and node $j$ of type 2.

\paragraph{Statistical Model.} We consider the model
\begin{align*}
W_{1},\ldots,W_{p}  &  \sim\CRM(\rho,\lambda)\\
W_{1}^{\prime},\ldots,W_{p}^{\prime}  &  \sim\CRM(\rho^\prime,\lambda)\\
\intertext{and for $k=1,\ldots,p$,}
D_k|W_{k},W_{k}^{\prime}  &  \sim\text{Poisson}\left( W_{k}\times W_{k}^{\prime}\right) \\
D_k  &  =\sum_{i,j}n_{ijk}\delta_{(\theta_{i}%
,\theta_{j}^\prime)}
\end{align*}
and $z_{ij}=\min(1,\sum_{k=1}^p n_{ijk})$.

\paragraph{{Posterior i}nference.} We derive here the inference algorithm when $(W_1,\ldots,W_p)$ and $(W_{1}^{\prime},\ldots,W_{p}^{\prime})$ are compound CRMs with {$F$} and $\rho_0$ taking the form \eqref{eq:Fproductgamma} and \eqref{eq:levyGGP}.

Assume that we observe a set of connections $z=(z_{ij})_{i=1,\ldots,N_{\alpha};j=1,\ldots N_{\alpha}^{\prime}}$. We introduce latent variables $n_{ijk}$, for ${1}\leq i\leq N_\alpha$, ${1}\leq {j}\leq N^{\prime}_\alpha$, $k=1,\ldots,p$,
\[
(n_{ij1},\ldots,n_{ijp})|w,w^{\prime},z\sim\left\{
\begin{array}
[c]{ll}%
\delta_{(0,\ldots,0)} & \text{if }z_{ij}=0\\
\tPoi(w_{i1}w_{j1}^{\prime},\ldots,w_{ip}w_{jp}^{\prime}) & \text{if
}z_{ij}=1
\end{array}
\right. .
\]

We want to approximate
\[
p((w_{10},\ldots w_{N_{\alpha}0}),(\beta_{1k},\ldots,\beta_{N_{\alpha}%
k},w_{\ast k})_{k=1,\ldots,p},(w_{10}^{\prime},\ldots,w_{N_{\alpha}^{{\prime}}0}^{\prime
}),(\beta_{1k}^{\prime},\ldots,\beta_{N_{\alpha}^{{\prime}}k}^{\prime},w_{\ast k}%
^{\prime})_{k=1,\ldots,p},\phi,\alpha,\phi^{\prime},\alpha^\prime|z)
\]

{Denote $m_{ik}=\sum_{j=1}^{N_{\alpha}^{\prime}}n_{ijk}$ and $m_{i}=\sum_{k=1}^{p}m_{ik}$.} The MCMC algorithm iterates as follows:

\begin{enumerate}
\item Update $(\alpha,\phi){\vert\text{rest}}$ using a Metropolis-Hastings step{.}

\item Update
\[
w_{i0}|\text{rest}\sim\Gam\left(  m_{i}-\sigma,\tau+\sum_{k=1}%
^{p}\beta_{ik}\left[  \gamma_{k} + \left(  \sum_{j=1}^{N_{\alpha}^{\prime}%
}w_{jk}^{\prime}\right)  +w_{\ast k}^{\prime}\right]  \right){.}
\]

\item Update
\[
\beta_{ik}|\text{rest}\sim\Gam\left(  a_{k}+m_{ik},b_{k}+w_{i0}\left[
\gamma_{k} + \left(  \sum_{j=1}^{N_{\alpha}^{\prime}}w_{jk}^{\prime}\right)
+w_{\ast k}^{\prime}\right]  \right){.}
\]

\item Update $(w_{\ast1},\ldots,w_{\ast p})$%
$\vert$%
rest{.}

\item Update the latent variables $n_{ijk}{\vert\text{rest}}${.}
\item Repeat steps 1-4 to update $(\alpha^{\prime}%
,\phi^{\prime})$, $(w_{10}^{\prime},\ldots,w_{N_{\alpha}^{\prime}0}^{\prime}%
)$, $(\beta_{1k}^{\prime}%
,\ldots,\beta_{N_{\alpha}^{\prime}k}^{\prime})_{k=1,\ldots,p}$ and $(w_{\ast
1}^{\prime},\ldots,w_{\ast p}^{\prime})$.
\end{enumerate}

\section{Gaussian approximation of the sum of small jumps}
\label{sec:app:gaussianapprox}

\begin{theorem}
\label{th:gaussianapprox}
Consider the multivariate random variable $X_{\varepsilon}\in\mathbb{R}%
_{+}^{p}$ with moment generating function%
\[
\mathbb{E}[e^{-t^{T}X_{\varepsilon}}]=\exp\left[  -\alpha\int_{\mathbb{R}%
_{+}^{p}}\left(  1-e^{-\sum_{k=1}^{p}t_{k}w_{k}}\right)  \rho_{\varepsilon
}(dw_{1},\ldots,dw_{p})\right]
\]
where $\alpha>0$ and%
\[
\rho_{\varepsilon}(dw_{1},\ldots,dw_{p})=e^{-\sum_{k=1}^{p}\gamma_{k}w_{k}}%
\int_{0}^{\varepsilon}w_{0}^{-p}F\left(  \frac{dw_{1}}{w_{0}},\ldots
,\frac{dw_{p}}{w_{0}}\right)  \rho_{0}(dw_{0})
\]

\end{theorem}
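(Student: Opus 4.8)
The plan is to establish the multivariate central limit theorem
\[
\Sigma_{\varepsilon}^{-1/2}(X_{\varepsilon}-\mu_{\varepsilon})\overset{d}{\rightarrow}\mathcal{N}(0,I_{p})\quad\text{as }\varepsilon\rightarrow0
\]
by a direct characteristic-function computation, adapting the multivariate small-jumps approximation of \cite{Cohen2007} (itself the multivariate extension of \cite{Asmussen2001}) while exploiting the explicit compound structure of $\rho_{\varepsilon}$. The vector $X_{\varepsilon}$ is infinitely divisible with no Gaussian component and L\'evy measure $\alpha\rho_{\varepsilon}$, with mean $\mu_{\varepsilon}=\alpha\int_{\mathbb{R}_{+}^{p}}w\,\rho_{\varepsilon}(dw)$ and covariance $\Sigma_{\varepsilon}=\alpha\int_{\mathbb{R}_{+}^{p}}ww^{T}\rho_{\varepsilon}(dw)$, both finite for small $\varepsilon$ under the moment assumptions on $\rho_{0}$ and $F$. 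Writing $Y_{\varepsilon}=\Sigma_{\varepsilon}^{-1/2}(X_{\varepsilon}-\mu_{\varepsilon})$, by the L\'evy continuity theorem it suffices to show that its characteristic function converges pointwise to $e^{-\|t\|^{2}/2}$, and the L\'evy--Khintchine formula gives
\[
\log\EE[e^{it^{T}Y_{\varepsilon}}]=\alpha\int_{\mathbb{R}_{+}^{p}}\left(e^{it^{T}\Sigma_{\varepsilon}^{-1/2}w}-1-it^{T}\Sigma_{\varepsilon}^{-1/2}w\right)\rho_{\varepsilon}(dw).
\]

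Next I would decompose $g(x):=e^{ix}-1-ix=-\tfrac12 x^{2}+r(x)$ and substitute $x=t^{T}\Sigma_{\varepsilon}^{-1/2}w$. The quadratic term integrates to $-\tfrac12\,t^{T}\Sigma_{\varepsilon}^{-1/2}\bigl(\alpha\int ww^{T}\rho_{\varepsilon}\bigr)\Sigma_{\varepsilon}^{-1/2}t=-\tfrac12\|t\|^{2}$ \emph{exactly}, by the very definition of $\Sigma_{\varepsilon}$, so everything reduces to showing the remainder $\alpha\int r(t^{T}\Sigma_{\varepsilon}^{-1/2}w)\,\rho_{\varepsilon}(dw)\rightarrow0$. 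Using $|r(x)|\le\min(|x|^{3}/6,\,x^{2})$ together with $|t^{T}\Sigma_{\varepsilon}^{-1/2}w|\le\|t\|\,\|\Sigma_{\varepsilon}^{-1/2}w\|$, I would, for any $\delta>0$, apply the cubic bound on $\{\|\Sigma_{\varepsilon}^{-1/2}w\|\le\delta\}$, which yields at most $\tfrac{\delta p}{6}\|t\|^{3}$ since $\alpha\int\|\Sigma_{\varepsilon}^{-1/2}w\|^{2}\rho_{\varepsilon}(dw)=\mathrm{tr}(I_{p})=p$, and the quadratic bound on the complement. The whole problem then reduces to the Lindeberg-type condition
\[
L_{\varepsilon}(\delta):=\alpha\int_{\{\|\Sigma_{\varepsilon}^{-1/2}w\|>\delta\}}\|\Sigma_{\varepsilon}^{-1/2}w\|^{2}\,\rho_{\varepsilon}(dw)\longrightarrow0\quad\text{as }\varepsilon\rightarrow0,\ \forall\delta>0,
\]
after which letting $\varepsilon\rightarrow0$ and then $\delta\rightarrow0$ concludes.

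Finally I would verify $L_{\varepsilon}(\delta)\rightarrow0$ through the compound structure. Writing $w_{k}=\beta_{k}w_{0}$ with $w_{0}<\varepsilon$ and $(\beta_{1},\ldots,\beta_{p})\sim F$ (up to the tilting $e^{-w_{0}\sum_{k}\gamma_{k}\beta_{k}}$, which tends to $1$ as $w_{0}\rightarrow0$), one gets $\Sigma_{\varepsilon}=\alpha\bigl(\int_{0}^{\varepsilon}w_{0}^{2}\rho_{0}(dw_{0})\bigr)\,\EE_{F}[\beta\beta^{T}]\,(1+o(1))$, which is positive definite as soon as $F$ is not supported on a hyperplane and scales like $\int_{0}^{\varepsilon}w_{0}^{2}\rho_{0}(dw_{0})$. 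For the GGP base measure \eqref{eq:levyGGP} this integral is $\Theta(\varepsilon^{2-\sigma})$, whence $\|\Sigma_{\varepsilon}^{-1/2}w\|=O(\varepsilon^{-(1-\sigma/2)}w_{0}\|\beta\|)=O(\varepsilon^{\sigma/2}\|\beta\|)$ because $w_{0}<\varepsilon$; for $\sigma\in(0,1)$ this is uniformly small, and combined with the exponential tails of the gamma scores $F$ it forces $L_{\varepsilon}(\delta)\rightarrow0$. The same scaling explains the failure at $\sigma=0$, where $\int_{0}^{\varepsilon}w_{0}^{2}\rho_{0}(dw_{0})=\Theta(\varepsilon^{2})$ so $\|\Sigma_{\varepsilon}^{-1/2}w\|=O(\|\beta\|)$ remains $O(1)$ and $L_{\varepsilon}(\delta)$ does not vanish, consistent with the approximation being asserted only for $\sigma\in(0,1)$. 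The main obstacle is precisely this last step: pinning down the exact order and nondegeneracy of $\Sigma_{\varepsilon}$ via the regular variation of $\rho_{0}$ at the origin and excluding the boundary of the normalized support through the moment and tail behaviour of $F$; the characteristic-function reduction of the first two paragraphs is comparatively routine.
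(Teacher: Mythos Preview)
Your plan is correct and the key scaling insight---$\Sigma_{\varepsilon}$ is of order $\int_{0}^{\varepsilon}w_{0}^{2}\rho_{0}(dw_{0})\sim\varepsilon^{2-\sigma}\ell(1/\varepsilon)$, while the jumps satisfy $w_{0}<\varepsilon$, so the normalized jumps $\Sigma_{\varepsilon}^{-1/2}w$ are $O(\varepsilon^{\sigma/2}\|\beta\|)$ and the Lindeberg set forces $\|\beta\|\gtrsim\varepsilon^{-\sigma/2}\to\infty$---is exactly the mechanism the paper uses. The difference is one of packaging rather than of substance.

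The paper does not carry out the characteristic-function expansion directly. Instead it passes to spherical coordinates $(r,u)$ on $\mathbb{R}_{+}^{p}$, writes $\rho_{\varepsilon}$ as a product of a radial L\'evy measure $\mu_{\varepsilon}(dr\mid u)$ and the uniform measure on the sphere, and then invokes Theorem~2.4 of \cite{Cohen2007} as a black box. That theorem requires two hypotheses: (i) a common scalar rate $b_{\varepsilon}$ such that $\sigma_{\varepsilon}(u)/b_{\varepsilon}$ has a positive limit $U$-a.e., and (ii) the tail condition $b_{\varepsilon}^{-2}\int_{\|w\|>\kappa b_{\varepsilon}}\|w\|^{2}\rho_{\varepsilon}(dw)\to0$. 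The paper takes $b_{\varepsilon}=\varepsilon^{1-\sigma/2}\sqrt{\ell(1/\varepsilon)}$ and checks both via the change of variables $z=r/w_{0}$ and the regular-variation lemma~\ref{th:lemmagnedin}. Condition~(ii) is precisely your Lindeberg condition $L_{\varepsilon}(\delta)\to0$ up to replacing $\Sigma_{\varepsilon}^{-1/2}$ by the scalar $b_{\varepsilon}^{-1}$, and the paper's verification of it (bounding $\kappa b_{\varepsilon}/w_{0}\geq\kappa_{2}\varepsilon^{-\sigma/4}$ and using the second-moment assumption on $F$) is the same estimate you sketch in your last paragraph. Condition~(i) is the analogue of your nondegeneracy requirement on $\Sigma_{\varepsilon}$, but phrased directionally: it needs $\int_{0}^{\infty}f(zu_{1},\ldots,zu_{p})\,dz>0$ for $U$-a.e.\ $u$, which is the paper's hypothesis on $F$ and is slightly different from (though closely related to) your ``$F$ not supported on a hyperplane.''

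What your route buys is self-containment: you never leave the Cartesian coordinates and never need to look up Cohen's theorem. What the paper's route buys is modularity: once the spherical rewriting is done, all the analysis is outsourced, and the asymptotic expressions for $\mu_{\varepsilon}$ and $\Sigma_{\varepsilon}$ drop out of the same computation. Either way the crux is the same Lindeberg estimate, and your identification of it as ``the main obstacle'' is accurate.
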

with $\varepsilon>0$, $\rho_0$ is a L\'evy measure on $\mathbb R_+$ and $F$ is a probability distribution on $\mathbb{R}%
_{+}^{p}$ with {density} $f$ verifying%
\begin{align*}
\int_{0}^{\infty}f(zu_{1},\ldots,zu_{p})dz  &  >0\text{ }U\text{-almost
everywhere}\\
\int_{\mathbb{R}_{+}^{p}}\left\Vert \beta_{1:p}\right\Vert ^{2}f(\beta
_{1},\ldots,\beta_{p})d\beta_{1:p}  &  <\infty
\end{align*}
where $U$ is the uniform distribution on the unit sphere $S^{p-1}$. Then if
$\rho_{0}$ is a regularly varying L\'{e}vy measure with exponent $\sigma
\in(0,1)$, i.e.%
\[
\int_{x}^{\infty}\rho_{0}(dw_{0})\overset{x\downarrow0}{\sim}x^{-\sigma}%
\ell(1/x)
\]
where $\ell:(0,\infty)\rightarrow(0,\infty)$ is a slowly varying function then%
\[
\Sigma_{\varepsilon}^{-1/2}(X_{\varepsilon}-\mu_{\varepsilon}%
)\overset{d}{\rightarrow}\mathcal{N}(0,I_{p})
\]
as $\varepsilon\rightarrow0$, where%
\begin{align*}
\mu_{\varepsilon}  &  =\alpha\int_{\mathbb{R}_{+}^{p}}w\rho_{\varepsilon
}(dw_{1},\ldots,dw_{p})\\
\Sigma_{\varepsilon}  &  =\alpha\int_{\mathbb{R}_{+}^{p}}ww^{T}\rho
_{\varepsilon}(dw_{1},\ldots,dw_{p})
\end{align*}
with%
\begin{align*}
\mu_{\varepsilon}  &  \sim\alpha\mathbb{E}[\beta]\frac{\sigma}{1-\sigma
}\varepsilon^{1-\sigma}\ell(1/\varepsilon)\text{ as }\varepsilon\rightarrow0\\
\Sigma_{\varepsilon}  &  \sim\alpha\mathbb{E}[\beta\beta^{T}]\frac{\sigma
}{2-\sigma}\varepsilon^{2-\sigma}\ell(1/\varepsilon)\text{ as }\varepsilon
\rightarrow0
\end{align*}
where $\beta$ {is distributed from} $F$.

\begin{proof}
We write the model in spherical form. Let $r=\sqrt{\sum w_{k}^{2}}$ and
$u_{k}=\frac{w_{k}}{r}$ for $k=1,\ldots,p-1$. The determinant of the Jacobian
is $\frac{r^{p-1}}{\sqrt{1-\sum_{k=1}^{p-1}u_{k}^{2}}}$ and so%
\begin{align*}
\widetilde{\rho}_{\varepsilon}(r,u_{1},\ldots,u_{p-1})  &  =\frac{r^{p-1}%
}{u_{p}}e^{-r\sum_{k=1}^{p}\gamma_{k}u_{k}}\int_{0}^{\varepsilon}{w_{0}^{-p}}f\left(  \frac{ru_{1}}{w_{0}},\ldots,\frac{ru_{p}}{w_{0}%
}\right)  \rho_{0}(dw_{0})drdu_{1:p-1}\\
&  :=\mu_{\varepsilon}(dr|u_{1:p-1})U(du_{1:p-1})
\end{align*}
where $u_{p}=\sqrt{1-\sum_{k=1}^{p-1}u_{k}^{2}}$, $\mu_{\varepsilon
}(dr|u)=r^{p-1}e^{-r\sum_{k=1}^{p}\gamma_{k}u_{k}}\int_{0}^{\varepsilon}%
{w_{0}^{-p}}f\left(  \frac{ru_{1}}{w_{0}},\ldots,\frac{ru_{p}}{w_{0}%
}\right)  \rho_{0}(dw_{0})dr$ and $U(du)=\frac{1}{u_{p}}du_{1:p}$ is the
uniform distribution on the unit sphere $S^{p-1}$.

In order to apply Theorem 2.4 in \cite{Cohen2007} \citep[see also][]{Asmussen2001}, we need to show that there exists a function
$b_{\varepsilon}:(0,1]\rightarrow(0,+\infty)$ such that%
\begin{equation}
\lim_{\varepsilon\rightarrow0}\frac{\sigma_{\varepsilon}(u)}{b_{\varepsilon}%
}>0\text{, }U\text{-almost everywhere} \label{eq:cohenprop1}%
\end{equation}
where
\[
\sigma_{\varepsilon}^{2}(u)=\int_{0}^{\infty}r^{2}\mu_{\varepsilon}(dr|u)
\]
and for every $\kappa>\varepsilon$%
\begin{equation}
\lim_{\varepsilon\rightarrow0}\frac{1}{b_{\varepsilon}^{2}}\int_{\left\Vert
w_{1:p}\right\Vert >\kappa b_{\varepsilon}}\left\Vert w_{1:p}\right\Vert
^{2}\rho_{\varepsilon}(dw_{1},\ldots,dw_{p})=0 \label{eq:cohenprop2}%
\end{equation}

Assume that $\int_{0}^{\infty}f\left(  zu_{1},\ldots,zu_{p}\right)  dz>0$
$U$-almost everywhere. With the change of variable $z=\frac{r}{w_{0}}$, and
the dominated convergence theorem we obtain%
\begin{align*}
\sigma_{\varepsilon}^{2}(u)  &  =\int_{0}^{\infty}z^{p+1}f\left(
zu_{1},\ldots,zu_{p}\right)  \left[  \int_{0}^{\varepsilon}e^{-zw_{0}%
\sum_{k=1}^{p}\gamma_{k}u_{k}}w_{0}^{2}\rho_{0}(dw_{0})\right]  dz\\
&  \sim\left(  \int_{0}^{\infty}z^{p+1}f\left(  zu_{1},\ldots,zu_{p}\right)
dz\right)  \left(  \int_{0}^{\varepsilon}w_{0}^{2}\rho_{0}(dw_{0})\right)
\text{ as }\varepsilon\rightarrow0\\
&  \sim\left(  \int_{0}^{\infty}z^{p+1}f\left(  zu_{1},\ldots,zu_{p}\right)
dz\right)  \frac{\sigma}{2-\sigma}\varepsilon^{2-\sigma}\ell(1/\varepsilon
)\text{ as }\varepsilon\rightarrow0
\end{align*}

Let $b_{\varepsilon}=\varepsilon^{1-\sigma/2}\sqrt{\ell(1/\varepsilon)}$, we
have
\begin{equation}
\lim_{\varepsilon\rightarrow0}\frac{\sigma_{\varepsilon}^{2}(u)}%
{b_{\varepsilon}^{2}}=\left(  \int_{0}^{\infty}z^{p+1}f\left(  zu_{1}%
,\ldots,zu_{p}\right)  dz\right)  \frac{\sigma}{2-\sigma}>0\text{,
}U\text{-almost everywhere} \label{eq:cohenprop1b}%
\end{equation}

Now consider, for any $\kappa>0$,%
\begin{align*}
I_{\varepsilon}  &  =\int_{\left\Vert w_{1:p}\right\Vert >\kappa
b_{\varepsilon}}\left\Vert w_{1:p}\right\Vert ^{2}\nu_{\varepsilon}%
(dw_{1},\ldots,dw_{p})\\
&  =\int_{0}^{\varepsilon}\int_{\left\Vert \beta_{1:p}\right\Vert
>\frac{\kappa b_{\varepsilon}}{w_{0}}}w_{0}^{2}\left\Vert \beta_{1:p}%
\right\Vert ^{2}e^{-w_{0}\sum_{k=1}^{p}\gamma_{k}\beta_{k}}f\left(  \beta
_{1},\ldots,\beta_{k}\right)  \rho_{0}(dw_{0})d\beta_{1:p}%
\end{align*}

For $w_{0}\in(0,\varepsilon)$, we have $\frac{\kappa b_{\varepsilon}}{w_{0}%
}\geq\frac{\kappa b_{\varepsilon}}{\varepsilon}=\varepsilon^{-\sigma/2}%
\ell(1/\varepsilon)>\kappa_{2}\varepsilon^{-\sigma/4}$ for $\varepsilon$ small
enough as $t^{\delta}\ell(t)\rightarrow0$ for any $\delta>0$ as $t\rightarrow
\infty$. So for $\varepsilon$ small enough%
\begin{align*}
I_{\varepsilon}  &  >\int_{0}^{\varepsilon}\int_{\left\Vert \beta
_{1:p}\right\Vert >\kappa_{2}\varepsilon^{-\sigma/4}}w_{0}^{2}\left\Vert
\beta_{1:p}\right\Vert ^{2}e^{-w_{0}\sum_{k=1}^{p}\gamma_{k}\beta_{k}}f\left(
\beta_{1},\ldots,\beta_{k}\right)  \rho_{0}(dw_{0})d\beta_{1:p}\\
&  >\left[  \int_{\left\Vert \beta_{1:p}\right\Vert >\kappa_{2}\varepsilon
^{-\sigma/4}}\left\Vert \beta_{1:p}\right\Vert ^{2}f\left(  \beta_{1}%
,\ldots,\beta_{k}\right)  d\beta_{1:p}\right]  \left[  \int_{0}^{\varepsilon
}w_{0}^{2}\rho_{0}(dw_{0})\right]
\end{align*}
As $\left[  \int_{0}^{\varepsilon}w_{0}^{2}\rho_{0}(dw_{0})\right]  \sim
\frac{\sigma}{2-\sigma}b_{\varepsilon}^{2}$ when $\varepsilon\rightarrow0$, we
conclude that
\begin{equation}
\lim_{\varepsilon\rightarrow0}I_{\varepsilon}=\lim_{\varepsilon\rightarrow
0}\frac{\sigma}{2-\sigma}\int_{\left\Vert \beta_{1:p}\right\Vert >\kappa
_{2}\varepsilon^{-\sigma/4}}\left\Vert \beta_{1:p}\right\Vert ^{2}f\left(
\beta_{1},\ldots,\beta_{k}\right)  d\beta_{1:p}=0 \label{eq:cohenprop2b}%
\end{equation}
Equations~\eqref{eq:cohenprop1b} and \eqref{eq:cohenprop2b} with Theorem 2.4
of \cite{Cohen2007} yield
\[
\Sigma_{\varepsilon}^{-1/2}(X_{\varepsilon}-\mu_{\varepsilon}%
)\overset{d}{\rightarrow}\mathcal{N}(0,I_{p})
\]
as $\varepsilon\rightarrow0$, where%
\begin{align*}
\mu_{\varepsilon}  &  =\alpha\int_{\mathbb{R}_{+}^{p}}w_{1:p}\ \rho_{\varepsilon
}(dw_{1},\ldots,dw_{p})\\
&  =\alpha\int_{_{\mathbb{R}_{+}^{p}}}\int_{0}^{\varepsilon}w_{0}\beta_{1:p}\text{
}e^{-w_{0}\sum_{k=1}^{p}\gamma_{k}\beta_{k}}\rho_{0}(dw_{0})f(\beta_{1},\ldots,\beta_{p}%
)d\beta_{1:p}\\
&  \sim\alpha\mathbb{E}[\beta_{1:p}]\frac{\sigma}{1-\sigma}\varepsilon^{1-\sigma
}\ell(1/\varepsilon)\text{ as }\varepsilon\rightarrow0
\end{align*}
and%
\begin{align*}
\Sigma_{\varepsilon}  &  =\alpha\int_{\mathbb{R}_{+}^{p}}w_{1:p}w_{1:p}^{T}\ \rho
_{\varepsilon}(dw_{1},\ldots,dw_{p})\\
&  \sim\alpha\mathbb{E}[\beta_{1:p}\beta_{1:p}^{T}]\frac{\sigma}{2-\sigma}\varepsilon
^{2-\sigma}\ell(1/\varepsilon)\text{ as }\varepsilon\rightarrow0
\end{align*}

using the dominated convergence theorem and lemmas \ref{prop:asymptrho2} and \ref{th:lemmagnedin}.
\end{proof}

\section{Technical lemmas}
\label{sec:app:multivariateregularvariation}

\begin{proposition}\label{th:regularvar}
Let $\nu$ be a L\'{e}vy measure defined by Eq.~\eqref{eq:nu} and \eqref{eq:rho} and $\psi$ be its
multivariate Laplace exponent. Assume that $\overline{\rho}_{0}$ is a
regularly varying function with exponent $\sigma\in(0,1)$:
\begin{equation}
\overline{\rho}_{0}\overset{x\downarrow 0}{\sim} x^{-\sigma}\ell(1/x)
\end{equation}
Then $\psi$ is (multivariate) regularly varying~\citep{Resnick2013}, with exponent $\sigma.$ More
precisely, for any $(x_{1},\ldots x_{p})\in(0,\infty)^{p}$, we have%
\begin{align*}
\psi(tx_{1},\ldots,tx_{p})  &  =\int_{{\mathbb R^p_+}}\left(  1-e^{-t\sum
_{k=1}^{p}x_{k}w_{k}}\right)  \nu(dw_{1},\ldots,dw_{p})\\
&  \overset{t\uparrow
\infty}{\sim} t^{\sigma}\Gamma(1-\sigma)\ell(t)\mathbb{E}\left[  \left(  \sum
_{k=1}^{p}x_{k}\beta_{k}\right)  ^{\sigma}\right].
\end{align*}

\end{proposition}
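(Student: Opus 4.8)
The plan is to exploit the compound structure \eqref{eq:rho} to reduce the $p$-dimensional Laplace exponent to a one-dimensional problem governed by the base measure $\rho_0$, and then to run a scalar Tauberian argument inside an integral over the scores. Writing $w_k=w_0\beta_k$ and using Tonelli, the change of variables $w_0^{-p}\,dw_{1:p}=d\beta_{1:p}$ turns \eqref{eq:Laplaceexponent} into
\begin{equation*}
\psi(tx_1,\ldots,tx_p)=\int_0^\infty\!\!\int_{\mathbb R_+^p}\bigl(1-e^{-w_0 t\,S(\beta)}\bigr)e^{-w_0\, g(\beta)}F(d\beta_{1:p})\,\rho_0(dw_0),
\end{equation*}
where $S(\beta)=\sum_{k=1}^p x_k\beta_k$ and $g(\beta)=\sum_{k=1}^p\gamma_k\beta_k$. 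Introducing the scalar Laplace exponent $\psi_0(t)=\int_0^\infty(1-e^{-tw_0})\rho_0(dw_0)$ of the base measure and using $(1-e^{-w_0 tS})e^{-w_0 g}=e^{-w_0 g}-e^{-w_0(tS+g)}$, the inner $w_0$-integral collapses to
\begin{equation*}
I(t;\beta)=\psi_0\!\bigl(tS(\beta)+g(\beta)\bigr)-\psi_0\!\bigl(g(\beta)\bigr),
\end{equation*}
so it remains to understand $\psi_0$ at infinity and to integrate $I(t;\beta)$ against $F$.

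First I would establish the scalar asymptotic $\psi_0(t)\sim\Gamma(1-\sigma)\,t^\sigma\ell(t)$ as $t\to\infty$. Integrating by parts gives $\psi_0(t)=t\int_0^\infty e^{-ts}\overline\rho_0(s)\,ds$, and the substitution $s=r/t$ yields $\psi_0(t)=\int_0^\infty e^{-r}\,\overline\rho_0(r/t)\,dr$. Since $\overline\rho_0$ is regularly varying at $0$ with index $-\sigma$, the uniform convergence theorem for regularly varying functions together with the Potter bounds justifies passing to the limit under the integral, giving $\psi_0(t)/(t^\sigma\ell(t))\to\int_0^\infty e^{-r}r^{-\sigma}\,dr=\Gamma(1-\sigma)$; this is Karamata's Tauberian theorem. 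Consequently, for $F$-almost every $\beta$ (with $S(\beta)>0$ since the $x_k>0$ and $F$ is nondegenerate at $0$), using $tS+g\sim tS$ and the slow variation of $\ell$,
\begin{equation*}
\frac{I(t;\beta)}{t^\sigma\ell(t)}=\frac{\psi_0(tS+g)}{t^\sigma\ell(t)}-\frac{\psi_0(g)}{t^\sigma\ell(t)}\xrightarrow[t\to\infty]{}\Gamma(1-\sigma)\,S(\beta)^\sigma,
\end{equation*}
the tilting term $g(\beta)$ and the constant $\psi_0(g)$ washing out because the tail of $\psi_0$ is governed by the small jumps of $\rho_0$ and $t^\sigma\ell(t)\to\infty$.

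It then remains to interchange the limit with the integral over $\beta$, and this is the main obstacle. Here I would use that $\psi_0$ is a Bernstein function with $\psi_0(0)=0$, hence concave, increasing and subadditive: this gives $I(t;\beta)\le\psi_0\!\bigl(tS(\beta)\bigr)$ and $\psi_0(tS)\le(1+S)\psi_0(t)$ (from $\psi_0(\lambda t)\le\lambda\psi_0(t)$ for $\lambda\ge1$ and monotonicity for $\lambda\le1$). Therefore, for $t$ large enough that $\psi_0(t)\le 2\Gamma(1-\sigma)t^\sigma\ell(t)$,
\begin{equation*}
\frac{I(t;\beta)}{t^\sigma\ell(t)}\le 2\Gamma(1-\sigma)\bigl(1+S(\beta)\bigr),
\end{equation*}
which is $F$-integrable under a finite first-moment assumption on $F$ (satisfied in particular for the product-of-gammas choice \eqref{eq:Fproductgamma}). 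Dominated convergence then yields
\begin{equation*}
\frac{\psi(tx_1,\ldots,tx_p)}{t^\sigma\ell(t)}\xrightarrow[t\to\infty]{}\Gamma(1-\sigma)\int_{\mathbb R_+^p}S(\beta)^\sigma F(d\beta)=\Gamma(1-\sigma)\,\mathbb E\Bigl[\bigl(\textstyle\sum_{k=1}^p x_k\beta_k\bigr)^\sigma\Bigr],
\end{equation*}
which is the claimed equivalence and shows $\psi$ is multivariate regularly varying with index $\sigma$. If one wishes to dispense with the first-moment assumption, the concavity bound can be replaced by a Potter bound $\psi_0(tS)\le C\psi_0(t)\bigl(S^{\sigma-\delta}+S^{\sigma+\delta}\bigr)$, reducing integrability to a finite $(\sigma+\delta)$-moment of $\sum_{k=1}^p x_k\beta_k$.
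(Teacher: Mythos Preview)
Your argument is correct and follows the same overall route as the paper: use the compound structure \eqref{eq:rho} to rewrite $\psi(tx_1,\ldots,tx_p)$ as an $F$-integral of a one-dimensional Laplace exponent in $w_0$, apply the scalar Tauberian equivalence $\psi_0(t)\sim\Gamma(1-\sigma)t^\sigma\ell(t)$ (this is the paper's Lemma~\ref{th:lemmagnedin}), and finish by dominated convergence over $\beta$.

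There are two minor points of divergence worth noting. First, the paper absorbs the tilt $e^{-w_0 g(\beta)}$ by appealing to Lemma~\ref{prop:asymptrho2}, which says the tilted tail $\int_x^\infty e^{-cw}\rho_0(dw)$ has the same regular variation as the untilted one, and then applies the Tauberian lemma to the tilted measure directly. Your device of writing the inner integral as $\psi_0(tS+g)-\psi_0(g)$ and letting the additive shift wash out via $tS+g\sim tS$ accomplishes the same thing without the auxiliary lemma, and is arguably more transparent. Second, the paper invokes dominated convergence without supplying a dominating function; your Bernstein-function bound $I(t;\beta)\le\psi_0(tS(\beta))\le(1+S(\beta))\psi_0(t)$, based on subadditivity and concavity, fills that gap cleanly (under a first-moment condition on $F$), and your Potter-bound remark shows how to weaken this to a $(\sigma+\delta)$-moment if needed. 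In short, same skeleton, but your version is more self-contained on the domination step.
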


\begin{proof}%
\begin{align*}
\psi(tx_{1},\ldots,tx_{p})  &  =\int_{{\mathbb R^p_+}}\left(  1-e^{-t\sum
_{k=1}^{p}x_{k}w_{k}}\right)  \nu(dw_{1},\ldots,dw_{p})\\
&  =\int_{{\mathbb R^p_+}}\left(  1-e^{-t\sum_{k=1}^{p}x_{k}w_{k}}\right)
\nu(dw_{1},\ldots,dw_{p})\\
&  =\int_{{\mathbb R^p_+}}f(\beta_{1},\ldots,\beta_{p})\left[  \int_{0}%
^{\infty}\left(  1-e^{-w_{0}t\sum_{k=1}^{p}x_{k}\beta_{k}}\right)
e^{-w_{0}\sum_{k=1}^{p}\gamma_{k}\beta_{k}}\rho_{0}(dw_{0})\right]
d\beta_{1:p}%
\end{align*}
which gives, using Lemmas \ref{prop:asymptrho2}, \ref{th:lemmagnedin}, and the
dominated convergence theorem%
\[
\psi(tx_{1},\ldots,tx_{p})\overset{t\uparrow\infty}{\sim} t^{\sigma}\Gamma(1-\sigma)\ell(t)\int%
_{(0,\infty)^{p}}\left(  \sum_{k=1}^{p}x_{k}\beta_{k}\right)  ^{\sigma}%
f(\beta_{1},\ldots,\beta_{p})d\beta_{1:p}.
\]

\end{proof}

\begin{lemma}
\label{prop:asymptrho2}If%
\[
\int_{x}^{\infty}\rho(dw)\overset{x\downarrow0}{\sim}x^{-\sigma}\ell(1/x)
\]
Then
\[
\int_{x}^{\infty}e^{-cw}\rho(dw)\overset{x\downarrow0}{\sim}x^{-\sigma}%
\ell(1/x)
\]

\end{lemma}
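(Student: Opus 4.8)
The plan is to reduce the exponentially tilted tail to the untilted tail $\overline{\rho}$ via integration by parts, and then to argue that the correction introduced by the factor $e^{-cw}$ is of strictly lower order as $x\downarrow 0$. Write $\overline{\rho}(x)=\int_x^\infty \rho(dw)$ for the original tail and set $g(x)=\int_x^\infty e^{-cw}\rho(dw)$, the case $c=0$ being trivial so assume $c>0$. Since $\rho(dw)=-d\overline{\rho}(w)$, integration by parts yields
\begin{equation*}
g(x)=e^{-cx}\overline{\rho}(x)-c\int_x^\infty e^{-cw}\overline{\rho}(w)\,dw,
\end{equation*}
where the boundary term at infinity vanishes because $\overline{\rho}$ is non-increasing and bounded by $\overline{\rho}(x)$ on $[x,\infty)$, so that $e^{-cw}\overline{\rho}(w)\le \overline{\rho}(x)e^{-cw}\to 0$.

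First I would handle the boundary term: since $e^{-cx}\to 1$ as $x\downarrow 0$, we have $e^{-cx}\overline{\rho}(x)\sim \overline{\rho}(x)\sim x^{-\sigma}\ell(1/x)$, which diverges to $+\infty$ because $\sigma>0$. It then remains to show that the integral term stays bounded as $x\downarrow 0$, so that it is negligible against $x^{-\sigma}\ell(1/x)$. The key observation is that near the origin $\overline{\rho}(w)\sim w^{-\sigma}\ell(1/w)$, while the factor $e^{-cw}$ forces convergence away from the origin; by Karamata's theorem, $\int_0^1 w^{-\sigma}\ell(1/w)\,dw<\infty$ precisely because $\sigma<1$. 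Hence $\int_0^\infty e^{-cw}\overline{\rho}(w)\,dw$ is a finite constant, and the integral term converges to it as $x\downarrow 0$.

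Combining the two facts, the subtracted term is $O(1)=o\!\left(x^{-\sigma}\ell(1/x)\right)$, so
\begin{equation*}
g(x)=e^{-cx}\overline{\rho}(x)-c\int_x^\infty e^{-cw}\overline{\rho}(w)\,dw = x^{-\sigma}\ell(1/x)\,(1+o(1)),
\end{equation*}
which is exactly the claimed equivalence. As a sanity check, for the generalized gamma base measure \eqref{eq:levyGGP} tilting by $e^{-cw}$ merely replaces $\tau$ by $\tau+c$, which does not affect the leading small-$x$ coefficient, consistent with the conclusion.

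The main technical obstacle is establishing that the integral term is $O(1)$, i.e.\ the integrability of $e^{-cw}\overline{\rho}(w)$ near zero; this is where the hypothesis $\sigma\in(0,1)$ (in force in every application of the lemma) is essential. To make this rigorous I would invoke Potter's bounds to dominate $\overline{\rho}(w)$ by $w^{-\sigma-\delta}$ for small $w$ with $\sigma+\delta<1$, which simultaneously justifies the finiteness of $\int_0^\infty e^{-cw}\overline{\rho}(w)\,dw$ and, by dominated convergence, the passage to the limit $x\downarrow 0$ in the integral term. The integration by parts and the vanishing of the boundary term at infinity are routine by comparison.
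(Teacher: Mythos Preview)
Your proof is correct, but it takes a longer route than the paper's argument. The paper simply writes
\[
\int_x^\infty e^{-cw}\rho(dw)=\int_x^\infty \rho(dw)-\int_x^\infty(1-e^{-cw})\rho(dw),
\]
observes that the second term increases to the finite constant $\int_0^\infty(1-e^{-cw})\rho(dw)$ (the L\'evy integrability condition), and concludes immediately. No integration by parts, no Karamata, no Potter bounds.

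Your approach reaches the same conclusion via integration by parts and then bounds the residual integral $c\int_x^\infty e^{-cw}\overline{\rho}(w)\,dw$ using regular variation with $\sigma<1$. It is worth noting that by Fubini this residual integral equals exactly $\int_x^\infty(1-e^{-cw})\rho(dw)$, so the two arguments are controlling the very same correction term; you just arrive at it through a change of representation. The advantage of the paper's decomposition is that the finiteness of the correction follows from $\rho$ being a L\'evy measure alone, without appealing to the specific exponent $\sigma\in(0,1)$ or to Potter-type estimates. Your route has the minor advantage of being self-contained from the regular-variation hypothesis (it does not invoke the L\'evy condition explicitly), at the cost of heavier machinery.
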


\begin{proof}%
\begin{align*}
\int_{x}^{\infty}e^{-cw}\rho(dw)  &  =\int_{x}^{\infty}\rho(dw)-\int%
_{x}^{\infty}(1-e^{-cw})\rho(dw)\\
&  \overset{x\downarrow0}{\sim}x^{-\sigma}\ell(1/x)
\end{align*}
as $\int_{0}^{\infty}(1-e^{-cw})\rho(dw)<\infty$ for any $c>0$.
\end{proof}

\begin{lemma}
\label{th:lemmagnedin}\citep{Gnedin2007,Bingham1989}. Let $\rho$ be a L\'{e}vy measure with regularly varying tail L\'evy intensity
\begin{equation}
\int_{x}^{\infty}\rho(dw)\overset{x\downarrow0}{\sim
}x^{-\sigma}\ell(1/x) \label{eq:levytailregular}%
\end{equation}
where $\sigma\in(0,1)$ and $\ell$ is a slowly varying function (at infinity). Then
(\ref{eq:levytailregular}) is equivalent to%
\begin{align}
 \int_{0}^{x}w^{k}\rho(dw)&\overset{x\downarrow0}{\sim}\frac{\sigma}%
{k-\sigma}x^{k-\sigma}\ell(1/x)\\
\int_{0}^{\infty}(1-e^{-tw})\rho(dw)  &  \overset{t\uparrow\infty}{\sim}\Gamma(1-\sigma)t^{\sigma}%
\ell(t)
\end{align}
for any $k\geq1.$
\end{lemma}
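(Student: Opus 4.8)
The plan is to reduce both statements to the behaviour of the tail function $\overline{\rho}(x)=\int_x^\infty\rho(dw)$, which is the natural carrier of the regular-variation hypothesis \eqref{eq:levytailregular}, and then to invoke the classical Karamata theory for regularly varying functions \citep{Bingham1989}. Each displayed asymptotic will follow from an integration by parts that rewrites the truncated moment, respectively the Laplace exponent, purely in terms of $\overline{\rho}$, after which a single application of Karamata's theorem (respectively its Abelian form for Laplace transforms) delivers the stated constant. Throughout I use that a slowly varying $\ell$ satisfies $x^{\delta}\ell(1/x)\to 0$ as $x\downarrow 0$ for every $\delta>0$, which guarantees that the boundary terms produced by integration by parts are negligible.

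For the first asymptotic I would write $\int_0^x w^k\rho(dw)=-\int_0^x w^k\,d\overline{\rho}(w)$ and integrate by parts to obtain
\begin{equation*}
\int_0^x w^k\rho(dw)=-x^k\overline{\rho}(x)+k\int_0^x w^{k-1}\overline{\rho}(w)\,dw,
\end{equation*}
the boundary contribution at $0$ vanishing because $w^k\overline{\rho}(w)\sim w^{k-\sigma}\ell(1/w)\to 0$ when $k>\sigma$. Since $w^{k-1}\overline{\rho}(w)\sim w^{k-1-\sigma}\ell(1/w)$ is regularly varying at $0$ of index $k-1-\sigma>-1$, Karamata's theorem gives $\int_0^x w^{k-1}\overline{\rho}(w)\,dw\sim\frac{1}{k-\sigma}x^{k-\sigma}\ell(1/x)$. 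Combining this with $x^k\overline{\rho}(x)\sim x^{k-\sigma}\ell(1/x)$ yields the coefficient $\frac{k}{k-\sigma}-1=\frac{\sigma}{k-\sigma}$, which is exactly the claimed constant.

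For the second asymptotic, integration by parts on $\psi(t)=\int_0^\infty(1-e^{-tw})\rho(dw)$ gives $\psi(t)=t\int_0^\infty e^{-tw}\overline{\rho}(w)\,dw$; here the boundary term at $0$ vanishes because $(1-e^{-tw})\overline{\rho}(w)\sim t\,w^{1-\sigma}\ell(1/w)\to 0$, and the term at $\infty$ vanishes since $\overline{\rho}(w)\to 0$. The remaining integral is the Laplace transform of $\overline{\rho}$, and since $\overline{\rho}(w)\sim w^{-\sigma}\ell(1/w)$ as $w\downarrow 0$ with $0<\sigma<1$, the Abelian half of Karamata's Tauberian theorem \citep[Thm.~1.7.1]{Bingham1989} gives $\int_0^\infty e^{-tw}\overline{\rho}(w)\,dw\sim\Gamma(1-\sigma)\,t^{\sigma-1}\ell(t)$ as $t\to\infty$ (the value $\Gamma(1-\sigma)$ being fixed by the pure-power case $\overline{\rho}(w)=w^{-\sigma}$). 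Multiplying by $t$ produces $\psi(t)\sim\Gamma(1-\sigma)t^{\sigma}\ell(t)$, as required.

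Finally, the asserted equivalence in the reverse direction rests on the monotonicity of $\overline{\rho}$: because $\overline{\rho}$ is nonincreasing, the Tauberian half of Karamata's theorem shows that $\psi(t)\sim\Gamma(1-\sigma)t^\sigma\ell(t)$ forces $\overline{\rho}(x)\sim x^{-\sigma}\ell(1/x)$, and the monotone density theorem likewise recovers \eqref{eq:levytailregular} from the first display. The only delicate points are tracking the slowly varying factor through the substitution $t\leftrightarrow 1/x$ and verifying the index conditions ($k-1-\sigma>-1$ and $0<\sigma<1$) under which Karamata's theorem and the Abelian/Tauberian transfer apply; since these are precisely the hypotheses in force, no genuine obstacle arises beyond careful bookkeeping of the constants.
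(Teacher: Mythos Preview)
Your argument is correct and complete: the integration-by-parts reductions to $\overline{\rho}$ followed by Karamata's theorem (for the truncated moments) and the Abelian/Tauberian Laplace transfer (for the exponent $\psi$) are exactly the classical route, and you have handled the boundary terms and index conditions properly. Note, however, that the paper does not supply a proof of this lemma at all: it is stated as a quotation from \cite{Gnedin2007} and \cite{Bingham1989} and used as a black box in the proofs of Proposition~\ref{th:regularvar} and Theorem~\ref{th:gaussianapprox}. So there is no ``paper's own proof'' to compare against; what you have written is essentially the standard derivation one would find in those references.
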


\bibliographystyle{plainnat}
\bibliography{nnfbnpgraph}

\end{document}